\setlist[itemize]{noitemsep, topsep=0pt}
\newcommand{\nat}{\mathbb{N}}
\newcommand{\integer}{\mathbb{Z}}
\newcommand{\rat}{\mathbb{Q}}
\newcommand{\real}{\mathbb{R}}
\newcommand{\Front}{{\sf Front}} %Front of algorithm
\newcommand{\Val}{{\texttt{VH}}}
\newcommand{\prob}{{\bf Pr}}
\newcommand{\esp}{{\bf E}}
\newcommand{\suiv}{p}
\newenvironment{itemize*}%
  {\begin{itemize}%
    \setlength{\itemsep}{0pt}%
    \setlength{\parskip}{0pt}}%
  {\end{itemize}}
\begin{document}
\title{Introducing Divergence for Infinite Probabilistic Models\thanks{This work of Alain Finkel has been supported by ANR project BRAVAS (ANR-17-CE40-0028). This work of Serge Haddad and Lina Ye has been supported by ANR project MAVeriQ (ANR-20-CE25-0012).}}

\author{Alain Finkel\\
Université Paris-Saclay, CNRS, ENS Paris-Saclay, IUF, LMF \\
91190 Gif-sur-Yvette, France\\
alain.finkel{@}lmf.cnrs.fr
\and Serge Haddad\\
Université Paris-Saclay, CNRS, ENS Paris-Saclay, LMF \\
91190 Gif-sur-Yvette, France\\
serge.haddad{@}lmf.cnrs.fr
\and Lina Ye\\
Université Paris-Saclay, CNRS, ENS Paris-Saclay, CentraleSupélec, LMF \\
91190 Gif-sur-Yvette, France\\
lina.ye{@}lmf.cnrs.fr
}

\maketitle              % typeset the header of the contribution 
\begin{abstract}
Computing the reachability probability in infinite state probabilistic models has been the topic of numerous works.
Here we introduce a new property called \emph{divergence} that when satisfied  allows to compute reachability probabilities up to an arbitrary precision.
One of the main interests of divergence is that this computation does not require the reachability problem, i.e., the possibility to reach target states 
from an initial state in a given model,  to be decidable.
Then we study the decidability of divergence for random walks and the probabilistic versions of Petri nets
where the weights associated with transitions may also depend on the current state.
This should be contrasted with most of the existing works that assume 
weights independent of the state.
Such an extended framework is motivated by the modeling of real case studies.
Moreover, we exhibit some subclasses of  channel systems and pushdown automata that are divergent by construction, 
particularly suited for specifying open distributed systems and networks prone to performance collapsing where probabilities related to service
requirements are needed.  
\end{abstract}
% !TEX root = main.tex
\section{Introduction}
\label{sec:introduction}

\noindent
{\bf Probabilistic models.}
In the 1980's, finite-state Markov chains have been considered for the modeling and analysis 
of probabilistic concurrent finite-state programs~\cite{DBLP:conf/focs/Vardi85}.
\setcounter{footnote}{0} 
Since the 2000's, many works have been done to study the infinite-state Markov chains 
obtained from probabilistic versions 
of automata extended with unbounded data (like stacks, channels, counters and clocks)\footnote{In 1972,  Santos gave the first definition of \emph{probabilistic pushdown automata} \cite{DBLP:journals/iandc/Santos72c} (to the best of our knowledge) 
that surprisingly did not open up a new field of research at that time.}.
The (qualitative and quantitative) model checking of \emph{probabilistic pushdown automata} (pPDA) is studied in many papers~\cite{DBLP:conf/lics/EsparzaKM04,DBLP:conf/lics/EsparzaKM05,DBLP:conf/focs/BrazdilEK05,Esparza06} (see \cite{DBLP:journals/fmsd/BrazdilEKK13} for a survey). In 1997, Iyer and Narasimha \cite{DBLP:conf/tapsoft/IyerN97} started the analysis of \emph{probabilistic lossy channel systems} (pLCS) and later both some qualitative and quantitative properties were shown 
decidable for pLCS \cite{DBLP:journals/iandc/AbdullaBRS05}.
\emph{Probabilistic counter machines} (pCM) have also been investigated~\cite{DBLP:conf/csl/BrazdilKKNK14,DBLP:journals/jacm/BrazdilKK14,DBLP:conf/cav/BrazdilKK11}.

\smallskip\noindent
{\bf Computing the reachability probability.}
In finite Markov chains, there is a well-known algorithm for computing  the reachability probabilities in
 polynomial time~\cite{ModelChecking08}. Here we focus on the problem of \emph{Computing the Reachability Probability up to an arbitrary precision} (CRP) in \emph{infinite} Markov chains. 
 To do so, there are (at least) two possible research directions. 

The first one is to consider the Markov chains associated with a particular class of probabilistic models (like pPDA or probabilistic Petri nets (pPN))
and some specific target sets 
and to exploit the properties of these models to design a CRP-algorithm.
For instance in~\cite{DBLP:journals/fmsd/BrazdilEKK13}, the authors 
exhibit a PSPACE algorithm for pPDA and PTIME algorithms for single-state pPDA and for one-counter automata.

The second one consists in exhibiting a property of Markov chains that yields a generic algorithm
for solving the CRP problem and then looking for models that generate Markov chains that fulfill this property.
\emph{Decisiveness} of Markov chains is such a property. In words, decisiveness w.r.t. $s_0$ and $A$ means that almost surely the random path $\sigma$ starting from $s_0$ will reach $A$ 
or some state $s'$ from which $A$ is unreachable. It has been shown that pPDA are not (in general) decisive but both pLCS and probabilistic Petri nets (pPN) are decisive (for pPN: when the target set is upward-closed~\cite{AbdullaHM07}). 

\smallskip\noindent
{\bf Two limits of the previous approaches.}
The generic approach based on decisiveness property has numerous applications but suffers from the restriction that the reachability problem must be decidable
in the corresponding non deterministic model.
To the best of our knowledge, all generic approaches rely on a \emph{decidable reachability problem}. 

In most of the works, the probabilistic models associate with  transitions a \emph{constant} weight 
and get transition probabilities by normalizing these weights among the enabled transitions in the current state. 
This \emph{forbids to model phenomena} like congestion in networks (resp. performance collapsing in distributed systems) 
when the number of messages (resp. processes) exceeds some threshold leading to an increasing probability of message arrivals (resp. process creations)
before message departures (resp. process terminations).

\smallskip\noindent
{\bf Our contributions.}
\begin{itemize}
	\item  In order to handle realistic phenomena (like congestion in networks), one considers
	\emph{dynamic} weights i.e., weights depending on the current state.
	\item We introduce the new \emph{divergence} property of  Markov chains w.r.t. $s_0$ and $A$:  given some precision $\theta$, one can discard a set of states
	with either a small probability to be reached from $s_0$ or a small probability to reach $A$ such that the remaining subset of states is finite
	and thus allows for an approximate computation of the reachability probability up to $\theta$. For divergent Markov chains, we provide a generic algorithm for the CRP-problem
	that \emph{does not require} the decidability of the reachability problem. Furthermore, another generic algorithm is also proposed when the reachability problem is decidable. 
	While decisiveness and divergence are not exclusive (both hold for finite Markov chains), 
	they are in some sense complementary. 
	Indeed, divergence is somehow related to transience of Markov chains while decisiveness is somehow related to recurrence \cite{Alain23}. 
	\item We study the decidability of divergence for different models. Our first undecidability result implies that whatever the infinite models, 
	one must restrict the kind of dynamics weights. Here we limit them to polynomial weights.
	\item We prove, by case analysis, that divergence is decidable for a subclass of random walks (i.e. random walks with \emph{polynomial} weights). 
	Furthermore, we show that divergence is undecidable for polynomial pPNs  w.r.t. an upward closed set.
	\item	In order to check divergence, we provide several simpler sufficient conditions based on 
	martingale theory. 
	\item We provide two classes of divergent polynomial models. 
	The first one is a probabilistic version of channel systems 
	particularly suited for the modeling of open queuing networks. 
	The second one is the probabilistic version of pushdown automata restricted to some typical behaviors of dynamic systems.
\end{itemize}

\smallskip\noindent
{\bf Organisation.} Section~\ref{sec:divergence} recalls Markov chains, introduces  divergent Markov chains, 
and presents two algorithms for solving the CRP-problem.  
In Section~\ref{sec:rw} and Section~\ref{sec:pPN}, we study the decidability status of divergence for random walks and pPNs. 
Finally Section~\ref{sec:case} presents two divergent subclasses of  
probabilistic channel systems  and pPDA. Finally we conclude and give  perspectives to this work in Section~\ref{sec:conclusion}.
Some technical proofs can be found in Appendix.

% !TEX root = main.tex
\section{Divergence of  Markov chains}
\label{sec:divergence}

\subsection{Markov chains: definitions and properties}

\noindent
{\bf Notations.} 
A set $S$ is \emph{countable} if there exists an injective function from $S$ 
to  the set of natural numbers: hence it could be finite or countably infinite.
Let $S$ be a countable set of elements called states.
Then $Dist(S) = \{\Delta : S \rightarrow \real_{\geq 0} \mid \sum_{s\in S}\Delta(s)=1\}$  is the set of \emph{distributions} over $S$.
Let $\Delta \in Dist(S)$, then $Supp(\Delta)=\Delta^{-1}(\real_{>0})$.
Let $T\subseteq S$, then  $S\setminus T$ will also be denoted $\overline{T}$.

\begin{definition}[Effective Markov chain]
A \emph{Markov chain} $\mathcal M=(S,\suiv)$ is a tuple where:
\begin{itemize}  
	\item  $S$ is a countable set of states,
	\item $\suiv$ is the transition function from $S$ to $Dist(S)$;
\end{itemize}
When for all $s\in S$, (1) $Supp(\suiv(s))$ is finite and computable and (2) the function $s \mapsto \suiv(s)$ is computable,
one says that  $\mathcal M$ is \emph{effective}.
\end{definition}

\noindent
{\bf Notations.} $\suiv^{(d)}$ is the $d^{th}$ power of the transition matrix $\suiv$.
When $S$ is countably infinite, we say that $\mathcal M$ is \emph{infinite} and we sometimes identify $S$ with $\nat$.
We also denote $\suiv(s)(s')$ by $\suiv(s,s')$
and $\suiv(s,s')>0$ by
$s\xrightarrow{\suiv(s,s')} s'$.  A Markov chain is also viewed as a transition system
whose transition relation $\rightarrow$ is defined by $s\rightarrow s'$ if $\suiv(s,s')>0$. 
Let $A \subseteq S$, one denotes $Post_{\mathcal M}^*(A)$, the set of states
that can  be reached from some state of $A$ and $Pre_{\mathcal M}^*(A)$, the set of states
that can reach $A$. As usual, we denote $\rightarrow^*$,
the transitive closure of $\rightarrow$ and we say that
$s'$ is  \emph{reachable from $s$} if  $s \rightarrow^* s'$. 
We say that a subset $A \subseteq S$ is \emph{reachable} from $s$ if some $s'\in A$ is reachable from $s$.
Note that every finite path of 
$\mathcal M$ can be extended into (at least) one infinite path.

\begin{example}
Let $\mathcal M_1$ be the Markov chain of Figure~\ref{fig:rw}.
In any state $i>0$, the probability for going to the ``right'', $p(i,i+1)$, is equal to $0<p_i<1$ and  for going to the ``left''  $p(i,i-1)$
is equal to $1-p_i$.
In state $0$, one goes to  $1$ with probability 1. $\mathcal M_1$ is effective if the function $n\mapsto p_n$
is computable.
\end{example}

\begin{figure}
\begin{center}
  \begin{tikzpicture}[node distance=2cm,->,auto,-latex,scale=0.9]
    
   \path (0,0) node[minimum size=0.6cm,draw,circle,inner sep=2pt] (q0) {$0$};
   \path (2,0) node[minimum size=0.6cm,draw,circle,inner sep=2pt] (q1) {$1$};
   \path (4,0) node[minimum size=0.6cm,draw,circle,inner sep=2pt] (q2) {$2$};
   \path (6,0) node[minimum size=0.6cm,draw,circle,inner sep=2pt] (q3) {$3$};
   \path (8,0) node[] (q4) {$\cdots$};

    \draw[arrows=-latex'] (q0)-- (1,0.5) node[pos=1,above] {$1$}--(q1) ;
    \draw[arrows=-latex'] (q1)-- (1,-0.5) node[pos=1,below] {$1-p_1$}--(q0) ;

    \draw[arrows=-latex'] (q1)-- (3,0.5) node[pos=1,above] {$p_1$}--(q2) ;
    \draw[arrows=-latex'] (q2)-- (3,-0.5) node[pos=1,below] {$1-p_2$}--(q1) ;

    \draw[arrows=-latex'] (q2)-- (5,0.5) node[pos=1,above] {$p_2$}--(q3) ;
    \draw[arrows=-latex'] (q3)-- (5,-0.5) node[pos=1,below] {$1-p_3$}--(q2) ;

    \draw[arrows=-latex'] (q3)-- (7,0.5) node[pos=1,above] {$p_3$}--(q4) ;
    \draw[arrows=-latex'] (q4)-- (7,-0.5) node[pos=1,below] {$1-p_4$}--(q3) ;

  \end{tikzpicture}
\end{center}
\caption{A random walk $\mathcal M_1$}
\label{fig:rw}
\end{figure}

\smallskip
Given an initial state $s_0$, the \emph{sampling} of a Markov chain $\mathcal M$ is an \emph{infinite
random sequence of states} (i.e., a path) $\sigma=s_0s_1\ldots$ such that for all $i\geq 0$,
$s_i\rightarrow s_{i+1}$. As usual,  the corresponding $\sigma$-algebra whose items are called events
is generated by the finite prefixes of infinite paths and the probability of an event $Ev$ given an initial state $s_0$ is denoted $\prob_{\mathcal M,s_0}(Ev)$. In case of a finite path $s_0\ldots s_n$,
$\prob_{\mathcal M,s_0}(s_0\ldots s_n)=\prod_{0 \leq i<n} \suiv(s_i,s_{i+1})$.

\smallskip
\noindent {\bf Notations.}
From now on,  {\bf G} (resp. {\bf F}, {\bf X}) denotes the always (resp. eventual, next)
operator of LTL.   

\smallskip
Let $A\subseteq S$ and $\sigma$ be an infinite path. We say that $\sigma$ 
\emph{reaches} $A$ if $\exists i\in \nat\ s_i\in A$ and that  $\sigma$ 
\emph{visits} $A$ if $\exists i>0\ s_i\in A$. 
The probability that starting from $s_0$, the path $\sigma$ reaches (resp. visits) $A$ will be denoted
by $\prob_{\mathcal M,s_0}({\bf F}  A)$ (resp. $\prob_{\mathcal M,s_0}({\bf XF}  A)$).

\smallskip
We now state qualitative and quantitative properties of a Markov chain.
\begin{definition}[Irreducibility, recurrence, transience] Let $\mathcal M=(S,\suiv)$ be a Markov chain and  $s \in S$. Then
 $\mathcal M$ is \emph{irreducible} if for all $s,s'\in S$, $s\rightarrow^* s'$.
$s$ is \emph{recurrent} if $\prob_{\mathcal M,s}({\bf XF}  \{s\})=1$ otherwise $s$ is \emph{transient}.

\end{definition}

In an irreducible Markov chain, all states are in the same category, either recurrent or transient~\cite{KSK76}. 
Thus an irreducible Markov chain will be said transient or 
recurrent depending on the category of its states. 
In the remainder of this section,
we will relate this category with techniques for computing reachability
probabilities.

\begin{example}
$\mathcal M_1$ is clearly irreducible. $\mathcal M_1$ is recurrent if and only if  
$\sum_{n\in \nat} \prod_{1\leq m\leq n} \rho_m =\infty$ with $\rho_m=\frac{1-p_m}{p_m}$, and when transient, the probability that starting from $i$ the random path visits
$0$ is equal to  $\frac{\sum_{i\leq n} \prod_{1\leq m\leq n} \rho_m}{\sum_{n\in \nat} \prod_{1\leq m\leq n} \rho_m}$ (see~\cite{norris97} for more details). 

\end{example}

One of our goals is to approximately compute reachability probabilities in infinite Markov 
chains. 
Let us formalize it. Given a finite representation of a subset $A\subseteq S$,
one says that this representation is \emph{effective} if one can decide the membership problem for $A$. 
With a slight abuse of language, we
identify $A$ with any effective representation of $A$.

\smallskip
\centerline{ {\bf \small{The Computing of Reachability Probability  (CRP) problem}} }
\begin{center}
\fbox{
\begin{minipage}{0.90\textwidth}

\noindent
$\bullet$ Input: an effective Markov chain $\mathcal M$, an (initial) state $s_0$, 
an effective subset of states $A$, and a rational $\theta>0$.

\noindent
$\bullet$ 
 Output: an interval $[low,up]$ such that $up-low\leq \theta$ and $\prob_{\mathcal M,s_0}({\bf F}  A) \in [low,up]$.
\end{minipage}
}
\end{center}

\vspace{3mm}
\subsection{Divergent Markov chains}
 
Let us first discuss two examples before introducing the notion of \emph{divergent} Markov chains. 

\begin{example} 
\label{example:twochains}Consider again the Markov chain $\mathcal M_1$ of Figure~\ref{fig:rw} with for all $n>0$,
$p_n=p > \frac 1 2$. %Now let $s_0=1$ and $A=\{0\}$, and $\mathcal M_1$ is not decisive w.r.t. $s_0$ and $A$. 
In this case, for $m\geq 0$,
$\prob_{\mathcal M_1,m}({\bf F} \{0\})=\rho^m$ with $\rho=\frac {1-p}{p}$. Thus here the key point is
that not only this reachability probability is less than 1 but it goes to 0 when $m$ goes to $\infty$. 
This means that given some precision $\theta$, one could ``prune''  states $n\geq n_0$
and compute the reachability probabilities of $A$ in a finite Markov chain.

\noindent
Consider the Markov chain  $\mathcal M_2$ of Figure~\ref{fig:divrw}. If $\prob_{\mathcal M,0}({\bf F} \{m,m+1,\ldots\})=\prod_{n< m}p_n$
goes to $0$ when $m$ goes to $\infty$, then given some precision $\theta$, one could ``prune''  states $n\geq n_0$
and compute the reachability probabilities of $A$ in a finite Markov chain.
\end{example}

\begin{figure}
\begin{center}
  \begin{tikzpicture}[node distance=2cm,->,auto,-latex,scale=0.9]
  
  \draw (-0.5,-1.5) -- (8.5,-1.5) -- (8.5,-2.3) -- (-0.5,-2.3) -- cycle;
    
   \path (0,-0.5) node[minimum size=0.6cm,draw,circle,inner sep=2pt] (q0) {$0$};
   \path (2,-0.5) node[minimum size=0.6cm,draw,circle,inner sep=2pt] (q1) {$1$};
   \path (4,-0.5) node[minimum size=0.6cm,draw,circle,inner sep=2pt] (q2) {$2$};
   \path (6,-0.5) node[minimum size=0.6cm,draw,circle,inner sep=2pt] (q3) {$3$};
   \path (8,-0.5) node[] (q4) {$\cdots$};

  \path (4,-2) node[] (q5) {A finite Markov chain containing $A$};

   %\path (9,-1) node[] (q6) {$\prod_{n\in \nat}p_n=0$};

    \draw[arrows=-latex'] (q0) --(q1)node[pos=0.5,above] {$p_0$} ;
    \draw[arrows=-latex'] (q0) --(0,-1.8)node[pos=0.3,right] {$1-p_0$} ;
    
    \draw[arrows=-latex'] (q1) --(2,-1.8) node[pos=0.3,right] {$1-p_1$};
    \draw[arrows=-latex'] (q1) --(q2) node[pos=0.5,above] {$p_1$};
    
    \draw[arrows=-latex'] (q2)-- (4,-1.8)  node[pos=0.3,right] {$1-p_2$};
    \draw[arrows=-latex'] (q2) --(q3) node[pos=0.5,above] {$p_2$} ;

    \draw[arrows=-latex'] (q3)-- (6,-1.8) node[pos=0.3,right] {$1-p_3$};
    \draw[arrows=-latex'] (q3)-- (q4) node[pos=0.5,above] {$p_3$};

  \end{tikzpicture}
\end{center}
\caption{An infinite Markov chain $\mathcal M_2$}
\label{fig:divrw}
\end{figure}

In words, a divergent Markov chain w.r.t. $s_0$ and $A$ generalizes these examples: given some precision $\theta$, one can discard a set of states
with either a small probability to be reached from $s_0$ or a small probability to reach $A$ such that the remaining subset of states is finite
and thus allows for an approximate computation of the reachability probability up to $\theta$.

\begin{definition}[divergent Markov chain]
\label{definition:divergent}
Let $\mathcal M$ be a Markov chain, $s_0\in S$ 
and $A\subseteq S$. 
We say that $\mathcal M$ 
is \emph{divergent} w.r.t. $s_0$ and $A$ if there exist two computable functions $f_0$ and $f_1$
from $S$ to $\real_{\geq 0}$ such that:
\begin{itemize}
	\item For all $0<\theta<1$, $\prob_{\mathcal M,s_0}(\mathbf{F} f_0^{-1}([0,\theta]))\leq \theta$;
	\item For all $s\in S$, $\prob_{\mathcal M,s}(\mathbf{F} A)\leq f_1(s)$;
	\item For all $0<\theta<1$, $\{s \mid f_0(s) \geq \theta \wedge  f_1(s) \geq \theta\} \cap Post_{\mathcal M}^*(\{s_0\})$ is finite.
\end{itemize}
\end{definition}

\begin{figure}[h]
\begin{center}
\tikzset{every picture/.style={line width=0.75pt}} %set default line width to 0.75pt        

\begin{tikzpicture}[x=0.75pt,y=0.75pt,yscale=-0.9,xscale=0.9]

%Flowchart: Terminator [id:dp2757425576473471] 
\draw  [fill={rgb, 255:red, 246; green, 235; blue, 235 }  ,fill opacity=1 ] (202.44,55) -- (429.56,55) .. controls (459.07,55) and (483,96.19) .. (483,147) .. controls (483,197.81) and (459.07,239) .. (429.56,239) -- (202.44,239) .. controls (172.93,239) and (149,197.81) .. (149,147) .. controls (149,96.19) and (172.93,55) .. (202.44,55) -- cycle ;
%Straight Lines [id:da4838206496766879] 
\draw [line width=3.75]  [dash pattern={on 4.22pt off 3.52pt}]  (197,99) -- (399,99) ;
\draw [shift={(409,99)}, rotate = 180] [fill={rgb, 255:red, 0; green, 0; blue, 0 }  ][line width=0.08]  [draw opacity=0] (27.6,-6.9) -- (0,0) -- (27.6,6.9) -- cycle    ;
%Straight Lines [id:da552708288904552] 
\draw  [dash pattern={on 0.84pt off 2.51pt}]  (405,165) -- (298.86,199.08) ;
\draw [shift={(296,200)}, rotate = 342.2] [fill={rgb, 255:red, 0; green, 0; blue, 0 }  ][line width=0.08]  [draw opacity=0] (8.93,-4.29) -- (0,0) -- (8.93,4.29) -- cycle    ;
%Straight Lines [id:da7489160757034967] 
\draw  [dash pattern={on 0.84pt off 2.51pt}]  (404,180) -- (310.85,211.05) ;
\draw [shift={(308,212)}, rotate = 341.57] [fill={rgb, 255:red, 0; green, 0; blue, 0 }  ][line width=0.08]  [draw opacity=0] (8.93,-4.29) -- (0,0) -- (8.93,4.29) -- cycle    ;
%Straight Lines [id:da07480500912889076] 
\draw  [dash pattern={on 0.84pt off 2.51pt}]  (411,193) -- (315.86,223.1) ;
\draw [shift={(313,224)}, rotate = 342.45] [fill={rgb, 255:red, 0; green, 0; blue, 0 }  ][line width=0.08]  [draw opacity=0] (8.93,-4.29) -- (0,0) -- (8.93,4.29) -- cycle    ;
%Shape: Circle [id:dp6253352515214419] 
\draw  (263,224) .. controls (263,210.19) and (274.19,199) .. (288,199) .. controls (301.81,199) and (313,210.19) .. (313,224) .. controls (313,237.81) and (301.81,249) .. (288,249) .. controls (274.19,249) and (263,237.81) .. (263,224) -- cycle ;
%Shape: Circle [id:dp1909985802431522] 
\draw  [fill={rgb, 255:red, 184; green, 233; blue, 134 }  ,fill opacity=0.5 ] (414,109) .. controls (414,84.7) and (433.7,65) .. (458,65) .. controls (482.3,65) and (502,84.7) .. (502,109) .. controls (502,133.3) and (482.3,153) .. (458,153) .. controls (433.7,153) and (414,133.3) .. (414,109) -- cycle ;
%Shape: Circle [id:dp338959197564751] 
\draw  [fill={rgb, 255:red, 80; green, 227; blue, 194 }  ,fill opacity=0.5 ] (406,170) .. controls (406,145.7) and (425.7,126) .. (450,126) .. controls (474.3,126) and (494,145.7) .. (494,170) .. controls (494,194.3) and (474.3,214) .. (450,214) .. controls (425.7,214) and (406,194.3) .. (406,170) -- cycle ;
%Shape: Circle [id:dp8376510465715994] 
\draw   (173,99) .. controls (173,92.37) and (178.37,87) .. (185,87) .. controls (191.63,87) and (197,92.37) .. (197,99) .. controls (197,105.63) and (191.63,111) .. (185,111) .. controls (178.37,111) and (173,105.63) .. (173,99) -- cycle ;

% Text Node
\draw (176,93) node [anchor=north west][inner sep=0.75pt]   [font=\footnotesize] {$s_{0}$};
% Text Node
\draw (420,99.4) node [anchor=north west][inner sep=0.75pt]  [font=\tiny]  {$\{s\mid f_{0}( s) < \theta )\}$};
% Text Node
\draw (412,164.4) node [anchor=north west][inner sep=0.75pt]  [font=\tiny]  {$\{s\mid f_{1}( s) < \theta )\}$};
% Text Node
\draw (279,216.4) node [anchor=north west][inner sep=0.75pt]   [font=\footnotesize] {$A$};
% Text Node
\draw (150,140) node [anchor=north west][inner sep=0.75pt]    {\tiny{$|\{s \mid f_0(s)  \geq  \theta \wedge  f_1(s)  \geq \theta\} \cap Post_{\mathcal M}^*(s_0)|<\infty$}};
% Text Node
\draw (90,60) node [anchor=north west][inner sep=0.75pt]    [font=\footnotesize] {$Post_{\mathcal M}^*(s_0)$};
% Text Node
\draw (236,78.3) node [anchor=north west][inner sep=0.75pt]  [font=\tiny]  {$Pr_{\mathcal{M} ,s_{0}}\left(Ff_{0}^{-1}([0,\ \theta])\right) \leq \theta$};
% Text Node
\draw (344,218.4) node [anchor=north west][inner sep=0.75pt]  [font=\tiny]  {$Pr_{\mathcal{M} ,s}( FA) < \theta$};

\end{tikzpicture}
\caption{Illustration of divergence: the subset of $Post_{\mathcal M}^*(s_0)$ with $f_0(s) \ge \theta$ and $f_1(s) \ge \theta$ is finite.}
 \label{fig:diver}

\end{center}
\end{figure}

\noindent
{\bf Observation and illustration.} Let us remark that there cannot exist, for general Markov chains, an algorithm to decide the existence of such functions $f_0,f_1$ and when they exist, to compute them. 
But as shown later on there exist some simple sufficient conditions for divergence. Figure~\ref{fig:diver} illustrates the notion of divergence: for any $\theta$, the set of reachable states from $s_0$ after discarding the ones related to the conditions specified by $f_0$ and $f_1$ is finite. Intuitively, states with $f_0(s) \le \theta$ can be 
pruned because they are unlikely to be 
reached; states with $f_1(s) \le \theta$ can be  
discarded because they are unlikely to reach A. 
Divergence can be considered as the complement of decisiveness in some particular cases, such as simple random walks.

A finite Markov chain is divergent (letting $f_0=f_1=1$) w.r.t. any $s_0$ and any $A$.
In the first Markov chain of Example~\ref{example:twochains}, $f_0=1$ and $f_1(m)=\rho^m$
and in the second Markov chain, $f_1=1$, $f_0(m)=\prod_{0\leq n<m}p_n$
and $f_0(s)=1$ for all $s$ in the finite Markov chain containing $A$.
Generalizing these two examples, the next proposition introduces a sufficient condition
for divergence. The intuition is very simple: one can resolve CRP problem if the remaining state space is finite by discarding either a set of states
with a small probability to be reached from $s_0$ or a set of states with a small probability to reach $A$. 
Its proof is immediate by choosing ($f=f_0$ and $f_1=1$) or ($f=f_1$ and $f_0=1$).
\begin{proposition}
\label{proposition:f01}
Let $\mathcal M$ be a Markov chain, $s_0\in S$, 
$A\subseteq S$, 
and a computable function $f$
from $S$ to $\real_{\geq 0}$ such that:
\begin{itemize}
	\item For all $0<\theta<1$, $\prob_{\mathcal M,s_0}(\mathbf{F} f^{-1}([0,\theta]))\leq \theta$
	or for all $s\in S$, $\prob_{\mathcal M,s}(\mathbf{F} A)\leq f(s)$;
	\item For all $0<\theta<1$, $\{s \mid f(s) \geq \theta\} \cap Post_{\mathcal M}^*(\{s_0\})$ is finite.
\end{itemize}
Then $\mathcal M$  is divergent w.r.t. $s_0$ and $A$. 
\end{proposition}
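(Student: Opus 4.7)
The plan is to reduce Proposition~\ref{proposition:f01} directly to Definition~\ref{definition:divergent} by splitting on the stated disjunction and, in each case, supplying a trivial constant for the ``unused'' of the two functions $f_0,f_1$. Concretely, the single computable function $f$ will play the role of whichever of $f_0,f_1$ is substantively bounded by the hypothesis, and the other will be taken to be the constant function $1$. Since the constant $1$ is computable and strictly exceeds any threshold $\theta\in(0,1)$, it will vacuously satisfy the probability condition it is paired with and will not contribute to pruning the state set.

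First, I would handle the case in which $\prob_{\mathcal M,s_0}(\mathbf{F}\, f^{-1}([0,\theta]))\leq \theta$ for all $\theta\in(0,1)$. Set $f_0:=f$ and $f_1\equiv 1$. The first clause of Definition~\ref{definition:divergent} is then precisely the hypothesis; the second clause $\prob_{\mathcal M,s}(\mathbf{F}\, A)\leq 1=f_1(s)$ is trivial; and since $\theta<1$, the conjunct $f_1(s)\geq\theta$ holds for every state, so the set in the third clause simplifies to $\{s\mid f(s)\geq\theta\}\cap Post_{\mathcal M}^*(\{s_0\})$, which is finite by assumption. The symmetric case takes $f_1:=f$ and $f_0\equiv 1$: now $f_0^{-1}([0,\theta])=\emptyset$ for $\theta<1$, so the first clause is satisfied with probability $0\leq\theta$; the second clause is the hypothesis; and the third clause reduces in exactly the same way.

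I do not anticipate a genuine obstacle: the argument is essentially a two-line case analysis. The only subtlety worth flagging is the reliance on the strict inequality $\theta<1$ to make the constant $1$ dominate $\theta$, which is what guarantees that the ``dummy'' component neither restricts the set appearing in the third clause nor falsifies the probabilistic clause attached to it. Computability of $f_0,f_1$ is immediate since $f$ is computable by assumption and constant maps are trivially computable.
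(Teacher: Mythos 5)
Your proof is correct and follows exactly the paper's approach: the paper dispatches this proposition in one line by "choosing ($f=f_0$ and $f_1=1$) or ($f=f_1$ and $f_0=1$)," which is precisely your case split with the constant function $1$ as the dummy component. Your write-up merely spells out the routine verification of the three clauses of Definition~\ref{definition:divergent} that the paper leaves implicit.
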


\noindent
{\bf Decisiveness vs. divergence.} Interestingly  in the random walk $\mathcal M_1$,  
decisiveness and divergence are complementary.
%\begin{example} 
Let $p_n=p > \frac 1 2$ for all $n>0$ and $s_0=1$ and $A=\{0\}$.
Then $\mathcal M_1$ is not decisive but divergent w.r.t. $s_0$ and $A$. On the other hand, if $p_n=p \le \frac 1 2$ for all $n>0$, 
then $\mathcal M_1$ becomes decisive but not divergent w.r.t. $s_0$ and $A$. However, this property may not hold. 
Consider the Markov chain $\mathcal M_2$ of Figure~\ref{fig:divrw}, if $\prod_{n\in \nat}p_n=0$, then it is both decisive and divergent. 
Otherwise $\mathcal M_2$ becomes non decisive but still divergent as the remaining states after pruning 
the set of states whose probability to reach $A$ is arbitrarily small  is finite. 
%One can also construct a Markov chain  
%that is neither decisive nor divergent, as shown by Example~\ref{exm:nondecdiv}.
%\end{example} 
%\begin{example} 
%\label{exm:nondecdiv}
%Let $\mathcal M_3=(S,\suiv)$ be a Markov chain defined as follows:
%\begin{itemize}
%\item $S=\{s_0, s_A\}\cup \{s_{i,j} | i\geq 1, j\geq 1\}$
%\item the set of transitions:
%\begin{itemize}
%\item $s_0\xrightarrow{1/2^k} s_{k, 1}$ for all $k\geq 1$
%\item $s_{i, n}\xrightarrow{p_n} s_{i, n+1}$ for all $n\geq 1$
%\item $s_{i, n}\xrightarrow{1-p_n} s_{A}$ for all $n\geq 1$
%\end{itemize}
%\end{itemize}
%$\mathcal M_3$ is obtained by extending $\mathcal M_2$ to have an infinite number of branches. 
%With these transition probabilities, if $\prod_{n\in \nat}p_n>0$, 
%then $\mathcal M_3$ is neither decisive nor divergent with respect to $s_0$ and $s_A$, as there does not exist a function 
%$f$ that satisfies Proposition~\ref{proposition:f01}.
%\end{example}
%Table~\ref{tab:decdiv} shows the different examples in terms of decisiveness and divergence. 
%\begin{center}
%\begin{table}
%\caption{decisiveness vs. divergence}
%\centering
%\begin{tabular}{ |p{3cm}||p{3cm}|p{3cm}|  }
% \hline
%& divergent &non-divergent\\
% \hline
%decisive   &  $\mathcal M_2$: $\prod_{n\in \nat}p_n=0$    & $\mathcal M_1$:  $p_n=p \le \frac 1 2$ \\
% \hline
%non-decisive&   $\mathcal M_2$: $\prod_{n\in \nat}p_n>0$  & $\mathcal M_3$: $\prod_{n\in \nat}p_n>0$ \\
% \hline
%\end{tabular}
%\label{tab:decdiv}
%\end{table}
%\end{center}

\subsection{Two algorithms for divergent Markov chains}

We now design two algorithms for accurately bounding the reachability probability for a
divergent (effective) Markov chain w.r.t. $s_0$ and an effective $A$, based on whether 
the reachability problem is decidable or not. We do not discuss their complexity, which depends both
on the functions $f_0$ and $f_1$ and on the models to be studied.

 \begin{algorithm2e}
  \LinesNumbered
  \DontPrintSemicolon
  \SetKwFunction{CompProb}{CompProb}
  \SetKwFunction{CompFinProb}{CompFinProb}
 \SetKwFunction{PosReach}{PosReach}
  \SetKwFunction{Insert}{Insert}
  \SetKwFunction{Extract}{Extract}

{\CompProb}$(\mathcal M,s_0,A,\theta)$\;

$AlmostLose_0 \leftarrow \emptyset$;  $AlmostLose_1 \leftarrow \emptyset$; $S'\leftarrow \emptyset$\;
$A'\leftarrow \emptyset$; $\Front \leftarrow \emptyset$; $\Insert(\Front,s_0)$\;
 \While{$\Front\neq \emptyset$}
 { 
    $s \leftarrow \Extract(\Front)$;
      $S'\leftarrow S' \cup \{s\}$\; 
       \lIf {$f_0(s)\leq \frac \theta 2$}{$AlmostLose_0\leftarrow AlmostLose_0 \cup \{s\}$}  
       \lElseIf {$f_1(s)\leq \frac \theta 2$}{$AlmostLose_1\leftarrow AlmostLose_1 \cup \{s\}$}  
       \lElseIf{$s\in A$}  {$A'\leftarrow A' \cup \{s\}$}
       \lElse 
       {\lFor {$s\rightarrow s' \wedge s' \notin S'$} {$\Insert(\Front,s')$}}
       \vspace*{-0.5cm}
 }   
  \lIf {$A'=\emptyset$} {\Return$(0,\theta)$}
  $Abs\leftarrow AlmostLose_0 \cup AlmostLose_1\cup  A'$\;
  \lFor {$s\in Abs$} {$p'(s,s)\leftarrow 1$}
  \lFor {$s\in S' \setminus Abs \wedge s'\in S'$} {$p'(s,s')\leftarrow p(s,s')$}
 $preach \leftarrow {\CompFinProb}(\mathcal M',s_0)$\;
\Return{\scalebox{0.85}{$(preach(A'),preach(A')+preach(AlmostLose_0)+\frac \theta 2\cdot preach(AlmostLose_1))$}}

\caption{Bounding the reachability probability}
 \label{algo:prob-reach-divbis}
 \end{algorithm2e}

Let us describe the first algorithm that does not require the decidability of reachability problem. 
It performs an exploration of reachable states from $s_0$ maintaining $S'$, the set
of visited states, and stopping an exploration when 
the current state $s$ fulfills: 
either (1) for some $i\in \{0,1\}$, $f_i(s)\leq \frac \theta 2$ in which case $s$ is inserted in the $AlmostLose_i$ set (initially empty), 
or (2) $s\in A$
in which case $s$ is inserted in $A'$ (initially empty).
When the exploration is ended, if $A'$ is empty, the algorithm returns the interval
$[0,\theta]$. Otherwise it builds $\mathcal M'=(S',\suiv')$ a finite Markov chain over $S'$ whose transition probabilities
are the ones of $\mathcal M$ except for the states of  $AlmostLose_0 \cup AlmostLose_1\cup  A'$, 
 which are made absorbing.
Finally it computes the vector of reachability probabilities starting from $s_0$ in $\mathcal M'$ (function \CompFinProb) and returns
 the interval $[preach(A'),preach(A')+preach(AlmostLose_0)+\frac \theta 2\cdot preach(AlmostLose_1)]$.
 The next proposition establishes the correctness of the algorithm.

\begin{restatable}{proposition}{algocorrectdivbis}
\label{prop:prob-reach-divbis} 
Let  $\mathcal M$ be a divergent Markov chain with $s_0\in S$, $A\subseteq S$ and $\theta>0$.
Then Algorithm~\ref{algo:prob-reach-divbis} solves the CRP problem. 
\end{restatable}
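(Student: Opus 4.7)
The plan is to prove two things: (i) Algorithm~\ref{algo:prob-reach-divbis} terminates on any divergent input, and (ii) the returned interval $[low,up]$ contains $\prob_{\mathcal M,s_0}(\mathbf{F} A)$ and satisfies $up - low \le \theta$.

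For (i), I will observe that the algorithm expands a state $s$ only when $f_0(s) > \theta/2$, $f_1(s) > \theta/2$ and $s \notin A$. Hence the set of expanded states is contained in $\{s \mid f_0(s) \ge \theta/2 \wedge f_1(s) \ge \theta/2\} \cap Post_{\mathcal M}^*(\{s_0\})$, which is finite by condition (3) of Definition~\ref{definition:divergent} applied with precision $\theta/2$. Since $\mathcal M$ is effective, each expanded state inserts only finitely many successors in the frontier, so the while-loop terminates with a finite $S'$.

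For (ii), the key observation is that $\mathcal M'$ coincides with $\mathcal M$ on the transitions out of states of $S' \setminus Abs$, where $Abs := AlmostLoose_0 \cup AlmostLoose_1 \cup A'$, and every state of $S'\setminus Abs$ has all its $\mathcal M$-successors in $S'$. Therefore, if $\tau$ denotes the first time an $\mathcal M$-path from $s_0$ enters $Abs$, then for each $s \in Abs$ the probability that the path first enters $Abs$ at $s$ coincides with the reachability probability $preach(s)$ computed in the finite chain $\mathcal M'$. Thanks to the prioritised classification in lines~6--8, every $A$-state in $S'$ belongs to $Abs$; hence a path that avoids $Abs$ forever never visits $A$. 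Using $A' \subseteq A$, this yields the decomposition
\begin{align*}
\prob_{\mathcal M,s_0}(\mathbf{F} A) = \; & preach(A') + \sum_{s \in AlmostLoose_0} preach(s)\cdot \prob_{\mathcal M,s}(\mathbf{F} A) \\
 & {} + \sum_{s \in AlmostLoose_1} preach(s)\cdot \prob_{\mathcal M,s}(\mathbf{F} A).
\end{align*}

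The lower bound $low = preach(A')$ is then immediate since the remaining summands are nonnegative. For the upper bound, I will bound $\prob_{\mathcal M,s}(\mathbf{F} A) \le 1$ on $AlmostLoose_0$ and, by condition~(2) of divergence, $\prob_{\mathcal M,s}(\mathbf{F} A) \le f_1(s) \le \theta/2$ on $AlmostLoose_1$, recovering the algorithm's $up$. The width $up - low = preach(AlmostLoose_0) + (\theta/2) \cdot preach(AlmostLoose_1)$ is at most $\theta$ because condition~(1) with precision $\theta/2$ gives $preach(AlmostLoose_0) \le \prob_{\mathcal M,s_0}(\mathbf{F}\, f_0^{-1}([0,\theta/2])) \le \theta/2$, and trivially $preach(AlmostLoose_1) \le 1$. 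The degenerate case $A' = \emptyset$ handled at line~10 follows from the same calculation, so that the returned interval $[0,\theta]$ is correct. The main subtlety is ensuring that no $A$-state can be reached from $s_0$ while avoiding $Abs$; this is precisely what the prioritised classification in lines~6--8 together with the exploration invariant guarantee.
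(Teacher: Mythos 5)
Your proof is correct and follows essentially the same route as the paper: termination via condition (3) of divergence applied with precision $\theta/2$, the lower bound from $A'\subseteq A$, the upper bound by decomposing paths according to the first state of $Abs$ they enter, and the width bound from condition (1) together with $f_1\le\theta/2$ on $AlmostLoose_1$. The only (cosmetic) difference is that you state the first-entrance decomposition as an exact identity via the Markov property, whereas the paper phrases the same three cases as separate upper bounds on disjoint sets of paths.
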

\begin{proof}
Let us show that Algorithm~\ref{algo:prob-reach-divbis} terminates. Any state $s$ inserted in $S'$ fulfills $\min(f_0(s),f_1(s))>\frac \theta 2$ or
is reachable by a transition from such a state. Due to the third condition of divergence, there is a finite number of
such states,  the termination of Algorithm~\ref{algo:prob-reach-divbis} is ensured.

\noindent
Every state $s$ inserted in $AlmostLose_0$ fulfill $f_0(s)\leq \frac \theta 2$. Due to the first item
of divergence, $\prob_{\mathcal M,s_0}(\mathbf{F} AlmostLose_0)\leq \frac \theta 2$.
Since a path in $\mathcal M'$ from $s_0$ that reaches $AlmostLose_0$
is also a path in $\mathcal M$, $\prob_{\mathcal M',s_0}(\mathbf{F} AlmostLose_0)\leq \prob_{\mathcal M,s_0}(\mathbf{F} AlmostLose_0)$.
Thus the interval returned by the algorithm has length at most $\theta$. It remains to prove that 
$\prob_{\mathcal M,s_0}({\bf F} A)$ belongs to this interval.

\noindent
Since a path in $\mathcal M'$ from $s_0$ that reaches $A'\subseteq A$ is also a path in $\mathcal M$, 
$preach(A')\leq \prob_{\mathcal M,s_0}({\bf F} A)$.
\noindent
Consider in $\mathcal M$ a path starting from $s_0$ that reaches $A$. There are three possible (exclusive) cases:
\begin{itemize}
	\item either it is a path in $\mathcal M'$;
	\item or the last state that it reaches in $\mathcal M'$ is a state $s$ fulfilling $f_0(s)\leq \frac \theta 2$. The probability
	of such paths is thus bounded by $\prob_{\mathcal M',s_0}(\mathbf{F} AlmostLose_0)$;
	\item or the last state that it reaches in $\mathcal M'$ is a state $s$ fulfilling $f_0(s)>\frac \theta 2$ and $f_1(s)\leq \frac \theta 2$. For any such state
	the probability to reach $A$ is at most $\frac \theta 2$. The cumulated probability of such paths
	is bounded by $\prob_{\mathcal M',s_0}(\mathbf{F} AlmostLose_1)\cdot \frac \theta 2$.	
\end{itemize}
Thus the upper bound of the interval returned by the algorithm is correct. Observe that the interval $[0,\theta]$ returned
when $A'=\emptyset$ may be larger than the one that could have been returned but avoids the computation
of the reachability probabilities in $\mathcal M'$.
\end{proof}

\bigskip

When the reachability problem is decidable, one can modify the previous algorithm such that
when $A$ is reachable from $s_0$, the algorithm returns an interval whose lower bound 
is strictly greater than 0.  
It first decides whether $A$ is reachable from $s_0$. In the positive case, by a breadth first exploration,
it discovers a path from $s_0$ to $A$ and decreases $\theta$ according to the values of $f_0$ and $f_1$ along this path. 
This ensures that the states of this path will belong to $S'$ to be built later on.
The remaining part of the algorithm is similar to Algorithm~\ref{algo:prob-reach-divbis} except that using decidability
of the reachability problem, we insert in $Lose$  
the encountered states from which one cannot reach $A$.
These states will also be made absorbing in $\mathcal M'$.  

 \begin{algorithm2e}[!h]
  \LinesNumbered
  \DontPrintSemicolon
  \SetKwFunction{CompProb}{CompProb}
  \SetKwFunction{CompFinProb}{CompFinProb}
 \SetKwFunction{PosReach}{PosReach}
  \SetKwFunction{Push}{Push}
  \SetKwFunction{Pop}{Pop}
  \SetKwFunction{Insert}{Insert}
  \SetKwFunction{Extract}{Extract}

{\CompProb}$(\mathcal M,s_0,A,\theta)$\; 
\lIf{{\bf not} $s_0\rightarrow^* A$}{\Return $(0,0)$}
$\Insert(Queue,s_0)$; $S'\leftarrow \emptyset$\; 
$cont \leftarrow {\bf true}$; $pre[s_0]\leftarrow s_0$\;
 \While{$cont$}
 { 
    $s \leftarrow \Extract(Queue)$;  $S'\leftarrow S' \cup \{s\}$\; 
       \lIf {$s\in A$}{$cont \leftarrow {\bf false}$}
       \ElseIf{$s\rightarrow^* A$}  
       {
          \lFor {$s\rightarrow s' \wedge s' \notin S'$} {$\Insert(Queue,s')$; $pre[s']\leftarrow s$}
       }
 } 
$\theta\leftarrow \min(\theta,f_0(s),f_1(s))$\; 
\lWhile{$s\neq pre[s]$}{ $s\leftarrow pre[s]$; $\theta \leftarrow \min(\theta,f_0(s),f_1(s))$}
 $S'\leftarrow \emptyset$;
$AlmostLose_0 \leftarrow \emptyset$;
$AlmostLose_1 \leftarrow \emptyset$\; 
$Lose \leftarrow \emptyset$; $\Front \leftarrow \emptyset$; $\Insert(\Front,s_0)$\;
 \While{$\Front\neq \emptyset$}
 { 
    $s \leftarrow \Extract(\Front)$;
      $S'\leftarrow S' \cup \{s\}$\; 
      \uIf{{\bf not} $s\rightarrow^* A$}{$Lose\leftarrow Lose \cup \{s\}$}
      \uElseIf {$f_0(s)\leq \frac \theta 2$}{$AlmostLose_0\leftarrow AlmostLose_0 \cup \{s\}$}  
       \uElseIf {$f_1(s)\leq \frac \theta 2$}{$AlmostLose_1\leftarrow AlmostLose_1 \cup \{s\}$}  
       \uElseIf{$s\in A$}  {$A'\leftarrow A' \cup \{s\}$}
       \Else 
       {
          \lFor {$s\rightarrow s' \wedge s' \notin S'$} {$\Insert(\Front,s')$}
       }
 }   
 {\small $Abs\leftarrow Lose \cup AlmostLose_0 \cup AlmostLose_1\cup  A'$}\;
  \lFor {$s\in Abs$} {$p'(s,s)\leftarrow 1$}
  \lFor {$s\in S' \setminus Abs \wedge s'\in S'$} {$p'(s,s')\leftarrow p(s,s')$}
 $preach \leftarrow {\CompFinProb}(\mathcal M',s_0)$\;
 {\small \Return$(preach(A'),preach(A')+preach(AlmostLose_0)+\frac \theta 2\cdot preach(AlmostLose_1))$}

\caption{Bounding the reachability probability when reachability is decidable}
 \label{algo:prob-reach-divter}
 \end{algorithm2e}

\begin{restatable}{proposition}{algocorrection} 
  \label{prop:prob-reach-div2}
Algorithm~\ref{algo:prob-reach-divter} terminates and computes the interval 
$[0,0]$ if $A$ is unreachable from $s_0$ and otherwise an interval of length at most $\theta$, not containing 0
and containing $\prob_{\mathcal M,s_0}({\bf F} A)$. 
\end{restatable}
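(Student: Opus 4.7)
The plan is to adapt the proof of Proposition~\ref{prop:prob-reach-divbis} to account for the two new ingredients of Algorithm~\ref{algo:prob-reach-divter}: the preliminary reachability test together with the path-driven shrinking of $\theta$, and the extra $Loose$ bucket in the second exploration. For termination, when $s_0 \not\to^* A$ the algorithm halts at line~2. Otherwise the first loop is a BFS that only expands states from which $A$ is still reachable and stops the first time a state of $A$ is extracted; because $s_0 \to^* A$ there is a finite BFS distance $d$, and since $\mathcal{M}$ is effective each $Supp(\suiv(s))$ is finite, so this phase terminates and produces a witness path $s_0 = t_0 \to t_1 \to \cdots \to t_k \in A$ recoverable through $pere$. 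The second loop is the exploration of Algorithm~\ref{algo:prob-reach-divbis} enriched with a $Loose$ bucket; its frontier still consists only of states $s$ with $\min(f_0(s), f_1(s)) > \frac{\theta}{2}$, so the third divergence clause bounds them finitely inside $Post_{\mathcal{M}}^*(\{s_0\})$ and the loop terminates as well.

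To see that the returned lower bound is strictly positive whenever $A$ is reachable, note that after the first loop $\theta$ is replaced by $\min(\theta, \min_{0 \leq i \leq k} f_0(t_i), \min_{0 \leq i \leq k} f_1(t_i))$. Each $t_i$ is reached from $s_0$ with positive probability and itself reaches $A$ with positive probability, so the two divergence clauses force $f_0(t_i) > 0$ and $f_1(t_i) > 0$: otherwise clause~(1) applied to arbitrarily small precisions would give $\prob_{\mathcal{M}, s_0}(\mathbf{F} \{t_i\}) = 0$, or clause~(2) would give $\prob_{\mathcal{M}, t_i}(\mathbf{F} A) = 0$, both contradictions. Hence the shrunk $\theta$ stays strictly positive, and in the second loop none of $t_0, \ldots, t_{k-1}$ can enter $Loose$ (they reach $A$), $AlmostLoose_j$ (since $f_j(t_i) \geq \theta > \frac{\theta}{2}$), or $A'$ (the BFS stopped at the first $A$-state), so each of them is explored with its original transitions, while $t_k$ lands in $A'$. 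The whole witness path therefore survives unchanged in $\mathcal{M}'$, yielding $preach(A') \geq \prod_{i < k} \suiv(t_i, t_{i+1}) > 0$.

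For the framing itself, $preach(A') \leq \prob_{\mathcal{M}, s_0}(\mathbf{F} A)$ holds because any path of $\mathcal{M}'$ reaching $A' \subseteq A$ is also a path of $\mathcal{M}$. Conversely, any $\mathcal{M}$-path from $s_0$ hitting $A$ avoids $Loose$ (whose states by definition cannot reach $A$), so the three-case analysis of Proposition~\ref{prop:prob-reach-divbis} carries over: the last state of such a path in $S'$ lies in $A'$, in $AlmostLoose_0$ (whose total reach-probability is bounded by $\frac{\theta}{2}$ thanks to the first divergence clause applied with precision $\frac{\theta}{2}$), or in $AlmostLoose_1$ (from which $\mathbf{F} A$ has probability at most $f_1(s) \leq \frac{\theta}{2}$). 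Summing yields $\prob_{\mathcal{M}, s_0}(\mathbf{F} A) \leq preach(A') + preach(AlmostLoose_0) + \frac{\theta}{2} \cdot preach(AlmostLoose_1)$, and the returned interval has width at most $\prob_{\mathcal{M}', s_0}(\mathbf{F} AlmostLoose_0) + \frac{\theta}{2} \leq \frac{\theta}{2} + \frac{\theta}{2} = \theta$.

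The main obstacle is the positivity argument for the lower bound: it rests on the observation that the divergence hypothesis implicitly forces both $f_0$ and $f_1$ to be strictly positive on every state that is both reachable from $s_0$ and co-reachable to $A$, so shrinking $\theta$ along a witness path keeps it strictly positive and simultaneously preserves every state of that path in $\mathcal{M}'$. Once this is in place the remainder is a direct lift of the Algorithm~\ref{algo:prob-reach-divbis} analysis, enriched only by the immediate remark that $Loose$-states never lie on any $\mathcal{M}$-path from $s_0$ to $A$ and are therefore harmless as additional absorbing states.
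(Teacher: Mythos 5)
Your proof is correct and follows essentially the same route as the paper, which itself just observes that the argument for Algorithm~\ref{algo:prob-reach-divbis} carries over once one notes that $Loose$-states cannot lie on a path to $A$ and that the shrunk $\theta$ keeps the witness path inside $\mathcal M'$. You actually supply one detail the paper leaves implicit, namely that the divergence clauses force $f_0$ and $f_1$ to be strictly positive on every state of the witness path so that the shrunk $\theta$ remains positive; this is a worthwhile addition rather than a deviation.
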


\begin{proof}
The proof of correctness of Algorithm~\ref{algo:prob-reach-divter} is very similar to the one of
Algorithm~\ref{algo:prob-reach-divbis}. There are only two observations to be done. First a path in $\mathcal M$ 
that reaches a state of $Lose$ without visiting before $A$ cannot reach afterwards $A$ 
which justifies the fact that they are absorbing in $\mathcal M'$ and do not occur in the computation
of the interval. 

\smallskip\noindent
The additional preliminary stage when $A$ is reachable from $s_0$ consists in (1) finding a reachability path from $s_0$ to $A$
and then decreasing $\theta$ in such a way that no state along this path will be discarded during the main exploration.
Since the path discovered during the first stage belong to $\mathcal M'$,
the (non null) probability of this path will lower bound the lower bound of the interval returned by the algorithm. 
\end{proof}

% !TEX root = main.tex
\section{Divergence of random walks}
\label{sec:rw}

We now study the decidability of the divergence property
in random walks with dynamic weights, i.e., a process 
moves one step left or right on an integer line with  
probabilities determined by its current position. In what follows,  
the divergence problem consists in checking the existence of 
$f_0$ and $f_1$ such that the divergence property holds in a given model.

\begin{restatable}{theorem}{ppdaundec} 
The divergence problem of random walks with dynamic weights is undecidable. 
\end{restatable}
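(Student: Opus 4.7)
The plan is to reduce from the halting problem for Turing machines. Given a TM $M$, I construct an effective random walk on $\nat$ of the form shown in Figure~\ref{fig:rw} with $s_0 = 1$ and $A = \{0\}$, arranging that divergence of the walk is equivalent to the halting of $M$. Concretely, set $p_n = 3/4$ if $M$ halts within $n$ steps on the empty input, and $p_n = 1/2$ otherwise. Since ``$M$ halts within $n$ steps'' is decidable, the map $n \mapsto p_n$ is computable, so the Markov chain is effective in the sense of the paper.

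If $M$ halts at some time $T$, then $p_n = 3/4$ for every $n \geq T$, and the walk is strictly biased to the right from state $T$ onwards. A standard computation for biased walks gives $\prob_{\mathcal M,i}(\mathbf{F} \{0\}) \leq (1/3)^{i-T}$ for $i \geq T$, and trivially $\prob_{\mathcal M,i}(\mathbf{F} \{0\}) \leq 1$ for $i < T$. Defining $f_0 \equiv 1$ and $f_1(i) = 1$ for $i \leq T$, $f_1(i) = (1/3)^{i-T}$ for $i > T$ (both computable, since $T$ is a fixed natural number obtainable by simulating $M$), all three conditions of Definition~\ref{definition:divergent} are immediate: condition~(1) is vacuous since $f_0^{-1}([0,\theta]) = \emptyset$ for $\theta < 1$; condition~(2) holds by the bounds above; and condition~(3) holds because $f_1(i) \to 0$. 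Hence the walk is divergent.

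Conversely, if $M$ does not halt, then $p_n = 1/2$ for all $n > 0$ and the walk is a simple symmetric random walk on $\nat$ reflected at $0$, which is irreducible and recurrent. Hence $\prob_{\mathcal M,s}(\mathbf{F} \{0\}) = 1$ for every $s \in \nat$, so any candidate $f_1$ must satisfy $f_1(s) \geq 1$ on $Post^*_{\mathcal M}(\{s_0\}) = \nat$. Condition~(3) then reduces to requiring $\{s : f_0(s) \geq \theta\}$ to be finite for every $\theta > 0$, making $f_0^{-1}([0,\theta])$ cofinite; but by recurrence the walk a.s.\ visits every state from $s_0$, so $\prob_{\mathcal M,s_0}(\mathbf{F} f_0^{-1}([0,\theta])) = 1$, violating condition~(1) for any $\theta < 1$. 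No computable pair $(f_0,f_1)$ can therefore witness divergence, completing the reduction. The main subtlety lies in this converse direction, since one must rule out \emph{every} pair of computable candidates rather than exhibit one that fails; the argument works only because irreducibility together with recurrence equates ``reachable'' with ``almost surely visited,'' collapsing the slack between the two functions $f_0$ and $f_1$ in Definition~\ref{definition:divergent}.
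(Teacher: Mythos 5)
Your proof is correct and follows essentially the same route as the paper's: both encode an undecidable problem (you use Turing-machine halting, the paper uses Hilbert's tenth problem) into the drift of the walk of Figure~\ref{fig:rw} via a monotone predicate whose bounded version is decidable, and both settle the two cases by the recurrence/transience dichotomy --- recurrence forces any candidate $f_0,f_1$ to be $\geq 1$ on all reachable states and thus violates the finiteness condition of Definition~\ref{definition:divergent}, while transience yields a geometrically decaying $f_1$ witnessing divergence as in Proposition~\ref{proposition:f01}. The only cosmetic difference is the flipped polarity of your correspondence (halting $\Rightarrow$ divergent, whereas in the paper the existence of a Diophantine solution $\Rightarrow$ not divergent), which is immaterial for undecidability.
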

\begin{proof}
We will reduce Hilbert's tenth problem to the divergence problem. 
Let us briefly describe Hilbert's tenth problem, dynamic, known to be undecidable. 
Let $P\in \integer[X_1,\ldots X_k]$ be an integer polynomial with $k$ variables. This problem
asks whether there exist $n_1,\ldots, n_k\in \nat$ such that $P(n_1,\ldots,n_k)=0$.

\smallskip\noindent
We now define dynamic weights above the structure of
$\mathcal M_1$ of Figure~\ref{fig:rw} as follows. 
\begin{itemize}
	\item $W(n, n+1)=\min(P^2(n_1,\ldots,n_k)+1 \mid n_1+\ldots +n_k\leq n) $;
	\item $W(n, n-1)=1$;
	\item  $W(0, 1)=1$.
\end{itemize}

\noindent
This function is obviously computable. 
One studies the divergence w.r.t.  $s_0=1$ and $A=\{0\}$. 
By construction, $p_n=\frac {W(n, n+1)} {W(n, n-1)+W(n, n+1)}$ in  $\mathcal M_1$.  
 
\noindent
 $\bullet$ Assume there exist  $n_1,\ldots, n_k\!\in\! \nat$ s.t. $P(n_1,\ldots,n_k)\!=\!0$.
Then for $n\geq \sum_{i} n_i$, $p_n=\frac 1 2$ implying the recurrence of the Markov chain.
Assume there exist functions $f_0$, $f_1$ fulfilling the first two items of Definition~\ref{definition:divergent}, then these functions
are necessarily the constant function 1 which falsifies the last item of the definition for any $\theta<1$.
Thus  $\mathcal M_{1}$ is not divergent.

\noindent
$\bullet$
Assume there do not exist  $n_1,\ldots, n_k\in \nat$ such that $P(n_1,\ldots,n_k)=0$. Thus for all $n$, $p_n\geq \frac 2 3$
implying $\rho_n\leq \frac 1 2$.
Then $\mathcal M_{1}$ fulfills the hypotheses of Proposition~\ref{proposition:f01}
with function $f_1$ defined by $f_1(n)=\frac 1 {2^n}$.
\end{proof}

\bigskip

Due to this negative result on such a basic model, it is clear that one must restrict the possible weight functions. 
A random walk is said polynomial if for each state $n$, the weight 
$W(n,n+1)$ and $W(n,n-1)$  are positive integer polynomials (i.e. whose coefficients are non-negative and the constant one is positive)
whose single variable is $n$.

\begin{restatable}{theorem}{decdivpda} 
\label{dec-div-pPDA}
The divergence problem for  polynomial random walks is decidable  (in linear time). 
\end{restatable}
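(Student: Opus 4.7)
The plan is to reduce divergence of a polynomial random walk to an effective asymptotic analysis of the ratios $\rho_n=\frac{W(n,n-1)}{W(n,n+1)}$, which are now rational functions of $n$. Writing $P^{+}(n)=W(n,n+1)$ and $P^{-}(n)=W(n,n-1)$, I would read off in linear time the degrees $d^{\pm}$ and leading coefficients $a^{\pm}$ (and, in the borderline case where both agree, the sub-leading coefficients $a^{\pm}_{d-1}$). A routine Taylor expansion at infinity then produces three regimes: when $d^{+}>d^{-}$, or $d^{+}=d^{-}$ with $a^{+}>a^{-}$, $\rho_n$ tends to a limit $<1$ and $\prod \rho_m$ decays geometrically; when $d^{-}>d^{+}$, or $d^{+}=d^{-}$ with $a^{+}<a^{-}$, $\rho_n$ tends to a limit $>1$ and $\prod \rho_m$ diverges; when $d^{+}=d^{-}$ and $a^{+}=a^{-}$, one has $\rho_n=1+c/n+O(1/n^{2})$ with $c=(a^{-}_{d-1}-a^{+}_{d-1})/a^{+}$, and a Bertrand--Gauss-type estimate gives $\prod_{m=1}^{n}\rho_m\sim Cn^{c}$. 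Applying the birth--death criterion already recalled after the definition of $\mathcal M_1$ (transience iff $\sum_{n\geq 1}\prod_{m\leq n}\rho_m<\infty$), each regime is classified by a bounded number of integer comparisons of coefficients, in particular the borderline one by the test $c<-1$, i.e.\ $a^{+}_{d-1}-a^{-}_{d-1}>a^{+}$.

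Next I would turn this classification into a divergence decision, assuming, as in the undecidability proof of the previous theorem, a finite effective target $A$ (e.g.\ $A=\{0\}$). If $\mathcal M$ is transient, it drifts to $+\infty$ and the closed-form hitting probability
\[
\prob_{\mathcal M,n}(\mathbf F\{0\}) \;=\; \frac{\sum_{k\geq n}\prod_{m=1}^{k}\rho_m}{\sum_{k\geq 0}\prod_{m=1}^{k}\rho_m}
\]
tends to $0$ at an effectively computable rate (geometric in the first subcase, of order $n^{c+1}$ in the borderline subcase). Taking $f_1$ to be any such computable upper bound on $\prob_{\mathcal M,n}(\mathbf F A)$ and $f_0\equiv 1$, Proposition~\ref{proposition:f01} immediately yields divergence. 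If instead $\mathcal M$ is recurrent and $A$ is reachable from $s_0$, irreducibility and recurrence imply that every reachable state is almost surely visited from $s_0$; the first condition of Definition~\ref{definition:divergent} then forces $f_0(s)\geq\theta$ on the whole reachable set, while recurrence forces $f_1(s)\geq\prob_{\mathcal M,s}(\mathbf F A)=1$ everywhere. The third condition of the definition therefore fails on an infinite set and $\mathcal M$ is not divergent. The degenerate situation where $A$ is unreachable from $s_0$ is divergent trivially and syntactically decidable for a random walk with finite $A$.

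All of the above tests inspect only a bounded number of coefficients of $P^{+}$ and $P^{-}$ and perform a constant number of integer comparisons, which accounts for the linear-time bound. The main obstacle I anticipate is the borderline regime $\rho_n\to 1$: one must justify that the lone sub-leading coefficient $c$, together with the $O(1/n^{2})$ remainder, is enough to decide the convergence of $\sum\prod\rho_m$, and in the transient subcase to produce a genuinely effective $f_1$ rather than a mere asymptotic. This will rest on comparing $\log\prod_{m=1}^{n}\rho_m$ with $c\log n$ up to an absolutely convergent tail whose bound can be extracted explicitly from the coefficients of $P^{\pm}$; once this uniform effectiveness is established, the remaining arguments are purely bookkeeping.
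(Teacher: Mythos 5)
Your proposal follows essentially the same route as the paper: classify recurrence versus transience of the walk by an asymptotic expansion of $\rho_n=W(n,n-1)/W(n,n+1)$ read off from the degrees, leading and sub-leading coefficients (with the borderline test $\alpha=(a'_{d-1}-a_{d-1})/a_d>1$, i.e.\ your $c<-1$), then conclude non-divergence in the recurrent case by forcing $f_0=f_1=1$ and divergence in the transient case via an explicit computable bound $f_1$ on the hitting probability fed into Proposition~\ref{proposition:f01}. The effectiveness issue you flag in the borderline regime is exactly what the paper resolves with its explicit $1-x\geq e^{-x-x^2}$--type product bounds, so the argument is sound and matches the paper's.
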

\begin{proof}
Let us assume that $s_0=n_\iota$ and $A=\{n_f\}$ with $n_\iota>n_f$.
The other cases either reduce to this one or do not present difficulties.

As for the previous theorem, we consider also the Markov chain $\mathcal M_1$ of Figure~\ref{fig:rw}, with 
dynamic weights.
Precisely, $W(n, n+1)$ and  $W(n, n-1)$ are polynomial depending on the value of $n$, for all $n>0$.
While $W(0,1)$ is irrelevant since it is the single transition outgoing from the state $0$. 
We introduce $\rho_n=\frac{W(n, n-1)}{W(n, n+1)}$ for $n>0$
and we will show how to decide whether this Markov chain is recurrent.

\noindent
Let us write $W(n, n-1)=\sum_{i\leq d}a_in^i$ and  $W(n, n+1)=\sum_{i\leq d'}a'_in^i$
with coefficients in $\nat$.
We perform a case analysis.  

\noindent $\bullet$ When:
\begin{itemize} 
	\item $d'<d$
	\item or $d'=d$,  $i_0=\max(i \mid a_i\neq a'_i)$ exists and $a_{i_0}>a'_{i_0}$
	\item or $W(n, n+1)= W(n, n-1)$
\end{itemize}
Then there exists $n_0$ such that for all $n \geq n_0$, \\
$W(n, n+1)\leq W(n, n-1)$ implying $\rho_n\geq 1$.\\
Thus for all $n\geq n_0$, $\prod_{1\leq m\leq n}\rho_m\geq \prod_{1\leq m\leq n_0}\rho_m$ 
implying $\sum_{n\in \nat}\prod_{1\leq m\leq n}\rho_m=\infty$ yielding recurrence.

\smallskip\noindent
$\bullet$ {\bf Case} $d'=d$ and $i_0=\max(i \mid a_i\neq a'_i)$ exists and $a_{i_0}<a'_{i_0}$ and
$i_0\leq d-2$. Then there exists $n_0$ and $\alpha>0$ such that for all $n>n_0$,
$\rho_n\geq 1- \frac{\alpha}{n^2}$. 
Observe that:
\begin{align*}
 \prod_{n_0<m\leq n} \rho_m &\geq&  \prod_{n_0<m\leq n} 1- \frac{\alpha}{m^2}&\geq& \prod_{n_0<m\leq n} e^{- \frac{2\alpha}{m^2}}\\
&=& e^{- \sum_{n_0<m\leq n} \frac{2\alpha}{m^2}}
&\geq& e^{- \sum_{n_0<m} \frac{2\alpha}{m^2}}>0
\end{align*}
Thus $\sum_{n\in \nat}\prod_{1\leq m\leq n}\rho_m=\infty$ yielding recurrence.

\smallskip\noindent
$\bullet$ {\bf Case} $d'=d$ and $i_0=\max(i \mid a_i\neq a'_i)$ exists, $i_0= d-1$ and $0<\frac{a'_{d-1}-a_{d-1}}{a_d}\leq 1$. 
Let $\alpha=\frac{a'_{d-1}-a_{d-1}}{a_d}$.\\ 
Then there exists $n_0$ and $\beta>0$ such that for all $n>n_0$,\\
$\rho_n\geq 1- \frac{\alpha}{n}-\frac{\beta}{n^2}$. 
Observe that:\\
\begin{align*} 
\prod_{n_0<m\leq n} \rho_n&\geq\  \prod_{n_0<m\leq n} 1- \frac{\alpha}{m}- \frac{\beta}{m^2}\\
&\geq\ \prod_{n_0<m\leq n} e^{- \frac{\alpha}{m}- \frac{\beta}{m^2}-(\frac{\alpha}{m}- \frac{\beta}{m^2})^2}\\
&\geq \prod_{n_0<m\leq n} e^{- \frac{\alpha}{m}- \frac{\beta'}{m^2}}\\
\mbox{\emph{for some }}\beta'& \\
&=\ e^{-\sum_{n_0<m\leq n}\frac{\alpha}{m}+ \frac{\beta'}{m^2}}\\
&\geq\ e^{-\sum_{n_0<m} \frac{\beta'}{m^2}}e^{- \alpha\log(n)}=\frac{e^{-\sum_{n_0<m} \frac{\beta'}{m^2}}}{n^\alpha}
\end{align*} 
Thus $\sum_{n\in \nat}\prod_{1\leq m\leq n}\rho_m=\infty$ yielding recurrence.

\noindent
$\bullet$ {\bf Case} $d'=d$ and $i_0=\max(i \mid a_i\neq a'_i)$ exists, $i_0= d-1$ and $\frac{a'_{d-1}-a_{d-1}}{a_d}> 1$. 
Let $\alpha=\frac{a'_{d-1}-a_{d-1}}{a_d}$.\\ 
Then there exists $n_0$ and $1<\alpha'<\alpha$ such that
for all $n\geq n_0$,  $\rho_n\leq  1- \frac{\alpha'}{n}$.
Observe that:
\begin{align*}
 \prod_{n_0<m\leq n} \rho_m&\leq\  \prod_{n_0<m\leq n} 1- \frac{\alpha'}{m}\\
 &\leq\ \prod_{n_0<m\leq n} e^{- \frac{\alpha'}{m}}\\
&=\ e^{-\sum_{n_0<m\leq n}\frac{\alpha'}{m}}\\
&\leq\ e^{- \alpha'(\log(n+1)-\log(n_0+1))}\\
&=\ \frac{e^{ \alpha'\log(n_0+1)}}{(n+1)^{\alpha'}}
\end{align*}
Thus $\sum_{n\in \nat}\prod_{1\leq m\leq n}\rho_m<\infty$ implying non recurrence.

\noindent
$\bullet$  When: 
\begin{itemize} 
	\item $d=d'$ and $a'_d>a_d$ or
	\item $d<d'$. 
\end{itemize}
Then there exists $n_0$ and $\alpha<1$ such that
for all $n\geq n_0$,  $\rho_n \leq \alpha$.
Thus $\sum_{n\in \nat}\prod_{1\leq m\leq n}\rho_m<\infty$ implying non recurrence.

\smallskip\noindent
This concludes the proof that the recurrence is decidable for this model. 
Let us focus on divergence.

\noindent
$\bullet$ Assume that  this random walk Markov chain 
is  recurrent. Then the probability to reach any state from $n_\iota$ is 1 and the probability
to reach from any state $n_f$ is 1.
If there are functions $f_0$, $f_1$ fulfilling the first two items of Definition~\ref{definition:divergent}, then these functions
are necessarily the constant function 1, thus falsifying the last item of the definition for any $\theta<1$.
Thus  this Markov chain is not divergent.

\noindent
$\bullet$
Assume that  the Markov chain
is transient. The probability $preach(n)$
starting from $n$ with $n>n_f$ to reach $n_f$ is given by this formula.
$$preach(n)=\frac{\sum_{m\geq n} \prod_{n_f<k\leq m}\rho_k}{\sum_{m\geq n_f} \prod_{n_f<k\leq m}\rho_k}$$

\noindent
Then an upper bound of $preach(n)$ which corresponds to function $f$ of Proposition~\ref{proposition:f01}, i.e., f is $f_0$ when $f_1=1$ or $f_1$ when $f_0=1$,  is defined by :
\begin{itemize}
	\item when ($d=d'$ and $a'_d>a_d$) or $d<d'$,
	$\sum_{m\geq n} \alpha^{m-n_f}=\frac{\alpha^{n-n_f}}{1-\alpha}$;
	\item when $d'=d$ and $i_0=\max(i \mid a_i\neq a'_i)$ exists,\\ $i_0= d-1$ and $\frac{a'_{d-1}-a_{d-1}}{a_d}> 1$,
	$e^{ \alpha'\log(n_0+1)}\sum_{m\geq n} \frac{1}{(m+1)^{\alpha'}}\leq \frac{e^{ \alpha'\log(n_0+1)}}{(1-\alpha')n^{\alpha'-1}}$.
\end{itemize}
So the Markov chain is divergent.

\noindent
Since recurrence of this random walk Markov chain is decidable, divergence is also decidable. 
\end{proof}
%
 
% !TEX root = main.tex
\section{Divergence of probabilistic Petri nets}
\label{sec:pPN}

We now study the decidability of the divergence property for probabilistic versions of Petri nets (pPN). 
A probabilistic Petri net (resp. a probabilistic VASS) is a Petri net (resp. a VASS) with a computable weight function $W$. 
In previous works~\cite{AbdullaHM07,DBLP:conf/lics/BrazdilCK0VZ18}, 
the weight function $W$ is a \emph{static} one: i.e., a function from $T$ to $\nat_{>0}$. 
As above, we call these models \emph{static} probabilistic Petri nets.
We introduce here a more powerful function where the weight of a transition depends on the current marking.

\begin{definition} A \emph{(dynamic-)probabilistic Petri net} (pPN)\\ $\mathcal N =(P,T, \mathbf{Pre}, \mathbf{Post}, W, \mathbf{m}_0)$
  is defined by:
\begin{itemize}
	\item $P$, a finite set of places; 
	\item $T$, a finite set of transitions; 
	\item $ \mathbf{Pre}, \mathbf{Post} \in \nat^{P\times T}$, resp.
	the pre and post condition matrices; 
	\item $W$, a computable function from $T \times \nat^P$  to $\rat_{>0}$ the weight function;
	\item $\mathbf{m}_0\in \nat^P$, the initial marking. 
\end{itemize}
 \end{definition}
When for all $t\in T$, $W(t,-)$ is a positive polynomial whose variables are the place markings,
we say that  $\mathcal N$ is a \emph{polynomial} pPN.

A marking $\mathbf{m}$ is an element of $\nat^P$. Let $t$ be a transition. Then $t$
is \emph{enabled} in $\mathbf{m}$ if for all $p\in P$,   $\mathbf{m} (p)\geq \mathbf{Pre}(p,t)$.
When enabled, the \emph{firing} of $t$ leads to marking $\mathbf{m}'$ defined for all $p\in P$ by
$\mathbf{m}'(p)=\mathbf{m}(p) +  \mathbf{Post}(p,t)- \mathbf{Pre}(p,t)$
which is denoted by  $\mathbf{m}\xrightarrow{t}\mathbf{m}'$. Let $\sigma=t_1\ldots t_n$
be a sequence of transitions. We define the enabling and the firing of $\sigma$ by induction.
The empty sequence is always enabled in   $\mathbf{m}$ and its firing leads to $\mathbf{m}$.
When $n>0$,   $\sigma$ is enabled if $\mathbf{m}\xrightarrow{t_1}\mathbf{m}_1$
and $t_2  \ldots t_n$ is enabled in $\mathbf{m}_1$. The firing of $\sigma$ leads to the marking
reached by $t_2  \ldots t_n$ from $\mathbf{m}_1$. A marking $\mathbf{m}$ is reachable from $\mathbf{m}_0$
if there is a firing sequence $\sigma$ that reaches $\mathbf{m}$ from $\mathbf{m}_0$. 
 \begin{definition}Let $\mathcal N$ be a pPN. Then the \emph{Markov chain} $\mathcal M_{\mathcal N}=(S_\mathcal N,\suiv_\mathcal N)$ associated with $\mathcal N$ is defined by:
 \begin{itemize}
	\item  $S_\mathcal N$ is the set of reachable markings from $\mathbf{m}_0$;
	\item  Let $\mathbf{m}\in S_\mathcal N$ and $T_{\mathbf{m}}$ be the set of transitions enabled
	in $\mathbf{m}$. If $T_{\mathbf{m}}=\emptyset$ then $\suiv_\mathcal N(\mathbf{m},\mathbf{m})=1$. Otherwise let  
	$W(\mathbf{m})=\sum_{\mathbf{m}\xrightarrow{t}\mathbf{m}_t}	W(t,\mathbf{m})$. 
	Then for all $\mathbf{m}\xrightarrow{t}\mathbf{m}_t$, $\suiv_\mathcal N(\mathbf{m},\mathbf{m}_t)=\frac{W(t,\mathbf{m})}{W(\mathbf{m})}$. 
\end{itemize}
 \end{definition}

We establish below that for polynomial pPNs, divergence is undecidable. We will proceed using a reduction of the following undecidable problem.

Let us recall that a two-counter (Minsky) machine $\mathcal C$ is defined by a set of two counters $\{c_1,c_2\}$ and a set of $n+1$ instructions labelled by $\{0,\ldots ,n\}$, 
where for all $i<n$, the instruction $i$ is of type
\begin{itemize}
	\item either (1)  $c_j \leftarrow c_j+1; \mathbf{ goto~} i'$ 
	with $j\in \{1,2\}$ and $0\leq i' \leq n$
	\item or (2) $\mathbf{if~} c_j>0 \mathbf{~then~} c_j \leftarrow c_j-1; \mathbf{goto~} i'$ $\mathbf{else\ goto~} i''$
	with $j\in \{1,2\}$ and $0\leq i',i'' \leq n$ 
\end{itemize}
and the instruction
$n$ is $\mathbf{halt}$. The program machine starts at instruction $0$ and halts if it reaches instruction $n$.

The halting problem for  two-counter machines asks, given  a two-counter machine $\mathcal C$ and initial values of counters,  
whether $\mathcal C$ eventually halts. It is undecidable~\cite{Minsky67}.

We introduce a subclass of  two-counter machines that we call \emph{normalized}.
A normalized two-counter machine $\mathcal C$ starts by resetting its counters and on termination resets its counters before halting.  

\smallskip
In a normalized two-counter machine $\mathcal C$, given any initial values $v_1,v_2$,
 $\mathcal C$ halts with $v_1,v_2$ if and only if  $\mathcal C$ halts with initial values $0,0$.
 Moreover when $\mathcal C$ halts, the values of the counters are null.
  
\noindent
{\bf Normalized two counters machine.}
The two
first instructions of a normalized two counters machine are:
\begin{itemize}
	\item$0: \mathbf{~if~} c_1>0 \mathbf{~then~} c_1 \leftarrow c_1-1; \mathbf{goto~} 0$ $\mathbf{else\ goto~} 1$
	\item$1: \mathbf{~if~} c_2>0 \mathbf{~then~} c_2 \leftarrow c_2-1; \mathbf{goto~} 1$ $\mathbf{else\ goto~} 2$
\end{itemize}
The three
last instructions of a normalized CM are:
\begin{itemize}
	\item$n\!-\!2: \mathbf{~if~} c_1>0 \mathbf{~then~} c_1 \leftarrow c_1-1; \mathbf{goto~} n\!-\!2$ $\mathbf{else\ goto~} n\!-\!1$
	\item$n\!-\!1: \mathbf{~if~} c_2>0 \mathbf{~then~} c_2 \leftarrow c_2-1; \mathbf{goto~} n\!-\!1$ $\mathbf{else\ goto~} n$
	\item$n:\mathbf{halt}$
\end{itemize}
For $2<i<n-2$, the labels occurring in instruction $i$ belong to $\{0,\ldots,n-2\}$.

The halting problem for two-counter machines can be reduced to the halting problem  
for normalized two-counter machines, which is thus undecidable.

\begin{lemma}
The halting problem of normalized counter machines is undecidable.
\end{lemma}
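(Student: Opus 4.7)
The plan is to reduce the halting problem for standard two-counter machines on initial configuration $(0,0)$, known undecidable~\cite{Minsky67}, to the halting problem for normalized two-counter machines. Given an arbitrary two-counter machine $\mathcal D$ with instructions labeled $0,\ldots,m$ where instruction $m$ is $\mathbf{halt}$, I would construct a normalized machine $\mathcal C'$ with $n+1$ instructions, where $n=m+4$. Instructions $0$ and $1$ of $\mathcal C'$ are the mandatory initial resets of $c_1$ and $c_2$; instructions $2$ through $m+1$ are copies of $\mathcal D$'s instructions $0$ through $m-1$ with every label reference shifted by $+2$, except that any reference to $\mathcal D$'s halt label $m$ is replaced by $n-2=m+2$; instructions $n-2$, $n-1$, $n$ are the mandatory final resets of $c_1$ and $c_2$ followed by $\mathbf{halt}$.

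First I would check that $\mathcal C'$ meets the syntactic shape of a normalized machine. The sentinel instructions at positions $0,1,n-2,n-1,n$ are the required ones by construction. For an intermediate instruction $i$ with $2<i<n-2$, every label appearing in it was either an original target in $\mathcal D$ shifted by $+2$ (hence in $\{2,\ldots,m+1\}$) or the replacement $n-2$ for $\mathcal D$'s halt label, so it lies in $\{0,\ldots,n-2\}$ as required.

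Then I would argue correctness of the reduction. Starting $\mathcal C'$ from counters $(0,0)$, instructions $0$ and $1$ have no effect since both counters already vanish, so control reaches instruction $2$ with $(0,0)$. From that point the execution of $\mathcal C'$ coincides, up to the uniform label shift $+2$, with the execution of $\mathcal D$ starting at its instruction $0$ on $(0,0)$. If $\mathcal D$ halts, then $\mathcal C'$ eventually enters label $n-2$, and the final reset block empties the counters and reaches $\mathbf{halt}$. Conversely, if $\mathcal C'$ halts then it reached $n$, which is only accessible via $n-2$, and $n-2$ can only be entered from the simulated execution of $\mathcal D$ at the moment it would have reached $\mathbf{halt}$. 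The only subtlety is the restriction forbidding intermediate instructions from targeting $n-1$ or $n$ directly, and this is circumvented uniformly by routing every simulated halting transition through the single label $n-2$ at the beginning of the final reset block.
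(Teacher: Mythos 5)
Your proof is correct and follows essentially the same reduction as the paper: sandwich the given machine's code between the mandatory reset blocks and redirect its halt label to the start of the final reset block. The only difference is that the paper reduces from halting with arbitrary initial values $v_1,v_2$ (inserting $v_1$ incrementations of $c_1$ and $v_2$ of $c_2$ after the initial resets to restore them), whereas you reduce from the equally undecidable case of initial values $(0,0)$, which lets you drop that preamble.
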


\begin{proof}
Let $\mathcal C$ be a two-counter machine with initial values $v_1,v_2$,
 one builds the normalized two-counter machine $\mathcal C_{v_1,v_2}$ by adding after the two first instructions
 of a normalized  two-counter machine, $v_1$ incrementations of $c_1$ followed by $v_2$ incrementations of $c_2$
 followed by  the instructions $\mathcal C$ where the halting instruction has been replaced by  the last three instructions
 of a normalized  two-counter machine. The normalized  two-counter machine $\mathcal C_{v_1,v_2}$ halts if and only if 
  $\mathcal C$ with initial values $v_1,v_2$ halts.
\end{proof}  

\medskip

The following fact will be used in the next proofs.
Let $X$ be a random variable with range in $\nat$ and $h$ be a strictly decreasing function from $\real_{\geq 0}$ to $\real_{\geq 0}$.
Assume that for all $n\geq n_0$, $\prob(X\geq n)\leq h(n)$.\\ Then for all $\theta$ such that $\theta\leq h(n_0)$, 
$$
\begin{array}{r c l c}
\prob(h(X)\leq \theta )&=&\prob(X\geq h^{-1}(\theta) )&\hspace*{3cm}\\
&\multicolumn{3}{l}{\mbox{(since } h^{-1} \mbox { decreasing entails } a \leq b \Leftrightarrow h^{-1}(a) \geq h^{-1}(b)\mbox{)}}\\
&=&  \prob(X\geq \lceil h^{-1}(\theta) \rceil)&\\
&\multicolumn{3}{l}{\mbox{(since } X\mbox{ is integer-valued)}}\\
&\leq& h(\lceil h^{-1}(\theta) \rceil)&\\
&\multicolumn{3}{l}{\mbox{(by hypothesis on } h\mbox{)}}\\
&\leq& \theta&\\
&\multicolumn{3}{l}{\mbox{(since } h \mbox { decreasing entails } a \geq b \Leftrightarrow h(a) \leq h(b)\mbox{ and } \lceil h^{-1}(\theta)\rceil\geq h^{-1}(\theta) \mbox{)} }
\end{array}
$$

Contrary to the previous result, restricting the weight functions to be polynomials
does not yield decidability for pPNs.
\begin{restatable}{theorem}{pndivundecidable} 
\label{theorem:polynomial-pPN-divergence-undecidable}
The divergence problem of polynomial pPNs  w.r.t. an upward closed set is undecidable. 
\end{restatable}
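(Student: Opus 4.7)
The plan is to reduce the halting problem for normalized two-counter machines, undecidable by the previous lemma, to the divergence problem for polynomial pPNs with respect to an upward-closed set. Given a normalized machine $\mathcal C$ with instructions $0,\ldots,n$, I would build a polynomial pPN $\mathcal N_{\mathcal C}$ whose places encode the program counter ($\ell_0,\ldots,\ell_n$), the counter values ($c_1,c_2$), and an auxiliary halt flag $H$. Increment instructions become unit-weight transitions. Each zero-test on $c_j$ at instruction $i$ is realised by two competing transitions: a decrement $t_>$ with $\Pre=\{\ell_i,c_j\}$, $\Post=\{\ell_{i'}\}$ of polynomial weight $(c_j+1)^K$ for a tunable exponent $K$, and a ``declare zero'' $t_=$ with $\Pre=\{\ell_i\}$, $\Post=\{\ell_{i''}\}$ of weight $1$. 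The halt instruction deposits a token on $H$, and the target is the upward-closed set $U=\{\mathbf m:\mathbf m(H)\geq 1\}$. At a zero-test with $c_j>0$ the ``cheating'' firing of $t_=$ has probability $1/((c_j+1)^K+1)$, so the simulation is faithful up to a polynomially small error that shrinks with the current counter.

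The intended dichotomy is that $\mathcal C$ halts iff $\mathcal N_{\mathcal C}$ is divergent w.r.t.\ $\mathbf m_0$ and $U$ (or the converse, after minor wiring changes). For the divergence direction, the central tool is the tail-bound fact stated immediately above the theorem: if a natural random variable $X$ — for instance the supremum over the trajectory of a counter value, or a derived measure of ``depth'' in the simulation — satisfies a polynomial tail bound $\prob(X\geq n)\leq h(n)$ for a decreasing polynomial $h$, then $f_0:=h\circ X$ witnesses condition (i) of Definition~\ref{definition:divergent}. A companion $f_1$ bounding $\prob_{\mathbf m}(\mathbf F U)$ from above is obtained because reaching $U$ from a deep configuration requires at least one cheating step whose probability is polynomially small in the current counter. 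The finiteness condition (iii) then holds since only finitely many reachable configurations have both $f_0\geq\theta$ and $f_1\geq\theta$. To make $X$ behave uniformly in $\mathcal C$, the construction may need auxiliary gadgets — typically a drain-and-halt sub-net routing every cheating step to $H$, and possibly a restart back-edge from $H$ to $\ell_0$ incrementing a depth place — so that $X$ reflects the structure of the simulation rather than the unknown dynamics of $\mathcal C$.

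The main obstacle will be rigorously establishing the \emph{non}-divergence direction: in the opposite case, no computable pair $(f_0,f_1)$ should satisfy all three conditions of Definition~\ref{definition:divergent} simultaneously. This amounts to showing that the reachable state space is infinite \emph{and} that the probability $\prob_{\mathbf m}(\mathbf F U)$ does not decay on an infinite collection of reachable states visited with non-negligible total probability, forcing any admissible $f_1$ to satisfy $f_1\geq\theta$ on all but finitely many reachable states (so $\{f_1\geq\theta\}\cap\mathrm{Post}^*(\mathbf m_0)$ is infinite) and any compensating $f_0$ enforcing condition~(i) to violate condition~(iii). Getting this to work requires a careful co-tuning of the exponent $K$ with the auxiliary gadgets so that, in the non-divergent case, faithful progress (or cheated-then-recurrent progress) visits arbitrarily many reachable configurations with probabilities bounded away from zero. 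I expect this co-tuning and the associated probability accounting to be the most delicate part of the argument.
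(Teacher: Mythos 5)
Your reduction source (halting of normalized two-counter machines) and the general flavor (a weak simulation in which cheating on zero tests is penalized by polynomial weights) match the paper, but the proposal is missing the mechanism that actually produces the dichotomy, and you yourself leave the non-divergence direction as an open ``main obstacle'' --- which is precisely where the content of the proof lies. The paper's net does not run the simulation once towards a halt flag: it runs it \emph{forever}, restarting at $p_0$ after each simulated halt and incrementing a dedicated place $sim$ at each restart, while a transition $rm_i$ removes tokens from $sim$ (two at a time) whenever a zero-branch was taken with the corresponding counter still positive. The target set is $\{\mathbf m \mid \mathbf m(p_0)\geq 1\}$, not a halt flag. This is what makes both directions work. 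If $\mathcal C$ halts, the single infinite run that repeats the correct simulation has probability bounded below by some $\theta_0>0$ (the per-round error is $O(1/n^2)$ in the round number $n$, so the infinite product converges), it visits infinitely many distinct markings (since $sim$ grows), and every such marking is reached with probability at least $\theta_0$ and reaches $A$ again with probability at least $\theta_0$; hence \emph{any} candidate $f_0,f_1$ satisfying the first two conditions of Definition~\ref{definition:divergent} must be at least $\theta_0$ on this infinite set, violating the third condition. If $\mathcal C$ does not halt, every completed round cheats, $rm_i$ is enabled, and $\mathbf m(sim)$ is stochastically dominated by a gambler's-ruin walk with downward drift, giving $\prob(\mathbf m(sim)\geq n)\leq 6/(2^{n-2}-1)$ and hence an explicit $f_0$ via Proposition~\ref{proposition:f01}.

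A second missing ingredient is boundedness of the counters. Your condition (iii) argument (``only finitely many reachable configurations have both $f_0\geq\theta$ and $f_1\geq\theta$'') cannot go through if the counter places are genuinely unbounded with non-decaying probability, and nothing in your construction controls them. The paper handles this with complementary places $c'_1,c'_2$ filled during a geometric initialization phase, so that $\mathbf m(c_j)+\mathbf m(c'_j)$ is invariant and exceeds $n$ with probability exactly $2^{-n}$; this tail is folded into $f_0$. Note also that your intended tail bound (``reaching $U$ from a deep configuration requires a cheating step of polynomially small probability'') fails when the machine keeps its counters small: the cheating probability $1/((c_j+1)^K+1)$ is then bounded away from $0$, so no useful $f_1$ comes out of it. Without the repetition-with-penalty mechanism and the counter-bounding gadget, neither direction of the reduction can be completed, so the gap is essential rather than a matter of detail.
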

\begin{proof} 
We reduce the reachability problem of normalized two-counter machines to the  divergence problem of polynomial pPNs. 
Let $\mathcal C$ be a normalized two-counter machine. The  pPN $\mathcal N_\mathcal C$ is built as follows.
Its set of places is $\{p_i \mid 0\leq i\leq n\} \cup \{q_i \mid i \mbox{~is a test instruction}\}   \cup \{c_j, c'_j \mid 1\leq j\leq 2\} \cup \{init,sim,stop\}$.
The initial marking is $\mathbf{m}_0=init$ and the set $A$ for divergence is defined by $A = \{\mathbf{m} \mid \mathbf{m}(p_0)\geq 1\}$.

 \begin{figure}
\begin{center}
\subfloat[]{
\begin{tikzpicture}[xscale=0.7,yscale=0.7]

\path (0,0) node[draw,circle,inner sep=2pt,minimum size=0.6cm,
label={[xshift=0cm, yshift=0cm]$init$}] (init) {};

\path (0,-2) node[draw,circle,inner sep=2pt,minimum size=0.6cm,
label={[xshift=0cm, yshift=0cm]$c'_1$}] (c1prime) {};

\path (0,-4) node[draw,circle,inner sep=2pt,minimum size=0.6cm,
label={[xshift=0cm, yshift=-1.2cm]$c'_{2}$}] (c2prime) {};

\path (2,-4) node[draw,circle,inner sep=2pt,minimum size=0.6cm,
label={[xshift=0cm, yshift=-1.2cm]$p_0$}] (p0) {};

\path (-2,-2) node[draw,rectangle,inner sep=2pt,minimum width=0.4cm,minimum height=0.2cm,
label={[xshift=-0.8cm, yshift=-0.4cm]$inc_{init}$}] (inc) {};

\path (2,-2) node[draw,rectangle,inner sep=2pt,minimum width=0.4cm,minimum height=0.2cm,
label={[xshift=0.8cm, yshift=-0.4cm]$start$}] (start) {};

\draw[arrows=-latex] (init) -- (-2,0) --(inc) ;
\draw[arrows=-latex] (inc) -- (-2,0) --(init) ;

\draw[arrows=-latex] (init) -- (2,0) --(start) ;

\draw[arrows=-latex] (inc)--(c1prime);

\draw[arrows=-latex] (inc)--(-2,-4)--(c2prime);
\draw[arrows=-latex] (start)  --(p0) ;

\end{tikzpicture}
}
\hfill
\subfloat[]{
 
\begin{tikzpicture}[xscale=0.7,yscale=0.7]

\path (0,0) node[draw,circle,inner sep=2pt,minimum size=0.6cm,
label={[xshift=0cm, yshift=0cm]$p_i$}] (pi) {};

\path (0,-2) node[draw,circle,inner sep=2pt,minimum size=0.6cm,
label={[xshift=0cm, yshift=0cm]$c_j$}] (cj) {};

\path (-4,-2) node[draw,circle,inner sep=2pt,minimum size=0.6cm,
label={[xshift=0cm, yshift=0cm]$c'_j$}] (cjprime) {};

\path (0,-4) node[draw,circle,inner sep=2pt,minimum size=0.6cm,
label={[xshift=0cm, yshift=-1.2cm]$p_{i'}$}] (piprime) {};

\path (2,-4) node[draw,circle,inner sep=2pt,minimum size=0.6cm,
label={[xshift=0cm, yshift=-1.2cm]$stop$}] (stop) {};

\path (-2,-2) node[draw,rectangle,inner sep=2pt,minimum width=0.4cm,minimum height=0.2cm,
label={[xshift=-0.4cm, yshift=0cm]$inc_i$}] (inci) {};

\path (2,-2) node[draw,rectangle,inner sep=2pt,minimum width=0.4cm,minimum height=0.2cm,
label={[xshift=0.8cm, yshift=-0.4cm]$exit_i$}] (exiti) {};

\draw[arrows=-latex] (pi) -- (-2,0) --(inci) ;
\draw[arrows=-latex] (pi) -- (2,0) --(exiti) ;

\draw[arrows=-latex] (cjprime)--(inci);
\draw[arrows=-latex] (inci)--(cj);

\draw[arrows=-latex] (inci)--(-2,-4)--(piprime);
\draw[arrows=-latex] (exiti)  --(stop) ;

\end{tikzpicture}
}
\end{center}
\caption{Initialization stage (a); \ \ \ \ \ \ $i: c_j \leftarrow c_j+1; \mathbf{ goto~} i'$ (b)}
\label{fig:incrementbis}

\end{figure}

\begin{figure}
\begin{center}
\subfloat[]{
\begin{tikzpicture}[xscale=0.7,yscale=0.7]

\path (0,0) node[draw,circle,inner sep=2pt,minimum size=0.6cm,
label={[xshift=0cm, yshift=0cm]$p_i$}] (pi) {};

\path (0,-2) node[draw,circle,inner sep=2pt,minimum size=0.6cm,
label={[xshift=0cm, yshift=0cm]$c_j$}] (cj) {};

\path (-4,-2) node[draw,circle,inner sep=2pt,minimum size=0.6cm,
label={[xshift=0cm, yshift=0cm]$c'_j$}] (cjprime) {};

\path (2,-2) node[draw,circle,inner sep=2pt,minimum size=0.6cm,
label={[xshift=0.6cm, yshift=-0.6cm]$q_i$}] (qi) {};

\path (-2,-4) node[draw,circle,inner sep=2pt,minimum size=0.6cm,
label={[xshift=0cm, yshift=-1.2cm]$p_{i'}$}] (piprime) {};

\path (2,-4) node[draw,circle,inner sep=2pt,minimum size=0.6cm,
label={[xshift=0cm, yshift=-1.2cm]$p_{i''}$}] (pisecond) {};

\path (4,-4) node[draw,circle,inner sep=2pt,minimum size=0.6cm,
label={[xshift=0cm, yshift=-1.2cm]$stop$}] (stop) {};

\path (0,-5.5) node[draw,circle,inner sep=2pt,minimum size=0.6cm,
label={[xshift=-0.8cm, yshift=-0.5cm]$sim$}] (sim) {};

\path (-2,-2) node[draw,rectangle,inner sep=2pt,minimum width=0.4cm,minimum height=0.2cm,
label={[xshift=-0.4cm, yshift=0cm]$dec_i$}] (deci) {};

\path (2,-1) node[draw,rectangle,inner sep=2pt,minimum width=0.4cm,minimum height=0.2cm,
label={[xshift=0.8cm, yshift=-0.4cm]$begZ_i$}] (begZi) {};

\path (2,-3) node[draw,rectangle,inner sep=2pt,minimum width=0.4cm,minimum height=0.2cm,
label={[xshift=0.8cm, yshift=-0.4cm]$endZ_i$}] (endZi) {};

\path (4,-2) node[draw,rectangle,inner sep=2pt,minimum width=0.4cm,minimum height=0.2cm,
label={[xshift=0.8cm, yshift=-0.4cm]$exit_i$}] (exiti) {};

\path (0,-4) node[draw,rectangle,inner sep=2pt,minimum width=0.4cm,minimum height=0.2cm,
label={[xshift=-0.6cm, yshift=-0.4cm]$rm_i$}] (rmi) {};

\draw[arrows=-latex] (pi) -- (-2,0) --(deci) ;
\draw[arrows=-latex] (pi) -- (2,0) --(begZi) ;
\draw[arrows=-latex] (begZi) --(qi) ;
\draw[arrows=-latex] (qi) -- (endZi);
\draw[arrows=-latex] (cj)--(deci);
\draw[arrows=-latex] (deci)--(cjprime);
\draw[arrows=-latex] (endZi)--(pisecond);
\draw[arrows=-latex] (deci)--(piprime);

\draw[arrows=-latex] (sim) --(rmi) node[pos=0.5,right] {$2$}  ;
\draw[arrows=-latex] (exiti)  --(stop) ;

\draw[arrows=-latex] (pi)  --(4,0)--(exiti) ;
\draw[arrows=-latex] (qi)  --(rmi) ;
\draw[arrows=-latex] (rmi)  --(qi) ;
\draw[arrows=-latex] (cj)  --(rmi) ;
\draw[arrows=-latex] (rmi)  --(cj) ;

\end{tikzpicture}
}
\hfill
\subfloat[]{
\begin{tikzpicture}[xscale=0.7,yscale=0.7]

\path (0,0) node[draw,circle,inner sep=2pt,minimum size=0.6cm,
label={[xshift=0cm, yshift=0cm]$p_n$}] (pn) {};

\path (0,-2) node[draw,circle,inner sep=2pt,minimum size=0.6cm,
label={[xshift=0cm, yshift=0cm]$sim$}] (sim) {};

\path (0,-4) node[draw,circle,inner sep=2pt,minimum size=0.6cm,
label={[xshift=0cm, yshift=-1.2cm]$p_0$}] (p0) {};

\path (-2,-2) node[draw,rectangle,inner sep=2pt,minimum width=0.4cm,minimum height=0.2cm,
label={[xshift=-0.8cm, yshift=-0.4cm]$again$}] (again) {};

\draw[arrows=-latex] (pn) -- (-2,0) --(again) ;
\draw[arrows=-latex] (again)--(sim);
\draw[arrows=-latex] (again)--(-2,-4)--(p0);

\end{tikzpicture}
}
\end{center}
\caption{\begin{small}$i:\mathbf{if~} c_j>0 \mathbf{~then~} c_j \leftarrow c_j-1; \mathbf{goto~} i' \mathbf{else\ goto~} i''$\end{small} (a); \ \ \ halt instruction (b)}
\label{fig:haltbis}
\end{figure}

\smallskip\noindent
Places $c'_1$ and $c'_2$ are ``complementary places" for the counters. 
During the initialization stage, transition  $inc_{init}$ repeatedly  increments
these places until transition $start$ is fired unmarking $init$. The weight of  $inc_{init}$ 
and $start$ is 1.

\smallskip\noindent
The set of transitions is instantiated from a pattern per type of instruction. The initialization stage is performed by 
the subnet of Figure~\ref{fig:incrementbis} (a). The pattern for the
incrementation instruction is depicted in Figure~\ref{fig:incrementbis} (b). The pattern for the test
instruction is depicted in Figure~\ref{fig:haltbis} (a). The pattern for the halt instruction is
depicted in Figure~\ref{fig:haltbis} (b). 

\smallskip\noindent
Before specifying the weight function $W$, let us describe the qualitative behaviour of this net.
$\mathcal N_\mathcal C$ performs repeatedly a \emph{weak} simulation of  $\mathcal C$. As usual
 since the zero test does not exist in Petri nets, during a test instruction $i$ the simulation can follow the zero branch while the corresponding counter
is non null (transitions  $begZ_i$ and $endZ_i$). If the net has cheated then with transition $rm_i$, it can remove
tokens from $sim$ (two tokens each time). In addition when the instruction is not $\mathbf{halt}$, instead of simulating it, it can \emph{exit}
the simulation by putting a token in $stop$. 
The simulation of the $\mathbf{halt}$ instruction consists in restarting the simulation and incrementing the simulation counter
$sim$.

 \smallskip\noindent
 For $j\in \{1,2\}$, the sum of markings of $c_j$ and $c'_j$ remains constant
 fixed by the number of firings  $inc_{init}$ during the initialization stage.
 Thus the simulation of the $i^{th}$ instruction when $i$ increments  $c_j$ cannot proceed
 if $c_j$ has reached this value letting only fireable transition $exit_i$.

\smallskip\noindent
Thus the set of reachable markings is included in the following set of markings 
$\{p_i +xc_1+yc_2+(s-x)c'_1+(s-y)c'_2+z sim\mid 0\leq i\leq n, x,y,s,z \in \nat\} \cup \{q_i +xc_1+yc_2+(s-x)c'_1+(s-y)c'_2+z sim\mid  i \mbox{~is a test instruction}, x,y,s,z \in \nat\} \cup \{stop +xc_1+yc_2+(s-x)c'_1+(s-y)c'_2+z sim\mid x,y,s,z \in \nat\}$. 

\smallskip\noindent
Let us specify the weight function. For all incrementation instruction $i$, 
$W(inc_i,\mathbf{m})=\mathbf{m}(sim)^2+1$.
For all test instruction $i$, 
$W(begZ_i,\mathbf{m})=\mathbf{m}(sim)^2+1$,
$W(dec_i,\mathbf{m})=2\mathbf{m}(sim)^4+2$
and $W(rm_i,\mathbf{m})=2$.
All other weights are equal to 1.

\smallskip\noindent
Thus after the initialization stage for all $j\in \{1,2\}$ the probability that a reachable marking
 $\mathbf{m}$ fufills $\mathbf{m}(c_j)+\mathbf{m}(c'_j)\geq n$ is exactly $2^{-n}$.

 \smallskip\noindent
 We will establish that $\mathcal N_\mathcal C$ is divergent w.r.t. to $\mathbf{m}_0$ and $A$ if and only if $\mathcal C$ does not halt.

\smallskip\noindent
$\bullet$ Assume that $\mathcal C$ halts and consider its execution $\rho$ with initial values $(0,0)$.
Let $\ell=|\rho|$ the length of this execution. Consider now $\sigma$ the infinite sequence of $\mathcal N_\mathcal C$
that first fires  $\ell$ times $inc_{init}$, then fires $start$ and  infinitely performs the correct simulation of this execution
(possible due to the initialization stage). 
\smallskip\noindent
After every simulation of $\rho$, the marking of $sim$ is incremented and it is never decremented
since (due to the correctness of the simulation) every time a transition $begZ_i$ is fired, the corresponding counter place
$c_j$ is unmarked which forbids the firing of $rm_i$. So during the $(n+1)^{th}$ simulation of $\rho$,
the marking of $sim$ is equal to $n$.

\smallskip\noindent
So consider the  probability of the correct simulation of an instruction $i$  during the $(n+1)^{th}$ simulation.
\begin{itemize}
	\item If $i$ is an incrementation then the weight of $inc_i$ is $n^2+1$ and the weight of $exit_i$ is 1.
	So the probability of a correct simulation is 
	$\frac{n^2+1}{n^2+2}= 1-\frac{1}{n^2+2}\geq e^{-\frac{2}{n^2+2}}$.~\footnote{We use $1-x\geq e^{-2x}$ for $0\leq x\leq \frac 1 2$.}
	\item If $i$ is a test of $c_j$ and the marking of $c_j$ is non null 
	then the weight of $dec_i$ is $2n^4+2$, the weight of $begZ_i$ is $n^2+1$ and the weight of $exit_i$ is 1.
	So the probability of a correct simulation is 
	$\frac{2n^4+2}{2n^4+n^2+4}\geq \frac{2n^4+2}{2n^4+2n^2+4}=  \frac{n^2+1}{n^{2}+2}= 1-\frac{1}{n^2+2}\geq e^{-\frac{2}{n^2+2}}$.
	\item If $i$ is a test of $c_j$ and the marking of $c_j$ is null 
	then the weight of $begZ_i$ is $n^2+1$ and the weight of $exit_i$ is 1.
	So the probability of a correct simulation is 
	$\frac{n^2+1}{n^2+2}= 1-\frac{1}{n^2+2}\geq e^{-\frac{2}{n^2+2}}$.
\end{itemize}
So the probability of the correct simulation during the $(n+1)^{th}$ simulation is at least $(e^{-\frac{2}{n^2+2}})^\ell=e^{-\frac{2\ell}{n^2+2}}$.
Hence the probability of $\sigma$ is at least $2^{-\ell}\prod_{n\in \nat}e^{-\frac{2\ell}{n^2+2}}=2^{-\ell}e^{-\sum_{n\in \nat} \frac{2\ell}{n^2+2}}>0$, 
as the sum in the exponent converges.
Let this probability be $\theta_0$.
Observe that the set of markings visited during the simulation (say $B$) is infinite since after every simulation, $sim$ is incremented.
Moreover by definition of $\sigma$, for all $\mathbf{m}\in B$, 
$\prob_{\mathcal N_\mathcal C,\mathbf{m}_0}(\mathbf{F}\mathbf{m})  \geq \theta_0$ and $\prob_{\mathcal N_\mathcal C,\mathbf{m}}(\mathbf{F} A) \geq \theta_0$.

\smallskip\noindent
Assume by contradiction that  $\mathcal N_\mathcal C$ is divergent w.r.t. to $\mathbf{m}_0$ and $A$
and consider the corresponding functions $f_0$ and $f_1$. Then $B\subseteq  \{\mathbf{m} \mid f_0(\mathbf{m}) \geq \theta_0 \wedge  f_1(\mathbf{m}) \geq \theta_0\} \cap Reach(\mathcal N_\mathcal C,\mathbf{m}_0)$
falsifying the third condition of divergence.

\smallskip\noindent
$\bullet$ Assume that $\mathcal C$ does not halt.
 Let us bound the probability that a reachable marking
 $\mathbf{m}$ fufills $\mathbf{m}(sim)\geq n$. Recalling that to mark again $p_0$
 after a necessarily faulty simulation, when $sim$ is marked implies that $rm_i$ can fire, the probabilistic behaviour
 of the marking $\mathbf{m}(sim)$ after every firing of $again$ or some $exit_i$ until it possibly reaches $n$ 
 or remains unchanged is depicted as follows.
 \begin{center}
  \begin{tikzpicture}[node distance=2cm,->,auto,-latex]

  \path (1,0) node[minimum size=0.6cm,draw,circle,inner sep=2pt] (q1) {$\bot$};
   \path (2,0) node[minimum size=0.6cm,draw,circle,inner sep=2pt] (q2) {$1$};
   \path (3,0) node[] (q3) {$\cdots$};
    \path (4,0) node[minimum size=0.6cm,draw,circle,inner sep=2pt] (qi) {$i$};
  \path (5,0) node[] (q4) {$\cdots$};
   \path (6,0) node[minimum size=0.6cm,draw,circle,inner sep=2pt] (qn) {$n$};

 \draw[arrows=-latex'] (q2)-- (1.5,-0.5) node[pos=1,below] {$\frac 1 {7}$}--(q1) ;
  \draw[arrows=-latex'] (q2)-- (2.5,0.5) node[pos=1,above] {$\frac {6} {7}$}--(q3) ;

 \draw[arrows=-latex'] (qi)-- (3.5,-0.5) node[pos=1,below] {$\frac 2 {3} z_i$}--(q3) ;
   \draw[arrows=-latex'] (qi)-- (4.5,0.5) node[pos=1,above] {$\frac {1} {3} z_i$}--(q4) ;
   \draw[arrows=-latex'] (qi)-- (4,-2)-- (1,-2) node[pos=0.5,above] {$\frac {1} {2n^4+n^2+4} z_i$}--(q1) ;

  \path (6.5,-1) node[] (q4) {{\small where $z_i=(1+\frac {1} {2n^4+n^2+4} )^{-1}$}};

  \end{tikzpicture}
\end{center}
Here we have upper bounded the stochastic behaviour of $\mathbf{m}(sim)$ i.e.,
(1) we have not considered the possibility of multiple firings of $rm_i$
and (2) we have only considered the possibility of a single firing of $exit_i$
during a simulation. The absorbing state $\bot$ of this random walk represents the fact that $stop$ is marked
(and so the marking of $sim$ remains unchanged in the future).
This random walk can be stochastically bounded by this one:

 \begin{center}
  \begin{tikzpicture}[node distance=2cm,->,auto,-latex]

  \path (1,0) node[minimum size=0.6cm,draw,circle,inner sep=2pt] (q1) {$\bot$};
   \path (2,0) node[minimum size=0.6cm,draw,circle,inner sep=2pt] (q2) {$1$};
   \path (3,0) node[] (q3) {$\cdots$};
    \path (4,0) node[minimum size=0.6cm,draw,circle,inner sep=2pt] (qi) {$i$};
  \path (5,0) node[] (q4) {$\cdots$};
   \path (6,0) node[minimum size=0.6cm,draw,circle,inner sep=2pt] (qn) {$n$};

 \draw[arrows=-latex'] (q2)-- (1.5,-0.5) node[pos=1,below] {$\frac 1 {7}$}--(q1) ;
  \draw[arrows=-latex'] (q2)-- (2.5,0.5) node[pos=1,above] {$\frac {6} {7}$}--(q3) ;

 \draw[arrows=-latex'] (qi)-- (3.5,-0.5) node[pos=1,below] {{\small $(\frac 2 {3}+\frac {1} {2n^4+n^2+4}) z_i$}}--(q3) ;
   \draw[arrows=-latex'] (qi)-- (4.5,0.5) node[pos=1,above] {$\frac {1} {3} z_i$}--(q4) ;

  \end{tikzpicture}
\end{center}
and then by this one:
 \begin{center}
  \begin{tikzpicture}[node distance=2cm,->,auto,-latex]

  \path (1,0) node[minimum size=0.6cm,draw,circle,inner sep=2pt] (q1) {$\bot$};
   \path (2,0) node[minimum size=0.6cm,draw,circle,inner sep=2pt] (q2) {$1$};
   \path (3,0) node[] (q3) {$\cdots$};
    \path (4,0) node[minimum size=0.6cm,draw,circle,inner sep=2pt] (qi) {$i$};
  \path (5,0) node[] (q4) {$\cdots$};
   \path (6,0) node[minimum size=0.6cm,draw,circle,inner sep=2pt] (qn) {$n$};

 \draw[arrows=-latex'] (q2)-- (1.5,-0.5) node[pos=1,below] {$\frac 1 {7}$}--(q1) ;
  \draw[arrows=-latex'] (q2)-- (2.5,0.5) node[pos=1,above] {$\frac {6} {7}$}--(q3) ;

 \draw[arrows=-latex'] (qi)-- (3.5,-0.5) node[pos=1,below] {$\frac 2 {3}$}--(q3) ;
   \draw[arrows=-latex'] (qi)-- (4.5,0.5) node[pos=1,above] {$\frac {1} {3}$}--(q4) ;

  \end{tikzpicture}
\end{center}
Now applying the standard results of gambler's ruin, one gets that the probability
that the marking of $sim$ reaches $n$ (for $n>2$) is bounded by: $\frac{6}{2^{n-2}-1}$.

\smallskip\noindent
We now define function $f_0$ and  show that it fulfills the hypotheses of Proposition~\ref{proposition:f01}. 
$f_0(\mathbf{m})$ is defined by: 
\begin{itemize}
	\item if $\mathbf{m}(sim)\leq 2$ then $f_0(\mathbf{m})=1$ 
	\item else $f_0(\mathbf{m})=2^{-(\mathbf{m}(c_1)+\mathbf{m}(c'_1))}+\frac{6}{2^{\mathbf{m}(sim)-2}-1}$. 
\end{itemize}

\smallskip\noindent
Let us verify the conditions of divergence. Let $0<\theta<1$.
\begin{small}
\begin{align*}
& \prob_{\mathcal N_\mathcal C,\mathbf{m}_0}\left(\mathbf{F }(f_0(\mathbf{m})\leq \theta)\right)\\
=\ & \prob_{\mathcal N_\mathcal C,\mathbf{m}_0}(\mathbf{F }(\mathbf{m}(sim)> 2 \\
& \hspace*{1.7cm}\wedge 2^{-(\mathbf{m}(c_1)+\mathbf{m}(c'_1))}+\frac{6}{2^{\mathbf{m}(sim)-2}-1}\leq \theta))\\
\leq\ & \prob_{\mathcal N_\mathcal C,\mathbf{m}_0}(\mathbf{F }((\mathbf{m}(sim)> 2 \wedge \frac{6}{2^{\mathbf{m}(sim)-2}-1}\leq \frac \theta 2)\\
&\hspace*{1.7cm}\vee (2^{-(\mathbf{m}(c_1)+\mathbf{m}(c'_1))}\leq \frac \theta 2)))\\
\leq\ & \prob_{\mathcal N_\mathcal C,\mathbf{m}_0}(\mathbf{F }(\mathbf{m}(sim)> 2 \wedge \frac{6}{2^{\mathbf{m}(sim)-2}-1}\leq \frac \theta 2))\\
&+\ \prob_{\mathcal N_\mathcal C,\mathbf{m}_0}(\mathbf{F }(2^{-(\mathbf{m}(c_1)+\mathbf{m}(c'_1))}\leq \frac \theta 2))\\
\leq\ & \frac \theta 2 + \frac \theta 2=\theta
\end{align*}
\end{small}
On the other hand $f_0(\mathbf{m})\geq \theta$ implies that  $\mathbf{m}(c_1)$, $\mathbf{m}(c'_1)$ and $\mathbf{m}(sim)$ are bounded by some constant.
Since for all reachable $\mathbf{m}$, $\mathbf{m}(c_2)+\mathbf{m}(c'_2)=\mathbf{m}(c_1)+\mathbf{m}(c'_1)$ and
$\sum \mathbf{m}(p_i)  + \sum \mathbf{m}(q_i) +\mathbf{m}(stop)=1$, all places are bounded and thus
the set of reachable markings  $\mathbf{m}$ such that  $f_0(\mathbf{m})\geq \theta$ is finite. 
\end{proof} 

\bigskip

% !TEX root = main.tex
\section{Divergent models by construction}
\label{sec:case}

Due to the undecidability results, in this section, we  propose syntactical restrictions for standard models like channel systems and
pushdown automata that ensure divergence.  The core  idea is to introduce a numerical distance between a state and the target
(below represented by a computable function $f$).  Then the sufficient conditions for divergence are stated as follows:
(1) On average given any current state, the distance increases at the next step at least by fixed amount 
($\varepsilon$ in the following theorems) and (2) the difference between the current and next distances is bounded by
a fixed quantity ($K$ in the following theorems).

\subsection{Sufficient conditions for divergence}
We have focused our study on the existence of the function $f_1$ but a similar work remains to be done for the function $f_0$.
Observing that function
$f_1$ of Definition~\ref{definition:divergent} is somewhat related to transience of Markov chains, we first recall a sufficient condition of transience, based on which a new one is established.  Then we derive a  sufficient condition of divergence for infinite Markov chains
used for our two illustrations.

\begin{theorem}[\cite{RW16}] 
\label{theorem:sufficient-transient}
Let $\mathcal M$ be a Markov chain and
 $f$ be a function from $S$ to $\real$ with $B=\{s \mid f(s)\leq 0\}$  fulfilling
 $\emptyset \subsetneq B \subsetneq S$ and $\varepsilon,K \in \real_{>0}$
such that:
\begin{small}
\begin{equation}
\begin{aligned}
\mbox{for all }s\in S\setminus B\qquad \sum_{s'\in S} \suiv(s,s')f(s')\geq f(s)+\varepsilon \mbox{ and }\\ \sum_{|f(s')-f(s)|\leq K} \suiv(s,s')=1
\end{aligned}
\label{equation:div}
\end{equation}
\end{small}
Then:
$$\mbox{for all }s\in S\setminus B \qquad \prob_{\mathcal M,s}({\bf F} B)\leq c_1e^{-c_2f(s)}$$
where $c_1=\sum_{n\geq 1}e^{-\frac{\varepsilon^2n}{2(\varepsilon+K)^2}}$ and $c_2=\frac{\varepsilon}{(\varepsilon+K)^2}$.
\noindent \\
Which implies  transience of $\mathcal M$ when it is irreducible.
\end{theorem}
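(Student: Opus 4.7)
The plan is to decompose the hitting probability by the hitting time of $B$ and to apply a Hoeffding--Azuma inequality for submartingales with bounded differences. Writing $(X_n)_{n\ge 0}$ for the chain started at $s$, $Y_n=f(X_n)$, and $\tau=\inf\{n\ge 1\mid X_n\in B\}$, I would start from the union bound
$$\prob_{\mathcal M,s}({\bf F} B)\ \le\ \sum_{n\ge 1}\prob_{\mathcal M,s}(\tau=n)$$
and bound each summand separately; the series structure of $c_1$ is precisely what one obtains from such a decomposition.

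For each $n$, I would work with the stopped, compensated process $V_n=Y_{n\wedge\tau}-f(s)-\varepsilon\cdot(n\wedge\tau)$. The drift inequality (\ref{equation:div}) shows that $(V_n)$ is a submartingale with $V_0=0$: on $\{\tau>k\}$ one has $E[V_{k+1}-V_k\mid\mathcal F_k]=E[Y_{k+1}-Y_k\mid\mathcal F_k]-\varepsilon\ge 0$, and on $\{\tau\le k\}$ the increment is $0$. Since $Y_{k+1}-Y_k\in[-K,K]$ on $\{\tau>k\}$, the increment $V_{k+1}-V_k$ lies in $[-(K+\varepsilon),\,K-\varepsilon]$, so $|V_{k+1}-V_k|\le K+\varepsilon$. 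On the event $\{\tau=n\}$ one has $Y_n\le 0$ and $n\wedge\tau=n$, hence $V_n\le -(f(s)+n\varepsilon)$. The Hoeffding--Azuma inequality applied to $-V_n$ (a supermartingale with the same bounded differences) then yields
$$\prob_{\mathcal M,s}(\tau=n)\ \le\ \prob\bigl(V_n\le -(f(s)+n\varepsilon)\bigr)\ \le\ \exp\!\Bigl(-\tfrac{(f(s)+n\varepsilon)^2}{2n(\varepsilon+K)^2}\Bigr).$$

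The last step is purely algebraic: from $(f(s)+n\varepsilon)^2\ge 2n\varepsilon f(s)+n^2\varepsilon^2$ one splits the exponent and gets
$$\prob_{\mathcal M,s}(\tau=n)\ \le\ e^{-c_2 f(s)}\cdot e^{-\varepsilon^2 n/(2(\varepsilon+K)^2)},$$
and summation over $n\ge 1$ reproduces $c_1 e^{-c_2 f(s)}$. The corollary on transience follows because (\ref{equation:div}) forces $f$ to be unbounded on $S\setminus B$ (otherwise a state near the supremum would violate the drift condition, as its post-step expected $f$-value could not lie below that supremum and below $0$ on $B$ simultaneously); at states $s$ with $f(s)$ large enough the bound is strictly less than $1$, which is incompatible with recurrence of an irreducible chain.

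The main obstacle is to recover the constant $(\varepsilon+K)^2$ in the Azuma estimate rather than the $(2K)^2$ produced by compensating with the true conditional drift. The trick is to subtract only the deterministic lower bound $\varepsilon$ on the drift: one then loses the martingale property (one only keeps a submartingale) but gains a \emph{pointwise} bound $|V_{k+1}-V_k|\le \varepsilon+K$ on the increments, which is exactly what is needed for the submartingale Hoeffding--Azuma bound to yield the stated constants.
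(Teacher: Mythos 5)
Your proof is correct, and it uses exactly the technique the paper employs in the Appendix to prove the generalization (Theorem~\ref{theorem:sufficient-transient2}): a stopped, drift-compensated submartingale with increments bounded by $\varepsilon+K$, a union bound over the hitting time of $B$, and the Azuma--Hoeffding inequality, followed by the same split of $(f(s)+n\varepsilon)^2$ to extract $c_1$ and $c_2$. Your derivation of transience from the unboundedness of $f$ on $S\setminus B$ also matches the paper's argument, so there is nothing to fix.
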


Theorem~\ref{theorem:sufficient-transient} suffers two restrictions. 
First it requires that when $s\notin B$, after every step $f(s)$ is increased by $\varepsilon$ on
average. Second it restricts the maximal change of $f(s)$ to
be bounded by a constant $K$. We propose two extensions
of this theorem that can then be used to get useful sufficient
conditions for divergence.

\begin{restatable}{theorem}{sufficienttransientbi}
\label{theorem:sufficient-transientbis}
Let $\mathcal M$ be a Markov chain and
 $f$ be a function from $S$ to $\real$ with $B=\{s \mid f(s)\leq 0\}$  fulfilling
 $\emptyset \subsetneq B \subsetneq S$, $\varepsilon,K \in \real_{>0}$ and $d\in \nat^*$
such that:
\begin{small}
\begin{equation}
\label{eq:div2}
\begin{aligned}
\mbox{for all }s\in S\setminus B\qquad \sum_{s'\in S} \suiv^{(d)}(s,s')f(s')\geq f(s)+\varepsilon \mbox{ and }\\ \sum_{|f(s')-f(s)|\leq K} \suiv(s,s')=1
\end{aligned}
\end{equation}
\end{small}
Then for all $s\in S $ such that $f(s)>dK$,
$$\prob_{\mathcal M,s}({\bf F} B)\leq c_1e^{-c_2(f(s)-dK)}$$
where $c_1=\sum_{n\geq 1}e^{-\frac{\varepsilon^2n}{2(\varepsilon+K)^2}}$ and $c_2=\frac{\varepsilon}{(\varepsilon+K)^2}$.\\
which implies  transience of $\mathcal M$ when it is irreducible.
\end{restatable}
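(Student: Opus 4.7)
The plan is to reduce directly to Theorem~\ref{theorem:sufficient-transient} by applying it to the $d$-step iterated chain $\mathcal M^{(d)}=(S,\suiv^{(d)})$ together with an appropriately shifted potential. The shift is forced by the fact that hitting the set $B$ in $\mathcal M$ need not coincide with a $d$-aligned timestep, and therefore the natural hitting set for the iterated chain must be strictly larger than $B$.

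Concretely, I would introduce
\[
g(s) = f(s) - dK \qquad \text{and} \qquad B' = \{s\mid g(s)\leq 0\} = \{s\mid f(s)\leq dK\}.
\]
For every $s\notin B'$ one has $f(s)>dK\geq 0$, hence $s\notin B$, so hypothesis~\eqref{eq:div2} applies; subtracting $dK$ on both sides yields the one-step drift inequality $\sum_{s'}\suiv^{(d)}(s,s')\,g(s')\geq g(s)+\varepsilon$ for the chain $\mathcal M^{(d)}$. The bounded-step condition also transfers: iterating the per-step bound $|f(s')-f(s)|\leq K$ along any $d$-step path of $\mathcal M$ gives $|g(s')-g(s)|=|f(s')-f(s)|\leq dK$, so $\suiv^{(d)}(s,\cdot)$ is supported on states whose $g$-value differs from $g(s)$ by at most $dK$.

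The second step is the coupling between the two chains. If a trajectory $s_0 s_1 s_2\cdots$ of $\mathcal M$ first enters $B$ at time $k$, write $k=qd+r$ with $0\leq r<d$. By the bounded-step hypothesis applied $r$ times backwards from $s_k$ to $s_{qd}$, one gets $f(s_{qd})\leq f(s_k)+rK\leq (d-1)K<dK$, so $s_{qd}\in B'$. Hence the subsampled sequence $s_0,s_d,s_{2d},\ldots$, which has the law of a trajectory of $\mathcal M^{(d)}$ starting from $s_0$, reaches $B'$ by step $q$. This gives the event inclusion
\[
\{\text{$\mathcal M$-trajectory reaches } B\}\;\subseteq\;\{\text{$\mathcal M^{(d)}$-trajectory reaches } B'\},
\]
and therefore $\prob_{\mathcal M,s}(\mathbf{F} B)\leq \prob_{\mathcal M^{(d)},s}(\mathbf{F} B')$.

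Applying Theorem~\ref{theorem:sufficient-transient} to $(\mathcal M^{(d)},g,B')$ then yields, for every $s$ with $g(s)>0$ (equivalently $f(s)>dK$), a tail bound of the form $c_1 e^{-c_2\, g(s)}=c_1 e^{-c_2(f(s)-dK)}$, which is the conclusion. Transience under irreducibility follows at once: from any $s$ with $f(s)$ large enough to make this bound strictly less than $1$, the chain avoids $B$ with positive probability, which rules out recurrence. I expect the main obstacle to be the matching of constants between the base theorem and its iterated application: the natural output of applying Theorem~\ref{theorem:sufficient-transient} to $\mathcal M^{(d)}$ uses the step bound $dK$ of the iterated chain, so some care (either a more refined martingale argument on the original chain with drift $\varepsilon$ spread over $d$ steps, or an explicit rewriting of the resulting exponents) is needed to land exactly on the constants $c_1,c_2$ stated in terms of $K$ alone.
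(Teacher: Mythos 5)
Your proposal is correct and follows essentially the same route as the paper: pass to the $d$-step chain $\mathcal M^{(d)}$ whose step bound is $dK$, apply Theorem~\ref{theorem:sufficient-transient} to the (shifted) potential, and couple the hitting of $B$ in $\mathcal M$ with the hitting of $\{s \mid f(s)\le dK\}$ along the subsampled trajectory. The constant-matching issue you flag is real --- a literal application of the base theorem to $\mathcal M^{(d)}$ produces constants with $dK$ in place of $K$, a point the paper's proof silently glosses over --- and your explicit shift to $B'$ together with the backward $r$-step estimate is in fact slightly more careful than the published argument.
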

\begin{proof}
Consider $\mathcal M^{(d)}=(S, \suiv^{(d)})$.\\ 
Since $ \sum_{|f(s')-f(s)|\leq dK}\suiv^{(d)}(s,s')=1$, $\mathcal M^{(d)}$
fulfills the conditions of Theorem~\ref{theorem:sufficient-transient} and so: 
$$\mbox{for all }s\in S\setminus B \qquad \prob_{\mathcal M^{(d)},s}({\bf F} B)\leq c_1e^{-c_2f(s)}$$
From this result one gets for all $s$ such that $f(s)>\alpha$:
$$\prob_{\mathcal M^{(d)},s}({\bf F} \{s' \mid f(s')\leq \alpha \})\leq c_1e^{-c_2(f(s)-\alpha)}$$

Consider $d'\leq d$ transitions in $\mathcal M$, 
leading from $s\in S\setminus B$ to $s'$. Then $f(s)-f(s')\leq dK$ as long as $B$ is not reached.
So for all $s$, $\prob_{\mathcal M,s}({\bf F} B)\leq \prob_{\mathcal M^{(d)},s}({\bf F} \{s' \mid f(s')\leq dK\})$.
\end{proof}

\bigskip

From this theorem, one can get a sufficient condition for divergence. 

\begin{restatable}{proposition}{sufficientdivergentbis} 
\label{proposition:sufficient-divergentbis}
Let $\mathcal M$ be a Markov chain and
 $f$ be a computable function from $S$ to $\real$ with $B=\{s \mid f(s)\leq 0\}$  fulfilling
 $\emptyset \subsetneq B \subsetneq S$, and for some $\varepsilon,K \in \real_{>0}$ and $d\in \nat^*$,
 Equation~(\ref{eq:div2}). Assume in addition that for all $n\in \nat$, $\{s \mid f(s)\leq n\}$
 is finite. Then $\mathcal M$ is divergent w.r.t. any $s_0$ and any finite $A$. 
 \end{restatable}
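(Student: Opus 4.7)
The plan is to invoke Proposition~\ref{proposition:f01} via the second clause of its first item, by constructing a computable $f_1:S\to\real_{\geq 0}$ that uniformly bounds the reachability probability of $A$ and whose superlevel sets are finite. The hypothesis supplies $f$ satisfying Equation~(\ref{eq:div2}) relative to the zero-set $B=\{s\mid f(s)\leq 0\}$, while $A$ is an arbitrary finite subset of $S$. If $A=\emptyset$ the conclusion is trivial, so assume $A\neq\emptyset$ and set $M=\max\bigl(0,\max_{s\in A}f(s)\bigr)$.

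Next I would consider the shifted function $g(s):=f(s)-M$, whose zero-set $B_g=\{s\mid f(s)\leq M\}$ contains both $B$ (since $M\geq 0$) and $A$ (by choice of $M$). The degenerate case $B_g=S$ is handled separately: since $\lceil M\rceil\in\nat$, the new hypothesis forces $S$ to be finite, and divergence is immediate with $f_0=f_1\equiv 1$. Otherwise $B_g\subsetneq S$, and $g$ inherits Equation~(\ref{eq:div2}) from $f$: the bounded-step condition is invariant under a constant shift, and the drift condition on $S\setminus B_g\subseteq S\setminus B$ is still guaranteed by the original hypothesis on $f$. Theorem~\ref{theorem:sufficient-transientbis} applied to $g$ then yields, for every $s$ with $f(s)>M+dK$,
\[
\prob_{\mathcal M,s}(\mathbf{F} B_g)\;\leq\; c_1\, e^{-c_2(f(s)-M-dK)},
\]
and since $A\subseteq B_g$, the same inequality upper-bounds $\prob_{\mathcal M,s}(\mathbf{F} A)$.

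Finally I would define $f_1(s)=\min\bigl(1,\,c_1 e^{-c_2(f(s)-M-dK)}\bigr)$ when $f(s)>M+dK$, and $f_1(s)=1$ otherwise. Then $f_1$ is computable (as $f$ is, and $M,d,K,c_1,c_2$ are constants), and by construction $\prob_{\mathcal M,s}(\mathbf{F} A)\leq f_1(s)$ holds for every $s$: trivially in the first regime, and by the bound above in the second. For the finiteness condition of Proposition~\ref{proposition:f01}, note that $f_1(s)\geq\theta$ implies $f(s)\leq M+dK+\tfrac{1}{c_2}\ln(c_1/\theta)$, so the superlevel set $\{s\mid f_1(s)\geq\theta\}$ is contained in $\{s\mid f(s)\leq n\}$ for some $n\in\nat$ depending on $\theta$, and is therefore finite by the new assumption. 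The main piece of work is the shift trick that places $A$ inside the zero-set of a modified Lyapunov function together with the verification that Equation~(\ref{eq:div2}) is preserved; once past this point, Theorem~\ref{theorem:sufficient-transientbis} does the heavy lifting and the finiteness of superlevel sets comes directly from the additional hypothesis on $f$.
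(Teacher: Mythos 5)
Your proof is correct and follows essentially the same route as the paper's: shift $f$ by $M=\max(0,\max_{s\in A}f(s))$ so that $A$ lands in the zero-set of the shifted Lyapunov function, check that Equation~(\ref{eq:div2}) survives the translation, apply Theorem~\ref{theorem:sufficient-transientbis}, and feed the resulting exponential bound into Proposition~\ref{proposition:f01}. You are in fact slightly more careful than the paper on the edge cases ($A=\emptyset$, $B_g=S$) and on keeping the shift $M$ explicit in the exponent; the only point the paper adds is a remark that one should replace $c_1,c_2$ by computable approximations $c'_1\geq c_1$, $c'_2\leq c_2$ to make $f_1$ genuinely computable.
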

 \begin{proof}
First observe that one can compute $c'_1\geq c_1$ and $0<c'_2\leq c_2$
so that  for all $s\in S $ such that $f(s)>dK$,
$\prob_{\mathcal M,s}({\bf F} B)\leq c_1e^{-c_2(f(s)-dK)}$.
Fix some $s_0$ and some finite $A$. Let $a=\max(0,\max_{s\in A}(f(s))$.

\smallskip\noindent
Let $f_a(s)=f(s)-a$. Then $A\cup B \subseteq \{s \in S \mid f_a(s)\leq 0\}$.
Since $f_a$ is a translation of $f$, Equation~(\ref{eq:div2}) is satisfied for all $s \in S\setminus B$
and thus for all $s \in S\setminus \{s \mid f_a(s)\leq 0\}$.
So one can apply Theorem~\ref{theorem:sufficient-transientbis} for $f_a$.
Let us define $g(s)=c'_1e^{-c'_2(f(s)-dK)}$ 
or more precisely some computable function such that $g(s)\leq 2c'_1e^{-c'_2(f(s)-dK)}$. Then
since $A\subseteq  \{s \mid f_a(s)\leq 0\}$, $\prob_{\mathcal M,s}({\bf F} A)\leq g(s)$.
Furthermore since for all $n\in \nat$, $\{s \mid f(s)\leq n\}$ is finite, we deduce that for all $\theta>0$,
$\{s \mid g(s)\geq \theta\}$ is finite. So $g$ fulfills the condition of Proposition~\ref{proposition:f01} and $\mathcal M$
is divergent w.r.t. $s_0$ and $A$. 
\end{proof}

\bigskip

Now we study a more general and more involved extension whose proof requires standard results of martingale theory (the proof is available in Appendix). 
Note that we do not use it in the illustration of the following sections. 
The interest of this extension, i.e.,  the next theorem, lies in the fact that $f$
can now decrease by \emph{a non constant factor} i.e., some $O(f(s)^\alpha)$ for $0<\alpha<\frac 1 2$. 
 Observe that  $\sum_{n\geq 1} {\gamma}^{n^{1-2\alpha}}<\infty$ since $0<\gamma<1$ and $1-2\alpha>0$.

\begin{restatable}{theorem}{sufficienttransientbis} 
\label{theorem:sufficient-transient2}
Let $\mathcal M$ be a Markov chain and
 $f$ be a function from $S$ to $\real$ with $B=\{s \mid f(s)\leq 0\}$  fulfilling 
 $\emptyset \subsetneq B \subsetneq S$, $0<\varepsilon\leq 1$, $K\geq 2$, $K'\geq 0$ and $0\leq \alpha <\frac{1}{2}$
such that:
\begin{small}
\begin{equation}
\label{eq:div3}
\begin{aligned}
\mbox{for all }s\in S\setminus B\qquad f(s)+\varepsilon \leq \sum_{s'\in S} p(s,s')f(s')
 \mbox{ and }\\ \sum_{-K-K'f(s)^\alpha \leq f(s')-f(s)\leq K} \hspace*{-1cm}p(s,s')=1
\end{aligned}
\end{equation}
\end{small}
Then for all $s$ such that $f(s) \geq 2$:
$$\prob_{\mathcal M,s}({\bf F} B)\leq  \beta^{f(s)^{1-2\alpha}} \sum_{n\geq 1} {\gamma}^{n^{1-2\alpha}}$$
where  $0<\beta,\gamma<1$ are defined by:\\
\begin{footnotesize}
$\beta= e^{-\left(\frac{2(\varepsilon+K)K^\alpha}{\varepsilon}+ \frac{4K'(\varepsilon+K)K^\alpha}{\varepsilon^{1+\alpha}(1+\alpha)}
+ \frac{2K'^2K^{2\alpha}}{\varepsilon^{1+2\alpha}}(1+2\alpha)\right)^{-1}}$ 
and $\gamma=\beta^{\varepsilon^{1-2\alpha}}$\\
\end{footnotesize}
which implies  transience of $\mathcal M$ when it is irreducible. 
\end{restatable}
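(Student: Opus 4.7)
The plan is to extend the Azuma--Hoeffding style argument underlying Theorem~\ref{theorem:sufficient-transient} to the setting where increments of $f$ are no longer uniformly bounded but may be as negative as $-K - K'f(s)^\alpha$. The shape of the target bound $\beta^{f(s)^{1-2\alpha}}\sum_{n\geq 1}\gamma^{n^{1-2\alpha}}$ dictates the correct transformation: set $h(y) = y^{1-2\alpha}$ and work with $Y_n := h(f(X_n))$. Since the derivative of $h$ at $y$ is of order $y^{-2\alpha}$, a jump of $f$ of magnitude $\Theta(f(s)^\alpha)$ induces a jump of $Y$ of order $f(s)^\alpha \cdot f(s)^{-2\alpha} = f(s)^{-\alpha}$, which is uniformly bounded. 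Thus, after passing through $h$, the increments become controllable exactly as in the original theorem, at the price of a diminished (state-dependent) drift.

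First I would Taylor expand at $y = f(s)$. Writing $\delta = f(X_{n+1}) - f(s)$,
\[ h(f(s) + \delta) - h(f(s)) = (1-2\alpha) f(s)^{-2\alpha}\,\delta - \alpha(1-2\alpha) f(s)^{-1-2\alpha}\,\delta^2 + O\bigl(|\delta|^3 f(s)^{-2-2\alpha}\bigr). \]
Taking expectation and using the hypotheses $\mathbb{E}[\delta]\geq \varepsilon$ together with the a.s.\ bound $\delta^2 \leq (K + K'f(s)^\alpha)^2 = O(f(s)^{2\alpha})$, one obtains a positive drift for $Y$ of order $(1-2\alpha)\varepsilon f(s)^{-2\alpha}$ for $f(s)$ large enough, the correction being $O(f(s)^{-1})$ and therefore negligible. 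Together with the uniform bound on $|Y_{n+1} - Y_n|$, this places $Y$ in the standard regime for a one-step exponential supermartingale estimate with a suitable parameter $\beta \in (0,1)$.

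Second, I would transpose the concentration step of Theorem~\ref{theorem:sufficient-transient} to $Y$ rather than $f$. Since $\mathbb{E}[f(X_n)] \geq f(s) + n\varepsilon$ accumulated from the drift hypothesis, the event $\{f(X_n)\leq 0\}$ requires a deviation of size $\geq f(s)+n\varepsilon$; running the Azuma argument on the transformed (bounded-increment) process $Y$ yields a tail bound of the form $\beta^{(f(s)+n\varepsilon)^{1-2\alpha}}$ for $\prob_{\mathcal M,s}(f(X_n)\leq 0)$. Using the elementary inequality $(a+b)^{1-2\alpha}\geq \tfrac{1}{2}\bigl(a^{1-2\alpha}+b^{1-2\alpha}\bigr)$ valid for $a,b\geq 0$ and $0 < 1-2\alpha \leq 1$, and re-absorbing the factor $\tfrac{1}{2}$ into a fresh $\beta$, this splits multiplicatively as $\beta^{f(s)^{1-2\alpha}}\cdot \gamma^{n^{1-2\alpha}}$ with $\gamma = \beta^{\varepsilon^{1-2\alpha}}$. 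A union bound $\prob_{\mathcal M,s}(\mathbf{F}B)\leq \sum_{n\geq 1} \prob_{\mathcal M,s}(f(X_n)\leq 0)$ then yields the announced inequality, and irreducibility plus the conclusion $\prob_{\mathcal M,s}(\mathbf{F}B) < 1$ for $f(s)$ large gives transience.

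The main obstacle I anticipate is the bookkeeping of constants. The explicit form of $\beta$ in the statement decomposes as a sum of three terms inside the exponent, which should come respectively from (i) the linear drift contribution $(1-2\alpha)\varepsilon f^{-2\alpha}$ competing with the worst-case positive jump $K$, (ii) a cross term between the drift and the second-order variance correction that pairs $K$ with $K'$, and (iii) the pure variance term in $K'^2 K^{2\alpha}$. Matching these three contributions to the precise closed-form constants, while keeping the MGF inequality uniform in $s$ for all $f(s)\geq 2$ (which is what handles the boundary behaviour near $B$ where the Taylor remainder is not automatically negligible), is where most of the computational effort sits.
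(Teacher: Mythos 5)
Your strategy is genuinely different from the paper's, so let me first contrast them. The paper never rescales $f$: it keeps the submartingale $Y_n=f(X_n)-n\varepsilon$, observes that along an unstopped path $f(X_k)\leq f(X_0)+kK$ holds \emph{deterministically} (upward jumps are at most $K$), so the $k$-th increment is bounded by the time-dependent constant $c_k=\varepsilon+K+K'(f(X_0)+kK)^\alpha$, and feeds these $c_k$ into the generalized Azuma--Hoeffding inequality of Lemma~\ref{lemma:azuma}; since $\sum_{k\leq n}c_k^2=O(n^{1+2\alpha})$, the exponent $(a+n\varepsilon)^2/\sum_{k\leq n}c_k^2$ is of order $(a+n\varepsilon)^{1-2\alpha}$, and a union bound over $n$ finishes. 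Your alternative --- rescale by $h(y)=y^{1-2\alpha}$ so that increments of $Y_n=h(f(X_n))$ become uniformly bounded --- is a legitimate Lamperti-type manoeuvre, and your first step is essentially sound: the drift of $Y$ is $\Theta(f^{-2\alpha})$ while $h'(f)^2(K+K'f^{\alpha})^2=O(f^{-2\alpha})$ as well, so drift and squared increments decay at the \emph{same} rate and $e^{-\lambda Y_n}$ is a supermartingale for some uniform $\lambda>0$. (Your correction $(a+b)^{1-2\alpha}\geq\tfrac12(a^{1-2\alpha}+b^{1-2\alpha})$ is also the right way to split the exponent; note that $y\mapsto y^{1-2\alpha}$ is subadditive, not superadditive.)

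The gap is in your second step. To ``transpose the concentration step to $Y$'' you need to know where $Y_n$ is centered, i.e.\ a lower bound on the accumulated drift of $Y$ --- not of $f$. Jensen gives $\mathbb{E}[h(f(X_n))]\leq h(\mathbb{E}[f(X_n)])$, the wrong direction, and the one-step drift of $Y$ at state $s$ is only $\Theta(f(s)^{-2\alpha})$, which has no positive uniform lower bound; the compensator of $Y$ after $n$ steps is therefore a random quantity that cannot be replaced by $h(f(s)+n\varepsilon)-h(f(s))$ without an additional argument. Moreover, even granting that centering, a plain Azuma bound on the bounded-increment process $Y$ gives a tail $\exp\bigl(-h(f(s)+n\varepsilon)^2/(2nc^2)\bigr)$, whose exponent is of order $(f(s)+n\varepsilon)^{2-4\alpha}/n\sim n^{1-4\alpha}$; for $\tfrac14\leq\alpha<\tfrac12$ this does not tend to infinity, the series over $n$ diverges, and you do not recover the claimed per-$n$ tail $\beta^{(f(s)+n\varepsilon)^{1-2\alpha}}$. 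The repair inside your own framework is to carry the exponential supermartingale of your first step all the way through: applying optional stopping to $e^{-\lambda\max(f(X_{n\wedge\tau}),0)^{1-2\alpha}}$ yields $\prob_{\mathcal M,s}({\bf F} B)\leq e^{-\lambda f(s)^{1-2\alpha}}$ directly, with no union bound over $n$ at all (indeed a cleaner bound than the stated one). As written, however, your argument switches from that supermartingale back to a per-$n$ Azuma scheme that does not fit the transformed process, and that switch is where the proof breaks.
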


\bigskip

From this theorem, one gets another sufficient condition for divergence that 
we state here without proof since it follows the same lines as the one of Proposition~\ref{proposition:sufficient-divergentbis}.

\begin{proposition}  
\label{proposition:sufficient-divergent2}
Let $\mathcal M$ be a Markov chain and
 $f$ be a computable function from $S$ to $\real$ with $B=\{s \mid f(s)\leq 0\}$  fulfilling
  $\emptyset \subsetneq B \subsetneq S$, and for some $0<\varepsilon\leq 1$, $K\geq 2$, $K'\geq 0$ and $0\leq \alpha <\frac{1}{2}$, 
 Equation~(\ref{eq:div3}). Assume in addition that for all $n\in \nat$, $\{s \mid f(s)\leq n\}$
 is finite. Then $\mathcal M$ is divergent w.r.t. any $s_0$ and any finite $A$. 
 \end{proposition}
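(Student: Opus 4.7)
The plan is to mimic the proof of Proposition~\ref{proposition:sufficient-divergentbis}, substituting Theorem~\ref{theorem:sufficient-transient2} for Theorem~\ref{theorem:sufficient-transientbis} and concluding via Proposition~\ref{proposition:f01}. First I would fix an arbitrary initial state $s_0\in S$ and finite target $A\subseteq S$; the case $A=\emptyset$ is trivial, so I assume $A\neq\emptyset$ and set $a=\max(0,\max_{s\in A}f(s))$, which is well-defined by finiteness of $A$. Put $f_a = f-a$ and $B_a=\{s\mid f_a(s)\leq 0\}$, so that $A\cup B\subseteq B_a$ and hence $\prob_{\mathcal M,s}(\mathbf{F}A)\leq \prob_{\mathcal M,s}(\mathbf{F}B_a)$ for every $s$. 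The goal is then to bound the right-hand side uniformly via Theorem~\ref{theorem:sufficient-transient2} applied to the translated function $f_a$.

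The one genuinely new technical step, compared to the translation-invariant setting of Proposition~\ref{proposition:sufficient-divergentbis}, is verifying that $f_a$ still satisfies Equation~(\ref{eq:div3}) on $S\setminus B_a$. The drift inequality transports by pure translation, since $S\setminus B_a\subseteq S\setminus B$ and $\sum_{s'}p(s,s')f_a(s') = \sum_{s'}p(s,s')f(s') - a \geq f(s)+\varepsilon - a = f_a(s)+\varepsilon$. The bounded-jump condition is the subtle point, because its lower bound $-K-K'f(s)^\alpha$ involves $f(s)$, not $f_a(s)$. Here I would invoke subadditivity of $x\mapsto x^\alpha$ on $\real_{\geq 0}$ (valid because $0\leq\alpha\leq 1$), which yields $(f_a(s)+a)^\alpha \leq f_a(s)^\alpha + a^\alpha$ and therefore
\[
-K-K'f(s)^\alpha \;\geq\; -(K+K'a^\alpha)-K'f_a(s)^\alpha .
\]
Hence $f_a$ satisfies Equation~(\ref{eq:div3}) with the same $\varepsilon$, $K'$ and $\alpha$, and with the enlarged upper constant $\widetilde K := K+K'a^\alpha\geq K\geq 2$, which is exactly the hypothesis of Theorem~\ref{theorem:sufficient-transient2}.

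Applying that theorem produces effective constants $\tilde\beta,\tilde\gamma\in(0,1)$ such that, whenever $f_a(s)\geq 2$,
\[
\prob_{\mathcal M,s}(\mathbf{F}A)\;\leq\;\prob_{\mathcal M,s}(\mathbf{F}B_a)\;\leq\;\tilde\beta^{\,f_a(s)^{1-2\alpha}}\sum_{n\geq 1}\tilde\gamma^{\,n^{1-2\alpha}} .
\]
I would define $f_1(s)$ to be a computable upper bound of this quantity for such $s$, and $f_1(s)=1$ otherwise; $f_1$ is computable because $f$ is computable, $\tilde\beta,\tilde\gamma$ are effectively approximable from the data, and the convergent series $\sum_{n\geq 1}\tilde\gamma^{n^{1-2\alpha}}$ can be enclosed effectively by splitting at a sufficiently large cutoff and bounding the remainder explicitly. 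Finally, $f_1(s)\geq\theta$ forces $f_a(s)$, hence $f(s)$, below an explicit bound depending on $\theta$, so the hypothesis that $\{s\mid f(s)\leq n\}$ is finite for every $n$ makes $\{s\mid f_1(s)\geq\theta\}$ finite. Proposition~\ref{proposition:f01} applied with this $f_1$ (and $f_0\equiv 1$) then yields divergence of $\mathcal M$ w.r.t.\ $s_0$ and $A$. The main obstacle is exactly the translation step: the term $K'f(s)^\alpha$ is not translation-invariant, and absorbing the shift into the enlarged upper constant $K+K'a^\alpha$ via subadditivity of $x^\alpha$ is the crucial adjustment that lets the skeleton of the proof of Proposition~\ref{proposition:sufficient-divergentbis} go through in this setting.
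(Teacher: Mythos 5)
Your proof is correct and follows exactly the route the paper intends: the paper omits the proof of this proposition, stating only that it ``follows the same lines'' as Proposition~\ref{proposition:sufficient-divergentbis}, and your argument is precisely that proof with Theorem~\ref{theorem:sufficient-transient2} substituted for Theorem~\ref{theorem:sufficient-transientbis} and the conclusion drawn via Proposition~\ref{proposition:f01}. Your subadditivity step handling the non-translation-invariant lower jump bound $-K-K'f(s)^\alpha$ (absorbing the shift into $\widetilde K=K+K'a^\alpha$ using $(f_a(s)+a)^\alpha\leq f_a(s)^\alpha+a^\alpha$) is a genuine detail that the paper's ``same lines'' remark silently skips over, and it is exactly the right fix.
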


% !TEX root = main.tex
\subsection{Probabilistic channel systems}
\label{subsec:pCS}

Channel systems are a classical model for protocols where components communicate 
asynchronously via FIFO channels~\cite{Brand1983,Bolch98}. Let $\Sigma$ be an alphabet, we use  letters $a,b,c,x,y$ for items in $\Sigma$ and $w$ for a word in $\Sigma^*$.
Let $w\in \Sigma^*$, then $|w|$ denotes its length and $\lambda$ denotes the empty word.
One denotes $\Sigma_{\lambda}=\Sigma \cup \{\lambda\}$.
$\Sigma^{\leq k}$ is the set of words over $\Sigma$ 
with length at most $k$.

Now we introduce a probabilistic variant of channel systems
particularly appropriate for the modelling of open queuing networks.
Here a special input channel $c_{in}$ (that works as a counter) only 
receives the arrivals of anonymous clients all denoted by $\$$ (item 1 of the next definition).
Then the service of a client  corresponds to a message circulating between the other channels
with possibly change of message identity until the message disappears (items 2 and 3).

\begin{definition}
A  \emph{probabilistic open channel system (pOCS)} $\mathcal S=(Q, Ch,\Sigma,\Delta,W)$ is defined by:
\begin{itemize}
	\item a finite set $Q$ of states;
	\item a finite set $Ch$ of channels, including $c_{in}$;
	\item a finite alphabet $\Sigma$ including  $\$$;
	\item a transition relation $\Delta\ \subseteq Q \times Ch \times  \Sigma_{\lambda}  \times Ch  \times  \Sigma_{\lambda} \times Q$
	that fulfills:
	\begin{enumerate}
		\item For all $q\in Q$, $(q,c_{in},\lambda ,c_{in},\$,q) \in \Delta$;
		\item For all $(q,c,a,c',a',q') \in \Delta$,  $a=\lambda \Rightarrow a'=\$ \wedge c=c'=c_{in}$;
		\item For all $(q,c,a,c',a',q') \in \Delta$,  $c\neq c_{in}\Rightarrow c'\neq c_{in}$;
	\end{enumerate}
	\item $W$ is a function from $\Delta \times (\Sigma^*)^{Ch}$ to $\rat_{>0}$.
\end{itemize}
\end{definition}

\noindent
{\bf Interpretation.}
We consider three types of transitions between configurations: sending messages to the input channel (i.e., $(q,c_{in},\lambda ,c_{in},\$,q)$) representing client arrivals; 
transferring messages between different channels (i.e., $(q,c,a,c',a',q')$ with $\lambda\not\in \{a,a'\}$ and $c'\neq c_{in}$) describing client services; 
and terminating message processing
(i.e., $(q,c,a,c',\lambda,q')$ with $a\neq \lambda$) meaning client departures.  All messages entering $c_{in}$ are anonymous (i.e., denoted by $\$$).
The left part of Figure~\ref{fig:pCS} is a schematic view of 
such systems. The left channel is $c_{in}$. All dashed lines represent message arrivals (to $c_{in}$) or departures. 
The solid lines model message transferrings.

\tikzset{
queue/.pic={
  \draw[line width=1pt]
    (0,0) -- ++(2.75cm,0) -- ++(0,-1cm) -- ++(-2.75cm,0);
   \foreach \Val in {1,...,#1}
     \draw ([xshift=-\Val*10pt]2.75cm,0) -- ++(0,-1cm); 
  }}

 \begin{figure}[h!]
 \begin{center}
\begin{tikzpicture}[scale=0.6] %[>=latex]
% the shapes
\path 
  (0,1.5cm) pic {queue=3}
   (-5,0cm) pic {queue=3}
  (0,-1.5cm) pic {queue=6}; 

% the arrows
\foreach \Pos in {-2,1}
{
  \draw[line width=1pt][->] (-1, -0.5+\Pos*0.1) -- (0,\Pos);   
 \draw[line width=1pt][dashed][->] (3.9,\Pos-0.1) -- (5.1, \Pos-0.1);   
}
\draw[line width=1pt][dashed][->] (-5.3, -0.8) -- (-3.9,-0.8);
\draw[line width=1pt][->] (1, -0.9) -- (1,-1.5);
\draw[line width=1pt][->] (3.5,-0.9)--(3.5, -0.3); 

\draw[line width=2pt][dotted] (1.5, -0.9) -- (3,-0.9); 
\end{tikzpicture}
 
\end{center}

\caption{A schematic view of pOCS}

\label{fig:pCS}
\end{figure}

The next definitions formalize the semantic of pOCS.

\begin{definition}
Let $\mathcal S$ be a pOCS, $(q,\nu)\in Q \times (\Sigma^*)^{Ch}$ be a configuration and $t=(q,c,a,c',a',q')\in \Delta$.
Then $t$ is enabled in $(q,\nu)$ if $\nu(c)=aw$ for some $w$.
The firing of $t$ in  $(q,\nu)$ leads to $(q',\nu')$ defined by:
\begin{itemize}
	\item if $c=c'$ then $\nu'(c)=wa'$ and for all $c''\neq c$, $\nu'(c'')=\nu(c'')$;
	\item if $c\neq c'$ then $\nu'(c)=w$, $\nu'(c')=\nu(c')a'$ and for all $c''\notin \{c,c'\}$, $\nu'(c'')=\nu(c'')$.
\end{itemize}
\end{definition}
As usual one denotes the firing by $(q,\nu) \xrightarrow{t} (q',\nu')$. Observe that from any configuration
at least one transition (a client arrival) is enabled.

\begin{definition}
Let $\mathcal S$ be a pOCS. Then the Markov chain $\mathcal M_\mathcal S=(S_\mathcal S, \suiv_\mathcal S)$
is defined by:
\begin{itemize}
	\item $S_\mathcal S=Q \times (\Sigma^*)^{Ch}$ is the set of configurations; 
	\item For all $(q,\nu) \in S_\mathcal S$ 
	let {\small $W(q,\nu)=\sum_{(q,\nu)\xrightarrow{t}(q',\nu')} W(t,\nu)$}. Then: \\
	for all {\footnotesize $(q,\nu)\xrightarrow{t}(q',\nu')$}, {\footnotesize $\suiv_\mathcal S((q,\nu),(q', \nu'))\!=\!\frac{W(t,\nu)}{W(q,\nu)}$}.
\end{itemize}

\end{definition}

The restrictions on pOCS w.r.t.standard channel systems, shortly CS, do not change the status of the reachability problem.

\begin{restatable}{proposition}{pocsundecidability} 
\label{proposition:pOCS-undecidability}
The reachability problem of pOCS is undecidable. 
\end{restatable}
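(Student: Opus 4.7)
The plan is to reduce the reachability problem for ordinary (non-restricted) channel systems, which is classically undecidable, to the reachability problem of pOCS. Given a CS $\mathcal{C}=(Q,Ch,\Sigma,\Delta)$ with initial state $q_0$ and target state $q_f$, I build a pOCS $\mathcal{S}$ whose non-$c_{in}$ channels mirror those of $\mathcal{C}$, such that $q_f$ is reachable from the empty configuration of $\mathcal{S}$ iff it is reachable from the empty configuration of $\mathcal{C}$. Without loss of generality I assume $\mathcal{C}$ uses only pure-send ($c!a$) and pure-receive ($c?a$) transitions, a form for which undecidability already holds; the weight function $W$ of $\mathcal{S}$ is irrelevant for reachability and can be set to the constant $1$.

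The construction extends $\mathcal{C}$'s state set by a fresh intermediate state $q_t$ for every pure-send $t$ of $\mathcal{C}$, and adds a fresh channel $c_{in}$ together with a fresh letter $\$$. The transitions of $\mathcal{S}$ are: (i) the mandatory self-loops $(q,c_{in},\lambda,c_{in},\$,q)$ required by condition~(1) at every state; (ii) for each pure-receive $(q,c?a,q')$ of $\mathcal{C}$, the direct translation $(q,c,a,c,\lambda,q')$, which satisfies conditions~(2) and~(3) since $a\neq\lambda$ and $c=c'\neq c_{in}$; and (iii) for each pure-send $t=(q,c!a,q')$, the pair $(q,c_{in},\lambda,c_{in},\$,q_t)$ (\emph{step~1}, satisfying condition~(2) with $a=\lambda$, $a'=\$$, $c=c'=c_{in}$) followed by $(q_t,c_{in},\$,c,a,q')$ (\emph{step~2}, satisfying condition~(3) vacuously because reading from $c_{in}$ lifts its premise). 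Since the only symbols ever written to $c_{in}$ are $\$$'s, the FIFO buffer $c_{in}$ always contains only $\$$'s, so step~2 can always fire once step~1 has been executed.

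For correctness, the forward direction $\mathcal{C}\Rightarrow\mathcal{S}$ is an immediate simulation: each $\mathcal{C}$-transition is replayed by one or two transitions of $\mathcal{S}$, preserving the contents of the original channels. The reverse direction rests on the observation that from an intermediate state $q_t$ the only outgoing transitions are its mandatory self-loop and its matching step~2; hence if an $\mathcal{S}$-run terminates in $q_f\in Q$, every step~1 is followed by its unique step~2 before leaving the intermediate state. Erasing the self-loops and collapsing each step-1/step-2 pair into the original pure-send then yields a valid $\mathcal{C}$-run reaching $q_f$ whose channel contents coincide with the non-$c_{in}$ contents in $\mathcal{S}$ throughout.

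The main subtlety --- and the only place where the pOCS restrictions could a priori cause trouble --- is the order in which $\$$'s are enqueued into and dequeued from $c_{in}$: because $c_{in}$ is homogeneous, step~2 never reads an ``unexpected'' letter, and the simulation on $Ch$ is never disturbed by the mandatory self-loops firing arbitrarily many times along the way. Everything else (classical undecidability of CS reachability, the bookkeeping of the projection, and the fact that restricting to pure-send/pure-receive transitions preserves undecidability) is routine.
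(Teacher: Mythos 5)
Your reduction is correct and is essentially the paper's own proof: both translate each receive $c?a$ into a departure transition $(q,c,a,c,\lambda,q')$ and each send $c!a$ into a transfer of a $\$$ from $c_{in}$ into $c$, exploiting the fact that $c_{in}$ only ever holds $\$$'s so the extra mandatory arrival loops never disturb the simulation on the original channels. The only difference is cosmetic: the paper keeps $Q'=Q$ and lets the mandatory self-loop at $q$ supply the needed $\$$ before firing the single transition $(q,c_{in},\$,c,a,q')$, whereas you introduce a fresh intermediate state $q_t$ and a non-self-loop $\lambda$-input transition $(q,c_{in},\lambda,c_{in},\$,q_t)$ --- which conditions (1)--(3) of the definition do permit, but which is not needed.
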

\begin{proof}
In this proof, to precise that a transition occurs in a (pO)CS $\mathcal C$,
we add the subscript $\mathcal C$ to $\rightarrow$ (and to $\rightarrow^*$, its transitive closure).

We reduce the reachability problem of CS to the one of pOCS.
Let $\mathcal C$ be a CS where w.l.o.g.
$c_{in}\notin Ch$ and $\$ \notin \Sigma$. Then the pOCS $\mathcal C'$ is defined as follows:
\begin{itemize}
	\item $Q'=Q$; $Ch'=Ch\cup \{c_{in}\}$; $\Sigma'=\Sigma \cup \{\$\}$;
	\item For all $(q,c,?,a,q')\in \Delta$, $(q,c,a,c,\lambda,q')\in \Delta'$;	
	\item For all $(q,c,!,a,q')\in \Delta$, $(q,c_{in},\$,c,a,q')\in \Delta'$;	
	\item For all $q\in Q$, $(q,c_{in},\lambda,c_{in},\$,q) \in \Delta'$.
\end{itemize}
Consider the relations $\sim$ and $\sim_0$ between a configuration $(q,\nu)$ of $\mathcal C$
and a configuration $(q',\nu')$ of $\mathcal C'$ defined by (1)
$(q,\nu)\sim(q',\nu')$ if $q=q'$ and for all $c\in Ch$, $\nu(c)=\nu'(c)$, and (2)
$(q,\nu)\sim_0(q',\nu')$ if $(q,\nu)\sim(q',\nu')$ and $\nu'(c_{in})=\lambda$.

\noindent
Due to the definition of $\mathcal C'$, for all $(q,\nu)\sim_0(q',\nu')$,
$(q,\nu)\xrightarrow{t}_{\mathcal C}(q_1,\nu_1)$ there exists  
$(q'_1,\nu'_1)$ such that $(q_1,\nu_1)\sim_0(q'_1,\nu'_1)$
and $(q',\nu')\rightarrow^*_{\mathcal C'}(q'_1,\nu'_1)$. 

\noindent
Similarly, for all $(q,\nu)\sim(q',\nu')$,
$(q',\nu')\xrightarrow{t}_{\mathcal C'}(q'_1,\nu'_1)$ there exists  
$(q_1,\nu_1)$ such that $(q_1,\nu_1)\sim(q'_1,\nu'_1)$
and $(q,\nu)\rightarrow^*_{\mathcal C}(q_1,\nu_1)$. 

\noindent
Now consider $(q_0,\nu_0)$ (resp. $(q_f,\nu_f)$)  a configuration of $\mathcal C$
and $(q'_0,\nu'_0)$ (resp. $(q'_f,\nu'_f)$)  the single configuration of $\mathcal C'$
such that $(q_0,\nu_0)\sim_0(q'_0,\nu'_0)$ (resp. $(q_f,\nu_f)\sim_0(q'_f,\nu'_f)$).
Then due to the simulation properties above,  $(q_f,\nu_f)$ is reachable
from $(q_0,\nu_0)$ in $\mathcal C$ iff $(q'_f,\nu'_f)$ is reachable
from $(q'_0,\nu'_0)$ in $\mathcal C'$.
\end{proof}

\bigskip

As discussed in the introduction, when the number of clients exceeds some threshold,
the performances of the system  drastically decrease and thus the ratio of arrivals
w.r.t. the achievement of a task increase. We formalize it by introducing
\emph{uncontrolled} pOCS where the weights of transitions are constant except the ones
of client arrivals which are specified by positive non constant polynomials.
Let $\nu \in (\Sigma^*)^{Ch}$. Then $|\nu|$ denotes $\sum_{c\in Ch}|\nu(c)|$. 

\begin{definition} Let $\mathcal S$ be a pOCS. Then $\mathcal S$ is \emph{uncontrolled} if:
\begin{itemize}
	\item For all $t=(q,c,a,c',a',q')\in \Delta$ with $a\neq \lambda$, $W(t,\nu)$ only depends on $t$
	and will be denoted $W(t)$;
	\item For all $t=(q,c_{in},\lambda,c_{in},\$,q)$,  $W(t,\nu)$
is a positive non constant polynomial, whose single variable is $|\nu|$,
and will be denoted $W_{in}(q, |\nu|)$.
\end{itemize}
\end{definition}
  
The next proposition establishes that an uncontrolled pOCS generates a divergent Markov chain.
This model illustrates the interest of divergence: while reachability of a pOCS is undecidable, 
we can apply Algorithm~\ref{algo:prob-reach-divbis}. 

\begin{restatable}{proposition}{divuncontrolledocs} 
\label{proposition:divergence-uncontrolled}
Let  $\mathcal S$ be a uncontrolled pOCS. Then $\mathcal M_\mathcal S$ is divergent.  
\end{restatable}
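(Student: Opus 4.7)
The plan is to apply Proposition~\ref{proposition:sufficient-divergentbis} with the Lyapunov-style function $f(q,\nu)=|\nu|-N_0$, for a sufficiently large threshold $N_0\in\nat$ to be fixed below. First I would verify the bounded-increment condition. Inspecting the three kinds of transitions of a pOCS: an arrival $(q,c_{in},\lambda,c_{in},\$,q)$ increases $|\nu|$ by $1$; a transfer $(q,c,a,c',a',q')$ with $a,a'\in\Sigma$ leaves $|\nu|$ unchanged (one letter is removed, one letter is appended); and a departure $(q,c,a,c',\lambda,q')$ with $a\in\Sigma$ decreases $|\nu|$ by $1$. Hence $|f(s')-f(s)|\le 1$ along every one-step transition, so I can take $K=1$ and $d=1$ in Equation~(\ref{eq:div2}).

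Next I would establish the positive drift. Since $\mathcal S$ is uncontrolled, the total weight of non-arrival transitions enabled in any configuration is upper-bounded by the constant
\[
C \;=\; \sum_{(q,c,a,c',a',q')\in\Delta,\ a\neq\lambda} W(q,c,a,c',a',q'),
\]
while the arrival weight $W_{in}(q,|\nu|)$ is a positive non-constant polynomial in $|\nu|$ and therefore tends to $+\infty$ with $|\nu|$, uniformly over the finite set $Q$. Consequently the arrival probability from $(q,\nu)$ is at least $W_{in}(q,|\nu|)/(W_{in}(q,|\nu|)+C)$, which tends to $1$ as $|\nu|\to\infty$. Choose $N_0$ large enough that this probability exceeds $3/4$ whenever $|\nu|\ge N_0$. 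Then for every $s=(q,\nu)\in S\setminus B$ where $B=\{s\mid f(s)\le 0\}=\{(q,\nu)\mid |\nu|\le N_0\}$,
\[
\sum_{s'} \suiv_{\mathcal S}(s,s')\,f(s') \;=\; f(s)+\Pr[\text{arrival}\mid s]-\Pr[\text{departure}\mid s] \;\ge\; f(s)+\tfrac{1}{2},
\]
so the drift condition of Equation~(\ref{eq:div2}) holds with $\varepsilon=\tfrac{1}{2}$.

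It remains to check that $\{s\mid f(s)\le n\}$ is finite for every $n\in\nat$. This set equals $\{(q,\nu)\mid |\nu|\le n+N_0\}$; since $Q$ and $Ch$ are finite and $\Sigma$ is a finite alphabet, only finitely many tuples of words of total length at most $n+N_0$ exist, so finiteness is immediate. Moreover $B$ is nonempty (it contains any configuration with $|\nu|\le N_0$) and a proper subset of $S$ (configurations with $|\nu|>N_0$ are reachable by iterating the arrival transition). All hypotheses of Proposition~\ref{proposition:sufficient-divergentbis} are therefore met, and we conclude that $\mathcal M_{\mathcal S}$ is divergent w.r.t.\ any initial configuration and any finite target set.

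The only genuinely substantive step is the drift estimate; it hinges on the two defining features of ``uncontrolled'' pOCS, namely that service/departure weights are constants (giving the uniform bound $C$) and that $W_{in}(q,\cdot)$ is a polynomial of positive degree (forcing the arrival probability to $1$). Everything else reduces to combinatorics on configurations, and no extra martingale machinery beyond what is already encapsulated in Proposition~\ref{proposition:sufficient-divergentbis} is required.
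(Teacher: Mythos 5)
Your proof is correct and follows essentially the same route as the paper: both apply Proposition~\ref{proposition:sufficient-divergentbis} with $d=1$, $K=1$, the Lyapunov function equal to the total channel content, and the observation that the constant bound on non-arrival weights is eventually dominated by the polynomially growing arrival weight. Your shift of $f$ by $N_0$ is in fact a slightly more careful handling of the requirement that the drift inequality hold on all of $S\setminus B$ (the paper's unshifted $f(q,\nu)=|\nu|$ leaves the configurations with $1\le|\nu|<n_0$ formally uncovered), but this is a minor refinement rather than a different approach.
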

\begin{proof} 
We show that the conditions of Proposition~\ref{proposition:sufficient-divergentbis} are satisfied with $d=1$.
Let $n_0$ be an integer such that for all $n\geq n_0$,\\
$$\min_{q \in Q}(W_{in}(q, n))\geq 1+ \sum_{t =(q,c,a,c',a',q')\in \Delta\wedge a'\neq \$} W(t)$$
Let $f(q,\nu)=|\nu|$. Then Equation~\ref{eq:div2} is satisfied with $\varepsilon=\frac{1}{1+2 \sum_{t =(q,c,a,c',a',q')\in \Delta\wedge a'\neq \$} W(t)}$, 
$d=1$ and $K=1$.
Furthermore let $n\in \nat$, $\{(q,\nu)\mid |\nu|\leq n\}$ is finite. Thus $\mathcal M_\mathcal S$ is divergent.
\end{proof}

\bigskip

\noindent
{\bf Illustration.}
Open queueing network models combined with numerical solution methods are a very powerful paradigm for the 
performance evaluation of production systems.
Let us consider a simple production system~\cite{Bolch05} composed by the following components:
\begin{itemize}
	\item  a load station where the workpieces are mounted on to the pallet (LO);
	\item  some identical lathes (LA);
	\item  some identical milling machines (M);
	\item  a station to unload the pallet that removes the workpieces from the system (U).
\end{itemize} 
This open queueing network model is depicted in Figure~\ref{fig:App}. 

 \begin{figure}[h!]
 \begin{center}
\begin{tikzpicture}[scale=0.6] %[>=latex]
% the shapes
\path 
  (1.3,2cm) pic {queue=3} 
   (-5,0cm) pic {queue=6}
  (1.3,-2cm) pic {queue=3}
    (8.7,0cm) pic {queue=5};
  
\node at (-2.5,-0.8cm) (nodeLO) {LO};
\node at (3,1.2cm) (nodeM) {M};
\node at (3,-2.7cm) (nodeLA) {LA};
\node at (9.7,-0.8cm) (nodeU) {U};

% the arrows
\draw[line width=1pt][dashed][->] (12.9,-0.8) -- (15, -0.8); 
\draw[line width=1pt][dashed][->] (-6.3, -0.8) -- (-4.5,-0.8);

\draw[line width=1pt][->]  (-0.5, 0) -- (1,1.5)  -- (2,1.5) ;
\draw[line width=1pt][->]  (-0.5, -1.7) -- (1,-3.2)  -- (2,-3.2) ;
\draw[line width=1pt][->]  (6, -2.5) -- (7,-2.5) -- (7,-1.5) -- (1,0) -- (1,1)-- (2,1)  ;
\draw[line width=1pt][->]  (6, 0.7) -- (7,0.7) -- (7,-0.5) -- (1,-1.5) -- (1,-2.5)-- (2,-2.5)  ;
\draw[line width=1pt][->]  (6, 1.5)  -- (7.5,1.5)-- (7.5,-0.5)  -- (9,-0.5) ;
\draw[line width=1pt][->]  (6, -3.3)  -- (7.5,-3.3)-- (7.5,-1.2)  -- (9,-1.2) ;

\end{tikzpicture}
 
\end{center}

\caption{A simple production system}

\label{fig:App}
\end{figure}

In this system, the workpieces are mounted on the pallet in the load station $LO$, which 
are then transferred either to the milling machines $M$ or to the lathes $LA$. The workpieces leaving 
$M$ are passed either to the station that unloads the pallet $U$ or to the lathes $LA$. 
Similarly, the workpieces leaving $LA$ are passed either to $U$ or to $M$.

This system becomes uncontrolled pOCS if the weights of transitions representing workpieces 
arrivals are specified by positive non constant polynomials whose single variable is the total number of 
workpieces in the system, and all other transitions have constant weight.

\subsection{Probabilistic pushdown automata}

Pushdown automata are finite automata equipped with a stack, more expressive than finite automata but less  
than Turing machines~\cite{SCHUTZENBERGER1963}. 
We now introduce a subclass of probabilistic  pushdown automata that are divergent by construction.

\begin{definition}[pPDA] 
A \emph{(dynamic-)probabilistic pushdown automaton} (pPDA) is a tuple $\mathcal A= (Q, \Sigma,\Delta,W)$
where:
\begin{itemize}
	\item $Q$ is a finite set of control states;
	\item $\Sigma$ is a finite stack alphabet with $Q \cap \Sigma= \emptyset$;
	\item $\Delta$ is a subset of $Q \times \Sigma^{\leq 1} \times Q \times \Sigma^{\leq 2}$
	such that for all $(q,\lambda,q',w) \in \Delta$, $|w| \leq 1$;
	\item $W: \Delta \times \Sigma^* \to \rat_{> 0}$ is a computable function. 
\end{itemize}
\end{definition}

Observe that  within the version of PDA presented here, the size of the stack can vary by at most one when firing a transition.
In the version of pPDA presented in~\cite{Esparza06}, the weight function $W$ goes from $\Delta$ to $\rat_{>0}$.
In order to emphasize this restriction here and later we say that, in this case, the weight function is \emph{static} and the corresponding models
will be called static pPDA.  In what follows, pPDA denotes the dynamic version.
$W$ may also be seen as a finite set of functions $W_t: \Sigma^* \to \rat_{> 0}$, for all $t \in \Delta$.

As specified in the next definition, a transition $(q,\lambda,q',w) \in \Delta$ is only enabled when the stack is empty. An item $(q,a,q',w)$ of $\Delta$ is also denoted $q\xrightarrow{?a!w} q'$, 
the label $?\lambda !a$ is also simply denoted $!a$ while for $a\neq \lambda$ the label $?a!\lambda$ 
 is denoted by $?a$.
A \emph{configuration} of $\mathcal A$ is a pair $(q,w)\in Q \times \Sigma^*$.

\begin{definition}
Let  $\mathcal A= (Q, \Sigma,\Delta,W)$ be a pPDA. Then the Markov chain $\mathcal M_\mathcal A=(S_\mathcal A, \suiv_\mathcal A)$ is a pair where $S_\mathcal A=Q \times \Sigma^*$ and $\suiv_\mathcal A$ is defined by the four following rules,
 for all $q \in Q$ and $a\in \Sigma$:
\begin{itemize}
	\item If $\{t=q\xrightarrow{!w_t} q'\}_{t\in \Delta}=\emptyset$, one has $\suiv_\mathcal A( (q,\lambda),(q,\lambda))=1$;
	\item If $\{t=q\xrightarrow{!w_t} q'\}_{t\in \Delta}\neq\emptyset$,
	let {\small $V(q,\lambda)=\sum_{\hspace*{-0.5cm}t=q\xrightarrow{!w_t} q'} W(t,\lambda)$}. \\ Then:
        for all {\footnotesize $t\!=\!q\xrightarrow{!w_t} q'\!\!\in\!\Delta$}, one has {\footnotesize $\suiv_\mathcal A((q,\lambda),(q',w_t))\!=\!\frac{W(t,\lambda)}{V(q,\lambda)}$}
	\item If $\{t=q\xrightarrow{?a!w_t} q'\}_{t\in \Delta}=\emptyset$, one has
	$\suiv_\mathcal A( (q,wa),(q,wa))=1$;
	\item If $\{t=q\xrightarrow{?a!w_t} q'\}_{t\in \Delta}\neq\emptyset$,
	let {\small $V(q,wa)=\sum_{\hspace*{-0.5cm}t=q\xrightarrow{?a!w_t} q'} W(t,wa)$}. \\ Then:
	for all {\footnotesize $t\!=\!q\xrightarrow{?a!w_t} q'\!\!\in\!\Delta$}, one has {\footnotesize $\suiv_\mathcal A((q,wa),(q',ww_t))\!=\!\frac{W(t,wa)}{V(q,wa)}$}
\end{itemize}

\end{definition}

Let us define the \emph{increasing reachability relation} between states in $Q\times \Sigma$. For $(q,a), (q',a')\in Q\times \Sigma$, we say that $(q',a')$ is \emph{h-reachable} 
(i.e. reachable without changing the size of the stack) from $(q,a)$
if either $(q,a)=(q',a')$ or there is a sequence of transitions $(t_i)_{0\leq i<d}$ of $\Delta$ such that: 
$t_i\!=\!q_i\xrightarrow{?a_i!a_{i+1}} q_{i+1}$, 
$(q_0,a_0)\!=\!(q,a)$, 
$(q_d,a_{d})\!=\!(q',a')$ and for all $i$ one have $a_i\neq \lambda$.

We introduce the subset of \emph{increasing pairs}, denoted as  $Inc(\mathcal A) \subseteq Q\times \Sigma$ that contains pairs $(q,a)$ such that from every configuration
$(q,wa)$ with $w \in \Sigma^*$, the height of the stack can increase without decreasing before. 
When some conditions on  $Inc(\mathcal A)$ are satisfied, we obtain a syntactic \emph{sufficient} condition for $\mathcal M_\mathcal A$
to be divergent.

\begin{definition}
 
\noindent
The set of \emph{increasing pairs} $Inc(\mathcal A)\subseteq Q\times \Sigma$ of a pPDA  $\mathcal A= (Q, \Sigma,\Delta,W)$ is the set of pairs $(q,a)$
that can h-reach a pair $(q',a')$ 
with some $q'\xrightarrow{?a'!bc} q''\in \Delta$.
\end{definition}

Let us remark that $Inc(\mathcal A)$ can be easily computed in polynomial time by a saturation algorithm.

\begin{definition}
A pPDA $\mathcal A$ is \emph{increasing} if:
\begin{itemize}
	\item $Inc(\mathcal A)= Q\times \Sigma$;
	\item for all $t=q\xrightarrow{?a!w} q' \in \Delta$ such that $|w|\leq 1$,
	$W_t$ is an integer and we write $W_t \in \nat$;
	\item for every transition $t=q\xrightarrow{?a!bc} q' \in \Delta$,
	$W_t[X]$ is a non constant positive integer polynomial where its single variable $X$ is the height of the stack;
	\item for every transition $q\xrightarrow{?a} q' \in \Delta$, there exists a transition $q\xrightarrow{?a!bc} q'' \in \Delta$.
\end{itemize}

\end{definition}

\tikzset{
queue/.pic={
  \draw[line width=1pt]
    (0,0) -- ++(2.75cm,0) -- ++(0,-1cm) -- ++(-2.75cm,0);
   \foreach \Val in {1,...,#1}
     \draw ([xshift=-\Val*10pt]2.75cm,0) -- ++(0,-1cm); 
  }}

 \begin{figure}[h!]
 \begin{center}
\begin{tikzpicture}[xscale=0.36,yscale=0.39]

\draw[black](10,2) -- (10,-2) -- (-3,-2) -- (-3,2) --cycle;

  \path (-1,0) node[minimum size=0.6cm,draw,circle,inner sep=2pt] (q0) {$q_0$};
   \path (8,0) node[minimum size=0.6cm,draw,circle,inner sep=2pt] (q1) {$q_f$};

   \draw[-latex'] (3.6,2) .. controls +(55:120pt) and +(125:120pt) .. (3.4,2) node[pos=.2,right] {$?x!xy$};

\end{tikzpicture}
\end{center}
\caption{A schematic view of a pPDA modelling a server }

\label{fig:pCS}
\end{figure}

\smallskip\noindent
{\bf Illustration.} Figure~\ref{fig:pCS} is an abstract view of a pPDA modelling a server simultaneously handling multiple requests.
The requests may occur at any time and are stored in the stack. The loop labelled by $?x!xy$ is an abstract representation
of several loops: one per triple $(q,x,y)$ with $q\in Q$, $x\in \Sigma$ and  $y \in \Sigma$. 
Due to the abstract loop, the set of increasing pairs of the $pPDA_{server}$ is equal to $Q\times \Sigma$
and there is always a transition increasing the height of the stack outgoing from any $(q,a)$.
Assume now that 
any transition $t=q\xrightarrow{?a!ab} q$ has weight $W_t[X]=c_t X$ where $c_t\in \nat_{>0}$
and for any other transition, its weight does not depend on the size of the stack.
Then $\mathcal A$ is increasing.
The dependance on $X$ means that due to congestion, the time to execute tasks of the server
increases with the number of requests in the system and thus increase the probability of a new
request that occurs at a constant rate. 
One is interested to compute
the probability to reach $(q_f,\lambda)$ from $(q_0,\lambda)$ representing  the probability that the server 
reaches an idle state having served
all the incoming requests. 

The next proposition establishes that an increasing pPDA generates a divergent Markov chain.

\begin{restatable}{proposition}{incdivpda} 
\label{pro:divPDA}
Let  $\mathcal A$ be an increasing pPDA. Then the Markov chain $\mathcal M_\mathcal A$ is divergent
w.r.t. any $s_0$ and finite $A$. 
\end{restatable}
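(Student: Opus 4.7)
The plan is to apply Proposition~\ref{proposition:sufficient-divergentbis} with the stack-height function $f(q,w) = |w| - L$ for a sufficiently large threshold $L$; this makes $B = \{(q,w) : |w|\leq L\}$ a non-empty finite set with non-empty complement, and the sublevel sets $\{s : f(s)\leq n\}$ are finite since $Q$ and $\Sigma$ are finite. Every transition of a pPDA alters the stack height by at most one in absolute value, so I would take $K=1$ and the bounded-step hypothesis of Proposition~\ref{proposition:sufficient-divergentbis} holds trivially. The only real work is establishing the $d$-step drift inequality for all $s \notin B$.

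To this end I would partition the non-empty-stack configurations by their state/top-of-stack pair. Call $(q,a) \in Q\times\Sigma$ a \emph{push vertex} if there exists a transition $q\xrightarrow{?a!bc}q'' \in \Delta$, and let $I$ denote this set. At a configuration $(q,wa)$ whose pair lies in $I$ (\emph{case A}), at least one enabled push transition has weight a non-constant positive polynomial in the current stack height $X=|wa|$, while every non-push enabled transition has constant integer weight; hence the probability of firing a push in one step is $1-O(1/X)$ and the one-step expected stack increment is at least $1-O(1/X)$. At a configuration whose pair is not in $I$ (\emph{case B}), condition~(4) in the definition of an increasing pPDA precludes any pop transition, so only Type~II size-preserving transitions are enabled: the one-step stack drift is exactly $0$, but the state/top pair performs one step in the h-graph $G_h$ on $Q\times\Sigma$ induced by Type~II transitions.

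Now I would exploit $Inc(\mathcal A)=Q\times\Sigma$: from every vertex of $G_h$ there is an h-path to $I$ of length at most $d := |Q\times\Sigma|$, and along a shortest witness the interior vertices lie outside $I$ by minimality. While following such a witness the walk is in case~B throughout, and at each intermediate step the required Type~II transition is chosen with probability bounded below by a constant $\beta>0$ depending only on $\mathcal A$. Consequently, from any starting configuration, a case-A configuration is reached within the first $d$ steps with probability at least $\delta_0 := \beta^{d-1}>0$. Combining this with the case-A estimate, and using $X_i \in [|w|-d, |w|+d]$ throughout the window, one obtains
\begin{equation*}
\esp\bigl[X_d - X_0 \;\big|\; X_0 = |w|\bigr] \;\geq\; \Bigl(1 - O\!\left(\tfrac{1}{|w|-d}\right)\Bigr)\sum_{i=0}^{d-1} \prob(\text{case A at step } i) \;\geq\; \Bigl(1 - O\!\left(\tfrac{1}{|w|-d}\right)\Bigr)\delta_0,
\end{equation*}
which exceeds $\varepsilon := \delta_0/2$ once $L$ is chosen large enough. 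Proposition~\ref{proposition:sufficient-divergentbis} then yields divergence with respect to any $s_0$ and any finite $A$.

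The hard part will be making the $O(1/X)$ estimates uniform with explicit constants---essentially the ratio between the minimum positive leading coefficient of the push-transition polynomials and the total constant weight of all non-push transitions---so that a single threshold $L$ suppresses the error term uniformly over the $d$-step window, during which the stack may wander in $[|w|-d,|w|+d]$. The remaining points are bookkeeping: choosing $L>d$ to stay outside the empty-stack regime, and deriving $\beta$ from the minimum-to-total ratio of Type~II weights at case-B configurations.
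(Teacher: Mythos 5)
Your proposal is correct and follows essentially the same route as the paper: the Lyapunov function is the stack height, $d=|Q||\Sigma|$, the hypothesis $Inc(\mathcal A)=Q\times\Sigma$ supplies an h-path of length less than $d$ to a pair with an outgoing push whose polynomial weight dominates the constant weights for large stacks, and divergence follows from Proposition~\ref{proposition:sufficient-divergentbis} with $K=1$. The only differences are organizational---you split on push/no-push and accumulate the $d$-step drift by summing one-step conditional drifts, whereas the paper splits on pop/no-pop and runs a simultaneous induction on $d'\leq d$---and your explicit shift $f=|w|-L$ actually handles the low-stack regime slightly more carefully than the paper does.
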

\begin{proof}
Let us denote $B_\mathcal A=\max(W_t\mid t=q\xrightarrow{?a!w} q' \in \Delta\wedge |w|\leq 1)$.
Due to the condition on $W$
for increasing $\mathcal A$, there exists some $n_0$ such that for all $n\geq n_0$,\\ 
\centerline{$2|\Delta|B_\mathcal A\leq \min(W_t(n)\mid t=q\xrightarrow{?a!bc} q'\in \Delta)$}

\smallskip\noindent
Let $f(q,w)=|w|$ and $d=|Q||\Sigma|$.
Let us show by induction on $d'\leq d$ that for all $(q,w)$ such that $|w|\geq n_0+d'$,
\begin{equation}
\label{eq:ppda1}
\sum_{(q',w')}\suiv^{(d')}((q,w),(q',w'))f(q',w')\geq f(q,w)
\end{equation}
and if a pair $(q',a')$ with a transition $q'\xrightarrow{?a'!bc} q''$ is h-reachable from pair $(q,w[|w|])$ in less than $d'$ steps then: 
\begin{equation}
\label{eq:ppda2}
\sum_{(q',w')} \suiv^{(d')}((q,w),(q',w'))f(q',w')\geq f(q,w)+\frac{1}{3B_\mathcal A^{d'}}
\end{equation}
\noindent
The basis case $d'=0$ is immediate. Assume the property is satisfied for some $d'$. There are two cases to be considered.
Let $a=w[|w|]$. Either (1) there is no $q \xrightarrow{?a} q'\in \Delta$ or (2)  there is some $q \xrightarrow{?a} q'\in \Delta$.

\noindent
{\bf Case (1)} Since the size of the stack cannot decrease after a step and all successors of $(q,w)$ satisfy 
Equation~(\ref{eq:ppda1}) w.r.t. $d'$, Equation~(\ref{eq:ppda1}) is satisfied for $(q,w)$ w.r.t. $d'+1$.
Assume there exists  a pair $(q',a')$ with a transition $q'\xrightarrow{?a'!bc} q''$ h-reachable from $(q,a)$ in less than $d'+1$ steps.
Then there is a successor $(q_1,w_1)$ of $(q,w)$ such that $(q',a')$ is h-reachable from $(q_1,w_1[|w_1|])$ in less than $d'$ steps.
Thus Equation~(\ref{eq:ppda2}) is satisfied for $(q_1,w_1)$ w.r.t. $d'$.\\ 
Since $\suiv((q,w),(q_1,w_1))\geq \frac{1}{B_\mathcal A}$,
Equation~(\ref{eq:ppda2}) is satisfied for $(q,w)$ w.r.t. $d'+1$.

\noindent
{\bf Case (2)}  There is some $q \xrightarrow{?a} q''\in \Delta$. 
So  there is some $q \xrightarrow{?a!bc} q'\in \Delta$ implying that there is a successor $(q',w')$ of $(q,w)$
with $|w'|=|w|+1$. Every other successor $(q_1,w_1)$ fulfill $|w_1|\geq |w|-1$. Whatever the successor, they all
satisfy Equation~(\ref{eq:ppda1}) w.r.t. $d'$. Since $n\geq n_0$, $\suiv((q,w),(q',w'))\geq \frac{2}{3}$,
$$\sum_{(q'',w'')} \suiv^{(d'+1)}((q,w),(q',w'))f(q'',w'')\geq f(q,w)+\frac{1}{3}$$
which achieves the proof of this case.

\noindent
By definition of increasing pPDA, from every $(q,a)$ there exists a pair 
$(q',a')$ with a transition $q'\xrightarrow{?a'!bc} q''$ h-reachable from pair $(q,a)$ in less than $d$ steps.
So for all $(q,w)$ such that $|w|\geq n_0+d$,
$$\sum_{(q',w')} \suiv^{(d)}((q,w),(q',w'))f(q',w')\geq f(q,w)+\varepsilon$$
with $\varepsilon=\frac{1}{3B_\mathcal A^{d}}$.

\noindent
Observe that in $d$ steps, the size of the stack can change by at most $d$ and that the number of configurations with size of stacks bounded 
by some $n$ is finite. So the hypotheses of Proposition~\ref{proposition:sufficient-divergentbis} are satisfied and $\mathcal M_\mathcal A$ is divergent. 
\end{proof} 

\bigskip

% !TEX root = main.tex
\section{Conclusion and perspectives}
\label{sec:conclusion}

We have introduced the divergence property of infinite Markov chains and designed two generic CRP-algorithms depending
on the status of the reachability problem. 
Then we have studied the decidability of divergence for pPDA and for pPNs
and different kinds of weights and target sets.  
Finally, we have provided two useful classes of divergent models within pOCS and pPDA.

\smallskip
One ongoing work is to integrate the generic algorithms into the tool Cosmos~\cite{BallariniBDHP15}, instantiated with our illustrating divergent models.
In the future, we plan to study 
the model checking of polynomial pPDA (as a possible extension of~\cite{Esparza06}) and to propose some heuristics to find functions $f_0$ and $f_1$. 
Furthermore, it is worthwhile to exhibit the classes of 
models for which decisiveness and divergence are complementary, such as random walks. 
\bibliographystyle{fundam}
\bibliography{RP}

% !TEX root = main.tex
\section{Appendix}
\label{sec:apprendix}

In this appendix, we prove Theorem~\ref{theorem:sufficient-transient2}. 
To do this, we first recall some definitions related to martingale theory.
Let $(\Omega,\mathcal F)$ and 
$(S,\mathcal E)$ be sets equipped with a $\sigma$-algebra.
A (discrete-time) stochastic process is a sequence $(X_n)_{n\in \nat}$ of random variables
from $(\Omega,\mathcal F)$ to  $(S,\mathcal E)$. A filtration of $\mathcal F$ is
a non decreasing sequence  $(\mathcal F_n)_{n\in \nat}$ of  $\sigma$-algebras included in $\mathcal F$.
A stochastic process $(X_n)_{n\in \nat}$
is adapted to a filtration $(\mathcal F_n)_{n\in \nat}$ if for all $n\in \nat$, $X_n$ is $\mathcal F_n$-measurable.
Let $\tau$ be a random variable from  $(\Omega,\mathcal F)$ to $\nat \cup \{\infty\}$.
$\tau$ is a stopping time if for all $n\in \nat$, the event $\tau=n$ belongs to $\mathcal F_n$.
Let $n\in \nat$, we denote $n\wedge \tau$, the random variable $\min(n,\tau)$.

\begin{definition} Let  $(X_n)_{n\in \nat}$ be a real-valued stochastic process
adapted to a filtration $(\mathcal F_n)_{n\in \nat}$.   $(X_n)_{n\in \nat}$ 
is a submartingale if for all $n\in \nat$, $\esp(X_{n+1}-X_n\mid \mathcal F_n)\geq 0$.
\end{definition}

Below we recall a result about transience of Markov chains based on martingale theory. 
We will use this classic lemma in order to establish Theorem~\ref{theorem:sufficient-transient2}.
\begin{lemma}[Azuma-Hoeffding inequality\cite{RW16}]
\label{lemma:azuma}
Let $(X_k)_{k\in \nat}$ be a submartingale and $\{c_k\}_{k\geq 1}$ be positive reals such that
for all $k\geq 1$, $|X_k -X_{k-1}|\leq c_k$. Let $a> 0$. Then for all $n\geq 0$: 
$$\prob(X_n-X_0\leq -a\mid \mathcal F_0)\leq e^{\frac{-a^2}{2\sum_{k=1}^n c_k^2}}$$
\end{lemma}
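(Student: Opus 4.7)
The plan is to follow the classical exponential-moment / Chernoff strategy for concentration of sums of bounded differences, carefully exploiting the submartingale hypothesis (which only gives a \emph{one-sided} drift) to obtain the \emph{one-sided} tail bound stated. Set $Y_k = X_k - X_{k-1}$, so that by assumption $|Y_k| \leq c_k$, and by the submartingale property $\esp(Y_k \mid \mathcal F_{k-1}) \geq 0$. The event $\{X_n - X_0 \leq -a\}$ is the same as $\{-\sum_{k=1}^n Y_k \geq a\}$, and I plan to control its probability, for an arbitrary parameter $\lambda > 0$ to be tuned later, by applying the conditional Markov inequality to $e^{-\lambda(X_n - X_0)}$:
\[
\prob(X_n - X_0 \leq -a \mid \mathcal F_0) \leq e^{-\lambda a}\,\esp\bigl(e^{-\lambda(X_n - X_0)} \bigm| \mathcal F_0\bigr).
\]

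The core step is then to bound the conditional moment generating function $\esp(e^{-\lambda Y_k} \mid \mathcal F_{k-1})$. Write $\tilde Y_k = Y_k - \esp(Y_k \mid \mathcal F_{k-1})$. By construction $\esp(\tilde Y_k \mid \mathcal F_{k-1})=0$, and since $|Y_k|\leq c_k$ and $|\esp(Y_k \mid \mathcal F_{k-1})|\leq c_k$, the random variable $\tilde Y_k$ is conditionally almost surely contained in an interval of length at most $2c_k$. Applying (the conditional form of) Hoeffding's lemma on moment generating functions of centred bounded variables yields
\[
\esp\bigl(e^{-\lambda \tilde Y_k} \bigm| \mathcal F_{k-1}\bigr) \leq e^{\lambda^2 c_k^2 / 2}.
\]
Multiplying by $e^{-\lambda \esp(Y_k \mid \mathcal F_{k-1})}$, and using the submartingale hypothesis $\esp(Y_k \mid \mathcal F_{k-1})\geq 0$ together with $\lambda>0$ to conclude $e^{-\lambda \esp(Y_k \mid \mathcal F_{k-1})} \leq 1$, I obtain the pivotal bound $\esp(e^{-\lambda Y_k} \mid \mathcal F_{k-1}) \leq e^{\lambda^2 c_k^2 / 2}$. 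This is precisely the place where the one-sidedness matters: the submartingale property only kills the $e^{-\lambda \esp(Y_k \mid \mathcal F_{k-1})}$ factor for $\lambda>0$, which is why only the lower tail can be controlled.

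Next I iterate by conditioning from the outside in: using the tower property and the $\mathcal F_{n-1}$-measurability of $X_0,\dots,X_{n-1}$,
\[
\esp\bigl(e^{-\lambda(X_n - X_0)} \bigm| \mathcal F_0\bigr)
= \esp\bigl(e^{-\lambda(X_{n-1}-X_0)}\esp(e^{-\lambda Y_n}\mid \mathcal F_{n-1}) \bigm| \mathcal F_0\bigr),
\]
and inserting the bound above then repeating induces
\[
\esp\bigl(e^{-\lambda(X_n - X_0)} \bigm| \mathcal F_0\bigr) \leq \exp\Bigl(\tfrac{\lambda^2}{2}\textstyle\sum_{k=1}^n c_k^2\Bigr).
\]
Combining with the Chernoff step gives $\prob(X_n - X_0 \leq -a \mid \mathcal F_0) \leq \exp\bigl(-\lambda a + \tfrac{\lambda^2}{2}\sum_{k=1}^n c_k^2\bigr)$ for every $\lambda > 0$. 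Optimising the exponent in $\lambda$ by taking $\lambda^\star = a / \sum_{k=1}^n c_k^2$ yields the announced bound $e^{-a^2 / (2\sum_{k=1}^n c_k^2)}$.

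I expect the main technical obstacle to be the rigorous justification of the conditional Hoeffding inequality for $\tilde Y_k$ given $\mathcal F_{k-1}$; classically one proves it by bounding $e^{\lambda x}$ on $[-c,c]$ by its secant line and invoking convexity, and here that argument must be done \emph{pointwise in $\omega$} after conditioning, which is routine but requires a measurability check. Everything else reduces to straightforward manipulation of conditional expectations and a one-dimensional optimisation.
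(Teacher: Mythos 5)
Your proof is correct, but note that the paper does not prove this lemma at all: it is imported as a black-box citation from \cite{RW16}, so there is no in-paper argument to compare against. What you have written is the standard one-sided Azuma--Hoeffding argument (conditional Chernoff bound, conditional Hoeffding lemma on the centred increments, tower-property iteration, optimisation in $\lambda$), and every step checks out; in particular you correctly isolate the only place the submartingale hypothesis is used, namely that $e^{-\lambda\,\esp(Y_k\mid\mathcal F_{k-1})}\leq 1$ for $\lambda>0$, which is exactly why only the lower tail is controlled. Two small remarks. First, the measurability worry you flag at the end can be dispensed with entirely by skipping the centering: for $y\in[-c_k,c_k]$ bound $e^{-\lambda y}$ by its secant line, $e^{-\lambda y}\leq \frac{c_k-y}{2c_k}e^{\lambda c_k}+\frac{c_k+y}{2c_k}e^{-\lambda c_k}$, and take conditional expectations using only linearity and positivity; the submartingale property then gives $\esp(e^{-\lambda Y_k}\mid\mathcal F_{k-1})\leq\cosh(\lambda c_k)\leq e^{\lambda^2c_k^2/2}$ directly, with no regular conditional distributions needed. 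Second, the case $n=0$ of the statement is degenerate (the event is empty since $a>0$, and the right-hand side is interpreted as $0$), which your argument implicitly covers but is worth a half-sentence.
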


\sufficienttransientbis*
\begin{proof}
%[of Theorem~\ref{theorem:sufficient-transient2}]\\
Pick some $s_0\in  S\setminus B$. Due to Equation~(\ref{eq:div3}),
there exists $s_1$ with $f(s_1)\geq f(s_0)+\varepsilon$ and by induction for all $n\in \nat$,
there exists $s_n$  with $f(s_n)\geq f(s_0)+n\varepsilon$. Thus $\lim_{n\rightarrow \infty} f(s_n)=\infty$.
Assume that the inequality of the conclusion holds, there exists some $n_0$ with  
$\prob_{\mathcal M,{s_{n_0}}}({\bf F} B)<1$. If $\mathcal M$ is irreducible, this establishes transience
of $\mathcal M$.

\noindent
Let us establish this inequality.
Let $a=f(X_0)\geq 2$. Let $\tau$ be the stopping time  associated with the entry in $B$.
Let $Y_n=f(X_n)-n\varepsilon$. Equation~(\ref{eq:div3}) implies that 
$Y_{n\wedge \tau}$ is a submartingale. Furthermore Equation~(\ref{eq:div3})  implies
that:

\centerline{$Y_{n\wedge \tau}\leq a+nK$}

\noindent
and so 
$|Y_{n+1\wedge \tau}-Y_{n\wedge \tau}|\leq \varepsilon+K+K'(a+nK)^\alpha$.

\smallskip\noindent
Observe that:
\begin{align*}
\prob(\tau<\infty)&=&\prob(\bigcup_{n\geq 1} f(X_{n\wedge \tau})-f(X_0)\leq -a)\\
&=&\prob(\bigcup_{n\geq 1} Y_{n\wedge \tau}-Y_0\leq -a-n\varepsilon)\\
&\leq&  \sum_{n\geq 1} \prob(Y_{n\wedge \tau}-Y_0\leq -a-n\varepsilon)
\end{align*}
Applying Lemma~\ref{lemma:azuma}, one gets:
\begin{align*}
&\prob(\tau<\infty)\\
\leq&  \sum_{n\geq 1}e^{-\frac{(a+n\varepsilon)^2}{2\sum_{k=1}^n (\varepsilon+K+K'(a+kK)^\alpha)^2}}\\
= & \sum_{n\geq 1}e^{-\frac{(a+n\varepsilon)^2}{2n(\varepsilon+K)^2+ \sum_{k=1}^n 4(\varepsilon+K)K'(a+kK)^\alpha+ 2K'^{2}\sum_{k=1}^n (a+kK)^{2\alpha}}}\\
\end{align*}
Observe that  for $u \geq 0$ and $\theta,v>0$,\\ $\sum_{k=1}^n (u+vk)^\theta\leq \int_{1}^{n+1} (u+vx)^\theta dx \leq \frac{(u+v(n+1))^{1+\theta}}{v(1+\theta)}$.\\
Thus:
\begin{align*}
&\prob(\tau<\infty)\\
 \leq&\sum_{n\geq 1}e^{-\frac{(a+n\varepsilon)^2}{2n(\varepsilon+K)^2+ \frac{4K'(\varepsilon+K)}{K(1+\alpha)}(a+(n+1)K)^{1+\alpha}+ \frac{2K'^{2}}{K(1+2\alpha)}(a+(n+1)K)^{1+2\alpha}}}\\
\end{align*}
Since $n\leq \frac{a+n\varepsilon}{\varepsilon}$ and $a+(n+1)K\leq \frac{K}{\varepsilon}(a+n\varepsilon)$,
\begin{footnotesize}
\begin{align*}
&\prob(\tau<\infty)\\
\leq&\sum_{n\geq 1} 
e^{-\frac{(a+n\varepsilon)^2}{\frac{2(\varepsilon+K)^2(a+n\varepsilon)}{\varepsilon}+ \frac{4K'(\varepsilon+K)K^\alpha(a+n\varepsilon)^{1+\alpha}}{\varepsilon^{1+\alpha}(1+\alpha)}
+ \frac{2K'^2K^{2\alpha}(1+2\alpha)(a+n\varepsilon)^{1+2\alpha}}{\varepsilon^{1+2\alpha}}}}\\
\end{align*}
\end{footnotesize}
After dividing  numerator and  denominator by $(a+n\varepsilon)^2$,
\begin{align*}
&\prob(\tau<\infty)\\
\leq& \sum_{n\geq 1}e^{-\frac{1}{\frac{2(\varepsilon+K)}{\varepsilon(a+n\varepsilon)}+ \frac{4K'(\varepsilon+K)K^\alpha}{\varepsilon^{1+\alpha}(1+\alpha)(a+n\varepsilon)^{1-\alpha}}
+ \frac{2K'^2K^{2\alpha}(1+2\alpha)}{\varepsilon^{1+2\alpha}(a+n\varepsilon)^{1-2\alpha}}}}\\
\end{align*}
Since $a+n\varepsilon\geq 1$,
\begin{align*}
&\prob(\tau<\infty)\\
\leq& \sum_{n\geq 1}e^{-\frac{1}{\left(\frac{2(\varepsilon+K)}{\varepsilon}+ \frac{4K'(\varepsilon+K)K^\alpha}{\varepsilon^{1+\alpha}(1+\alpha)}
+ \frac{2K'^2 K^{2\alpha}}{\varepsilon^{1+2\alpha}}(1+2\alpha)\right)(a+n\varepsilon)^{2\alpha-1}}}\\
=& \sum_{n\geq 1}e^{-\frac{(a+n\varepsilon)^{1-2\alpha}}{\left(\frac{2(\varepsilon+K)K^\alpha}{\varepsilon}+ \frac{4K'(\varepsilon+K)K^\alpha}{\varepsilon^{1+\alpha}(1+\alpha)}
+ \frac{2K'^2K^{2\alpha}}{\varepsilon^{1+2\alpha}}(1+2\alpha)\right)}}\\
\end{align*}

\smallskip\noindent
So: 
\begin{align*}
\prob(\tau<\infty)&\leq& \sum_{n\geq 1} \beta^{(a+n\varepsilon)^{1-2\alpha}}\\
&\leq& \sum_{n\geq 1} \beta^{a^{1-2\alpha}+(n\varepsilon)^{1-2\alpha}}\\
&=& \left(\sum_{n\geq 1} \beta^{(n\varepsilon)^{1-2\alpha}}\right) \beta^{a^{1-2\alpha}}\\
&=& \left(\sum_{n\geq 1} \gamma^{n^{1-2\alpha}}\right) \beta^{a^{1-2\alpha}}
\end{align*}
\end{proof}

\end{document}